\title{Interaction Graphs: Graphings}
\author[ts]{Thomas Seiller\corref{cor1}\fnref{fn1}} \ead{thomas.seiller@ihes.fr}
\address[ts]{Institut des Hautes \'{E}tudes Scientifiques (IH\'{E}S),\\ Le Bois-Marie, 35 route de Chartres, 91440 Bures-sur-Yvette, France}
\renewcommand{\lg}{\textnormal{lg}}
\newcommand{\preservesmeasurable}{measurable-preserving\xspace}
\renewcommand{\plugmes}[1][]{\plug}
\renewcommand{\cyclesmes}[2][]{\cycles[\ttrm{#1}]{#2}}
\renewcommand{\pathsmes}[1]{\enpaths{#1}}
\renewcommand{\catmll}{$\mathbb{G}\mathrm{raph}_{MLL}$\xspace} 
\renewcommand{\concat}{$\mathbb{C}\mathrm{ond}$\xspace} 
\renewcommand{\behcat}{$\mathbb{B}\mathrm{ehav}$\xspace} 
\begin{document}

\begin{abstract}
In two previous papers \cite{seiller-goim, seiller-goia}, we exposed a combinatorial approach to the program of Geometry of Interaction, a program initiated by Jean-Yves Girard \cite{towards}. The strength of our approach lies in the fact that we interpret proofs by simpler structures -- graphs -- than Girard's constructions, while generalising the latter since they can be recovered as special cases of our setting. This third paper extends this approach by considering a generalisation of graphs named \emph{graphings}, which is in some way a \emph{geometric realisation} of a graph on a measure space. This very general framework leads to a number of new models of multiplicative-additive linear logic which generalise Girard's geometry of interaction models and opens several new lines of research. As an example, we exhibit a family of such models which account for second-order quantification without suffering the same limitations as Girard's models.
\end{abstract}

\maketitle

\renewcommand{\sectionautorefname}{Section}
\renewcommand{\subsectionautorefname}{Section}


\section{Introduction}

\subsection{Context}


\paragraph{Geometry of Interaction} This research program was introduced by Girard \cite{multiplicatives,towards} after his discovery of linear logic \cite{ll}. In a first approximation, it aims at defining a semantics of proofs that accounts for the dynamics of cut-elimination. Namely, the geometry of interaction models differ from usual (denotational) semantics in that the interpretation of a proof $\pi$ and its normal form $\rho$ are not equal, but one has a way of computing the interpretation of the normal form $\rho$ from the interpretation of the proof $\pi$ (illustrated in \autoref{denotgoi}). As a consequence, a geometry of interaction models not only proofs -- programs -- but also their normalization -- their execution. This semantical counterpart to the cut-elimination procedure was called the \emph{execution formula} by Girard in his first papers about geometry of interaction \cite{goi1,goi2,goi3}, and it is a way of computing the solution to the so-called \emph{feedback equation}. This equation turned out to have a more general solution \cite{feedback}, which lead Girard to the definition of a geometry of interaction in the hyperfinite factor \cite{goi5}. 

\begin{figure}
\centering
\subfigure[Denotational Semantics]{
\centering
\begin{tikzpicture}[x=1.5cm,y=1.5cm]
	\node (A) at (0,0) {$\pi$};
	\node (B) at (2,0) {$\Int{\pi}{}$};
	\node (C) at (0,-1.5) {$\rho$};
	\node (D) at (2,-1.5) {$\Int{\rho}{}$};
	
	\draw[->] (A) -- (B) node [midway,above] {$\Int{\cdot}{}$};
	\draw[->] (C) -- (D) node [midway,above] {$\Int{\cdot}{}$};`
	\draw[->] (A) -- (C) node [midway,above,sloped] {\small{cut}} node [midway,below,sloped] {\small{elimination}};
	\draw[red,double] (B) -- (D) {};

\end{tikzpicture}
}
\subfigure[Geometry of Interaction]{
\centering
\begin{tikzpicture}[x=1.5cm,y=1.5cm]
	\node (A) at (0,0) {$\pi$};
	\node (B) at (2,0) {$\Int{\pi}{}$};
	\node (C) at (0,-1.5) {$\rho$};
	\node (D) at (2,-1.5) {$\Int{\rho}{}$};
	
	\draw[->] (A) -- (B) node [midway,above] {$\Int{\cdot}{}$};
	\draw[->] (C) -- (D) node [midway,above] {$\Int{\cdot}{}$};`
	\draw[->] (A) -- (C) node [midway,above,sloped] {\small{cut}} node [midway,below,sloped] {\small{elimination}};
	\draw[->,red] (B) -- (D) node [midway,above,sloped] {$\Ex(\cdot)$};

\end{tikzpicture}
}
\caption{Denotational Semantics vs Geometry of Interaction}\label{denotgoi}
\end{figure}
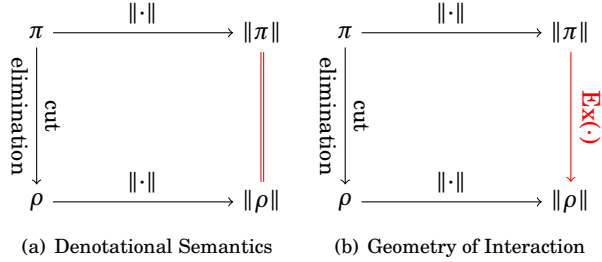

Geometry of Interaction, however, is not only about the interpretation of proofs and their dynamics, but also about reconstructing logic around this semantical counterpart to the cut-elimination procedure. This means that logic arises from the dynamics and interactions of proofs -- programs --, as a syntactical description of the possible behaviors of proofs -- programs. This aspect of the geometry of interaction program has been less studied than the proof interpretation part.

We must also point out that geometry of interaction has been successful in providing tools for the study of computational complexity. The fact that it models the execution of programs explains that it is well suited for the study of complexity classes in time \cite{baillotpedicini,lago}, as well as in space \cite{seiller-lsp,seiller-conl}. It was also used to explain \cite{AbadiGonthierLevy92b} Lamping's optimal reduction of lambda-calculus \cite{Lamping90}.

\paragraph{Interaction Graphs} They were first introduced \cite{seiller-goim} to define a combinatorial approach to Girard's geometry of interaction in the hyperfinite factor \cite{goi5}. The main idea was that the execution formula -- the counterpart of the cut-elimination procedure -- can be computed as the set of alternating paths between graphs, and that the measurement of interaction defined by Girard using the Fuglede-Kadison determinant \cite{FKdet} can be computed as a measurement of a set of cycles. 

The setting was then extended to deal with additive connectives \cite{seiller-goia}, showing by the way that the constructions were a combinatorial approach not only to Girard's hyperfinite GoI construction but also to all the earlier constructions \cite{multiplicatives,goi1,goi2,goi3}. This result could be obtained by unveiling a single geometrical property, which we called the \emph{trefoil property}, upon which all the constructions of geometry of interaction introduced by Girard are founded. This property, which can be understood as a sort of associativity, suggests that computation -- as modeled by geometry of interaction -- is closely related to algebraic topology. 

This paper takes another direction though: based on ideas that appeared in the author's phd thesis \cite{seiller-phd}, it extends the setting of graphs by considering a generalisation of graphs named \emph{graphings}, which is in some way a \emph{geometric realisation} of a graph. This very general framework leads to a number of new models of multiplicative-additive linear logic which generalise Girard's geometry of interaction models and opens several new lines of research. As an example, we exhibit a family of such models which account for second-order quantification without suffering the same limitations as Girard's models.

\subsection{Outline}

We introduce in this paper a family of models which generalises and axiomatizes the notion a GoI model. This systematic approach is obtained by extending previous work \cite{seiller-goim,seiller-goia}. While these previous constructions built models in which the objects under study were directed weighted graphs, we here consider a measure-theoretic generalisation of such graphs named \emph{graphings}. The resulting construction yields a very rich hierarchy of models parametrized by two monoids: the \emph{weight monoid} and the so-called \emph{microcosm}. 

A weight monoid is nothing more than a monoid which elements will be used to give weights to edges of the graphs. A microcosm, on the other hand, is a monoid of measurable maps, i.e.\ it is a subset of $\mathcal{M}(X)$, the set of non-singular \preservesmeasurable transformations from a measure space $(X,\mathcal{B},\mu)$ to itself, which is closed under composition and contains the identity transformation. 

Once chosen a weight monoid $\Omega$ and a microcosm $\mathfrak{m}$, we define the notion of \emph{$\Omega$-weighted graphing in $\mathfrak{m}$}. A $\Omega$-weighted graphing is simply a directed graph $F$ whose edges are weighted by elements in $\Omega$, whose vertices are measurable subsets of the measure space $(X,\mathcal{B})$, and whose edges are \emph{realised} by elements of $\mathfrak{m}$, i.e.\ for each edge $e$ there exists an element $\phi_{e}$ in $\mathfrak{m}$ such that $\phi_{e}(s(e))=t(e)$, where $s,t$ denote the source and target maps. For convenience, we use in this paper a different -- thus less \enquote{graph-like}, but equivalent definition of graphings. The reader should however understand that graphings generalise graphs in that a graphing over a discrete measure space $\measured{X}$ is nothing more than a (possibly infinite) graph whose set of vertices is a subset of $\measured{X}$.

The main result of this paper is then expressed as follows.
\begin{theorem}\label{mainthm}
Let $\Omega$ be a monoid and $\mathfrak{m}$ a microcosm. There exists a family of GoI model of multiplicative-additive linear logic (MALL) whose objects are $\Omega$-weighted graphings in $\mathfrak{m}$. 
\end{theorem}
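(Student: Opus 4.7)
The plan is to follow the blueprint established in the previous papers \cite{seiller-goim,seiller-goia}, where GoI models were built from weighted directed graphs once a single geometric property -- the \emph{trefoil property} -- was verified. The strategy here is to lift that construction from ordinary graphs to graphings: define the appropriate notions of execution and measurement in the graphing setting, verify the trefoil property in the measure-theoretic framework, and then appeal to the abstract scheme of \cite{seiller-goia} to produce the announced family of MALL models parametrised by $\Omega$ and $\mathfrak{m}$.

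First, I would fix the combinatorial data. Given a weight monoid $\Omega$ and a microcosm $\mathfrak{m}$, an $\Omega$-weighted graphing in $\mathfrak{m}$ consists of edges each equipped with a measurable realiser $\phi_{e}\in\mathfrak{m}$ whose domain and codomain are measurable subsets of the underlying space $(X,\mathcal{B},\mu)$, together with a weight in $\Omega$. The notion of \emph{alternating path} between two graphings $F$ and $G$ is then obtained by composing realisers of edges drawn alternately from $F$ and $G$, the domain of the next edge being required to meet the image of the previous one -- in the measure-theoretic setting this becomes an intersection of measurable sets, which is itself measurable. The \emph{execution} of $F$ and $G$ is defined as the graphing whose edges are the maximal alternating paths closing on the output interface, carrying the product of the weights of the traversed edges. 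A \emph{cycle} is a closed alternating path, and its measurement is obtained by integrating a function of its weight against the measure of the set of points that actually traverse it. This is where $\mu$ enters the picture: where the previous work measured cycles by a combinatorial weight, here one integrates against the measure on $X$.

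With these notions in place, the central technical step is to verify the trefoil property. This asserts, roughly, that for three graphings $F,G,H$ the measurement of cycles is independent of how one iteratively executes them. In the purely combinatorial case this was established via a bijection between alternating cycles in the two groupings. In the graphing case, the same bijection still exists at the level of alternating paths, but one must check that it is measure-preserving, so that the integrated measurements agree. This reduces to the fact that realisers are \preservesmeasurable and that the microcosm is closed under composition, so compositions of realisers arising along alternating paths remain well-behaved measure-theoretically. I expect this to be the main obstacle: the combinatorial bookkeeping is inherited from the earlier papers, but one must carefully maintain measurability and non-singularity at each stage, control the domains on which realisers are defined up to negligible sets, and ensure that the sums and integrals producing the cycle measurement are well-defined in $\Omega$.

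With the trefoil property established, the abstract construction of \cite{seiller-goia} applies essentially verbatim: one forms \emph{projects} as pairs consisting of a wager in $\Omega$ and a graphing, defines orthogonality via the measurement of cycles equalling a designated absorbing element, and then defines \emph{conducts} as bi-orthogonally closed sets of projects. The multiplicative connectives arise from disjoint sums of the underlying carriers, the additive connectives from direct sums indexed by a location set, and soundness for MALL follows from the same proof-theoretic argument as in the graph case: cut-elimination steps correspond to iterated execution, whose invariance is guaranteed by the trefoil property. Varying $\Omega$ and $\mathfrak{m}$ then yields the announced family of models, completing the proof.
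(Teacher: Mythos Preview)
Your high-level strategy --- define execution and measurement on graphings, verify associativity and the trefoil property, then invoke the abstract construction of \cite{seiller-goia} --- is exactly the paper's. But you misidentify where the technical work lies, and there is a genuine gap.

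You locate the main obstacle in showing that the combinatorial bijection underlying the trefoil property is ``measure-preserving''. In the paper, once the right abstract framework is in place, the trefoil property is proved almost exactly as in the graph case: one carves $F,G,H$ along the relevant intersections so that every edge has its source and target entirely inside or outside the cut, and then the combinatorial bijection from \cite{seiller-goia} goes through verbatim. The real difficulty is elsewhere, and your proposal does not touch it: one must quotient graphings by a notion of \emph{refinement} (an edge may be split into countably many sub-edges over a measurable partition of its source), and the measurement must be \emph{refinement-invariant} for the carving step above to be legitimate. The paper axiomatises this via \emph{circuit-quantifying maps}: functions on cycles that are (i) constant on equivalence classes under cyclic permutation --- non-trivial here, since two representatives $\pi_{1},\pi_{2}$ of the same circuit have in general different realisers $\phi_{\pi_{1}}\neq\phi_{\pi_{2}}$ --- and (ii) refinement-invariant in a precise sense accounting for the fact that a single cycle can fragment into several cycles, or multiply its length, under refinement. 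Establishing that such maps \emph{exist} is the substantial part of the argument and requires additional hypotheses on the underlying space (second-countable Hausdorff with a Radon measure, so that fixed-point sets of measurable maps are measurable) together with a specific integral formula built from the return-time function $\rho_{\phi_{\pi}}(x)=\inf\{n\geqslant 1 : \phi_{\pi}^{n}(x)=x\}$. None of this machinery appears in your sketch, and without it the reduction to the combinatorial trefoil argument does not go through.

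Two smaller corrections to your final paragraph: the wager in a project is a real number, not an element of $\Omega$; and orthogonality is defined by the scalar $\sca{a}{b}$ being \emph{different} from $0$ and $\infty$, not equal to an absorbing element.
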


Let us now explain how the proof of this result will be obtained. 

Previous work \cite{seiller-goia} studied in details the construction of models of \MALL where proofs are interpreted as weighted directed graphs. This construction, however, can be performed as long as one has the right mathematical structure, and did not dwell on the fact that one considered directed weighted graphs.
More precisely, the cited previous work \cite{seiller-goia} describes the model structure in an axiomatic way, based on a small number of operations and properties. One mainly needs two operations -- execution and measurement, and two properties -- associativity of the execution and the trefoil property. Those are recalled in \autoref{sec_recall}, together with a sketch of the model construction.

Now, the proof of our main theorem will built on this axiomatic construction. The main part of the proof will therefore consist in showing that the set of graphings can be endowed with the right structure. The proof of this will be decomposed in three steps:
\begin{itemize}[noitemsep,nolistsep]
\item define the \emph{execution} $F\plugmes G$ between $F$ and $G$, and show that it is associative;
\item define the measurement $\meas{F,G}$ of the interaction between $F$ and $G$;
\item show that these two notions satisfy the so-called \emph{trefoil property}.
\end{itemize}

The structure of this paper follows this three steps decomposition. \autoref{sec_execution} is therefore concerned with the definition of the execution and the proof that it is indeed associative (\autoref{thm_associativity}). We then define in \autoref{sec_measurement} abstractly a notion of measurement and prove, under certain conditions, that the trefoil property holds (\autoref{thm_trefoilppty}). These conditions being quite involved, we then show in \autoref{sec_trefoil} that in most reasonable cases, one can define whole families of measurement satisfying them (\autoref{thm_trefoilspace}).


We then give some examples of applications of this result, showing how all of Girard's frameworks, either based on operator algebras \cite{goi1,goi2,goi5} or on the more syntactical \enquote{unification algebra} \cite{goi3,goi6,goi6light}, can be understood as special cases of our construction.

\autoref{sec_mall2}, finally, studies a simple example of GoI models obtained from our construction. We show how this framework allows to interpret multiplicative-linear logic with second-order quantification. This result solves an important issue, due to locativity, that arose in Girard's hyperfinite geometry of interaction \cite{goi5}, and that we discuss now.

In Girard's hyperfinite geometry of interaction, proofs (resp. formulas) are interpreted on a given location, i.e.\ as operators (resp. sets of operators) acting on specific subspace explicited by the choice of a projection $p$ -- the carrier. Quantification is thus defined only over all formulas with a given carrier $p$; to interpret correctly a cut between quantified formulas with respective carriers $p,q$ it is then necessary to have a partial isometry between $p$ and $q$. This is where Girard's setting is confronted with a problem: such a partial isometry does not always exists in a type {II} (von Neumann) algebra. Thus the model obtained in type {II} algebras does not interpret second-order quantification correctly. To solve this, one could think about extending the algebra by adding all the missing partial isometries; the resulting von Neumann algebra, however, is no longer a type {II} algebra and therefore no determinant can be defined on it. But the Fuglede-Kadison determinant (i.e.\ the detemrinant in type {II} factors) was essential to Girard's construction. As a consequence, there seems to be no solution to this problem. We will explain in the last section, however, how our models based on graphings actually solve this issue. The main intuition on why we are able to solve it is that the measurement defined in \autoref{sec_trefoil} provides, in the adequate setting, a generalisation of the Fuglede-Kadison determinant to non-type-{II} algebras.

\subsection{Motivations and Perspectives}

This result is very technical, and we believe that the work needed to obtain it should be motivated. The importance of this result lies in its great generality and the new lines of research it opens. We believe that the notion of graphing is an excellent mathematical abstraction of the notion of program. First, any pure lambda-term can be represented as a graphing since Girard's GoI model \cite{goi2} is a particular case of our constructions. But one can represent a lot more. In following work, we will show how to represent quantum computation with the same objects. Results of Mazza \cite{mazza} and de Falco \cite{deFalco} leads us to believe that concurrent models of computations, such as pi-calculus can be represented as graphings. In the long-term, we would also like to obtain such a representation for other, more mathematical, computational paradigms, such as cellular automata.

We therefore consider that this framework offers a perfect mathematical abstraction of the notion of program, one that is machine-independent, which captures many different computational paradigms, and which allows for a fine control on computational principles. Indeed, while linear logic introduced a syntax in which one could talk about resource-usage, the approach taken here goes a step further and introduces even more subtle distinctions on the computational principles allowed in the model. These distinctions are made by considering the hierarchies of weight monoid and microcosms. This can lead to some interesting results in computational complexity on one hand, and to interesting models of quantum computation on the other. We now detail these two motivating perspectives.

\subsubsection{Fine-Grained Implicit Computational Complexity}

Firstly, we believe that the results obtained in previous work with C. Aubert characterizing the classes \textbf{L} and \textbf{coNL} can be improved, extended and linked with logical constructions. The following summarizes an approach to computational complexity based on the results of this paper. The interested reader can find more details in a recent perspective paper \cite{seiller-towards}.

In the previously described models of multiplicative-additive linear logic, one can define the type of binary lists $\cond{Nat}_{2}$ in a quite natural fashion (the representation of lists is thoroughly explained in previous work on complexity \cite{seiller-phd,seiller-conl,seiller-lsp}). Moreover, in a number of cases one will be able to define exponential connectives in the model and therefore consider the type $\cond{\oc Nat_{2}\multimap Bool}$. Elements of this type, when applied to a element $\oc N_{n}$ of $\oc\cond{Nat_{2}}$, yields one of the distinguished elements $\tt true\rm$ or $\tt false\rm$. Such an element $F$ thus \emph{computes} the language $\mathcal{L}(F)=\{n\in \naturalN~|~F\plug \oc N_{n}=\tt true\rm\}$. As a consequence, we can study the set of languages computed by all elements in the type $\oc\cond{Nat_{2}\multimap Bool}$. By modifying the microcosm, one then modifies the expressivity of the model.

The intuition is that a microcosm $\mathfrak{m}$ represents the set of computational principles available to write programs in the model. Considering a bigger microcosm $\mathfrak{n}\supsetneq\mathfrak{m}$ thus corresponds to extending the set of principles at disposal, consequently increasing expressivity. The set of languages characterized by the type $\oc\cond{Nat}_{2}\multimap\cond{Bool}$ becomes larger and larger as we consider extensions of the microcosms. 
We can then work on this remark, and use intuitions gained from earlier work \cite{seiller-conl,seiller-lsp}. This leads to a perfect correspondence between a hierarchy of monoids on the measure space $\integerN\times[0,1]^{\naturalN}$ and a hierarchy of classes of languages in between regular languages and logarithmic space predicates (both included) \cite{seiller-goic}.

\subsubsection{Quantum Computing and Unitary Bases}

The second interest in the hierarchy of models thus obtained concerns quantum computation. Indeed, quantum computation can be represented in a very natural way as graphings. Using the weight monoid to include complex coefficients, one can represent qbits in a natural way as graphings corresponding to their density matrix. One can then represent unitary operators in a similar way and show that the execution as graphings corresponds to the conjugation of the density matrix by the unitary operator. Using these techniques, and showing that one can construct semantically a tensor product of qbits that allows for entanglement, we can obtain a representation of quantum computation that allows for entanglement of functions. 

The construction just described opens a particularly interesting line of research if one considers the fact that the obtained model lives in the hierarchy of models described by the microcosms. Indeed, contrarily to theoretical quantum computation which allows the use of unitary gates for any unitary operator, a real, physical, quantum computer would allow only for a finite number of already implemented unitary gates. The set of such gates is called a \emph{basis of unitary operators} and satisfied the property that the span, under composition, of this set of operators is dense in the set of all unitaries. In other words, any unitary operator can be approximated by composites of elements of the basis.

Such a restriction is necessary, but there are a number of different choices for the basis. To compare two different bases, one can ask a physicist which unitary gates would be easier to create. This is how bases are compared nowadays: the one which is easiest to construct is considered as better. However, the choice of the basis may have important consequences from a computational and/or logical point of view. Our construction provides an adequate framework to tackle this issue. Indeed, one can restrict the microcosm to allow only unitary gates in a given basis and study the obtained model of computation.

\section{Interaction Graphs: Execution and the Trefoil Property}\label{sec_recall}

We defined in earlier work \cite{seiller-goim,seiller-goia} a graph-theoretical construction where proofs -- or more precisely paraproofs, that is generalised proofs -- are interpreted by finite objects\footnote{Even though the graphs we consider can have an infinite set of edges, linear logic proofs are represented by finite graphs (disjoint unions of transpositions).}. The graphs we considered were directed and weighted, with the weights taken in a monoid $(\Omega,\cdot)$. We briefly expose the main results obtained in these papers, since the constructions in the next section are a far-reaching generalisations of those.

\begin{definition}
A \emph{directed weighted graph} is a tuple $G$, where $V^{G}$ is the set of vertices, $E^{G}$ is the set of edges, $s^{G}$ and $t^{G}$ are two functions from $E^{G}$ to $V^{G}$, the \emph{source} and \emph{target} functions, and $\omega^{G}$ is a function $E^{G} \rightarrow \Omega$.
\end{definition}

The construction is centered around the notion of alternating paths. Given two graphs $F$ and $G$, an alternating path is a path $e_{1}\dots e_{n}$ such that $e_{i}\in E^{F}$ if and only if $e_{i+1}\in E^{G}$. The set of alternating paths will be used to define the interpretation of cut-elimination in the framework, i.e.\ the graph $F\plug G$ -- the \emph{execution of $F$ and $G$} -- is defined as the graph of alternating paths between $F$ and $G$ whose source and target are in the symmetric difference $V^{F}\Delta V^{G}$. The weight of a path is naturally defined as the product of the weights of the edges it contains. One easily verifies that this operation is associative: as long as the three graphs $F,G,H$ satisfy $V^{F}\cap V^{G}\cap V^{H}=\emptyset$, we have:
$$(F\plug G)\plug H=F\plug(G\plug H)$$

As it is usual in mathematics, this notion of paths cannot be considered without the associated notion of cycle: an \emph{alternating cycle} between two graphs $F$ and $G$ is a cycle which is an alternating path $e_{1}e_{2}\dots e_{n}$ such that $e_{1}\in V^{F}$ if and only if $e_{n}\in V^{G}$. For technical reasons, we actually consider the related notion of $1$-circuit, which is a cycle satisfying some technical property.

\begin{definition}\label{defcircuits}
We define the following notions of cycles:
\begin{itemize}[noitemsep,nolistsep]
\item a \emph{cycle} in a graph $F$ is a sequence $\pi=e_{0}\dots e_{n}$ of edges such that for all $i< n$ the source of the edge $e_{i+1}$ coincides with the target of the edge $e_{i}$;
\item a \emph{$1$-cycle} in a graph $F$ is a cycle $\pi$ such that there are no cycle $\rho$ and integer $k>1$ with $\pi=\rho^{k}$, where $\rho^{k}$ denotes the concatenation of $k$ copies of $\rho$;
\item a \emph{circuit} is an equivalence class of cycles for the equivalence relation defined by $e_{0}\dots e_{n}\sim f_{0}\dots f_{n}$ if and only if there exists an integer $k$ such that for all $i$, $e_{i}=f_{j}$ with $j=k+i [n+1]$.
\item a \emph{$1$-circuit} $\rho$ is a circuit which is not a proper power of a smaller circuit, i.e.\ is the equivalence class of a $1$-cycle.
\end{itemize}
\end{definition}

We will denote by $\uncircuits{F,G}$ the set of $1$-circuits. It can be shown that these notions of paths and cycles satisfy a property we call the \emph{trefoil property} which turns out to be fundamental for constructing models of linear logic. This property states the existence of weight-preserving bijections between sets of $1$-circuits:
\begin{equation}
\uncircuits{F\plug G,H}\cup\uncircuits{F,G}\cong\uncircuits{G\plug H,F}\cup\uncircuits{G,H}\cong\uncircuits{H\plug F,G}\cup\uncircuits{H,F}\label{geometrictrefoil}
\end{equation}

In this setting, one can define the multiplicative and additive connectives of Linear Logic. This construction is parametrized by a map from the set $\Omega$ to $\mathbb{R}_{\geqslant 0}\cup\{\infty\}$. We thus obtain not only one but a whole family of models. This parameter is introduced to define the notion of orthogonality in our setting and is used to measure the sets of $1$-circuits. Indeed, given a map $m$ and two graphs $F,G$ we define $\meas{F,G}$ as the sum $\sum_{\pi\in\uncircuits{F,G}} m(\omega(\pi))$, where $\omega(\pi)$ is the weight of the cycle $\pi$.

Let us sketch the construction now. For pedagogical purposes, we will only sketch the construction of multiplicative connectives. The reader interested in the additive construction can deduce it from the construction sketched in \autoref{sec_basicdefs}. For a detailed study, we refer to the author's paper on additives \cite{seiller-goia}.

For technical reasons explained in earlier work \cite{seiller-goim}, one has to work with couples $(a,A)$ -- named \emph{projects} -- of a real number $a$ together with a graph $A$. From the measurement just mentionned, one can define an orthogonality relation between projects $(a,A)$ and $(b,B)$ such that $A,B$ have the same set of vertices as follows:
$$(a,A)\poll{}(b,B)\text{ if and only if }a+b+\meas{A,B}\neq 0,\infty$$
Now, the notion of \emph{project} is meant to interpret proofs, and \emph{orthogonality} somehow interprets negation \cite{seiller-axioms}. From the orthogonality relation, one can define a notion of type which is quite natural and reminiscent of work on classical realisability \cite{krivine1,riba,krivine2}. These objects generalising the notion of type are called \emph{conducts}. A conduct is simply a set of projects which is equal to its bi-orthogonal closure, i.e.\ $\cond{A}=\cond{A}^{\pol\pol}$. Notice that all projects in a given conduct have the same set of vertices, and we can therefore talk about the set of vertices of a conduct.

The connectives are then defined first between projects -- i.e.\ (generalised) proofs. This low-level definition is then lifted to the conducts naturally. The multiplicative conjunction of linear logic, $\otimes$, is for instance defined as follows: if $(a,A)$ and $(b,B)$ are projects whose graphs have disjoint sets of vertices, their tensor $(a,A)\otimes(b,B)$ is defined as $(a+b,A\cup B)$. Now, given two conducts $\cond{A,B}$ defined on disjoint sets of vertices, the tensor on projects is lifted as follows:
$$ \cond{A\otimes B}=\{(a,A)\otimes(b,B)~|~(a,A)\in\cond{A},(b,B)\in\cond{B}\}^{\pol\pol} $$

From any value of $m$, one obtains in this way \cite{seiller-goim,seiller-goia} a $\ast$-autonomous category \catmll{} with $\parr\not\cong\otimes$ and $1\not\cong\bot$, i.e.\ a non-degenerate denotational semantics for Multiplicative Linear Logic (MLL). A consequence of the trefoil property is that this category can be quotiented by an observational equivalence while conserving its structure of $\ast$-autonomous category. 

When considering the full construction, i.e.\ with additive connectives, the categorical model obtained in this way has two layers (see \autoref{catmodels}). The first layer consists in this non-degenerate (i.e.\ $\otimes\neq\parr$ and $\cond{1}\neq\cond{\bot}$) $\ast$-autonomous category \concat obtained as a quotient, hence a denotational model for MLL with units. The second layer is a full subcategory \behcat which does not contain the multiplicative units but is a non-degenerate model (i.e.\ $\otimes\neq\parr$, $\oplus\neq\with$ and $\cond{0}\neq\cond{\top}$) of MALL with additive units that does not satisfy the mix and weakening rules.

\begin{figure}
\centering
\begin{tikzpicture}[x=1.2cm,y=0.8cm]
	\draw[fill,opacity=0.1] (0,0) .. controls  (0,4.5) and (0.5,5) .. (5,5) .. controls (9.5,5) and (10,4.5) .. (10,0) .. controls (10,-4.5) and (9.5,-5) .. (5,-5) .. controls (0.5,-5) and (0,-4.5) .. (0,0) ;
		\node (A) at (2,0) {\begin{tabular}{c}\small{\concat}\\\small{($\ast$-autonomous)}\end{tabular}};
	\draw[fill,opacity=0.2] (4,0) .. controls  (4,2.5) and (4.5,3) .. (6,3) .. controls (7.5,3) and (8,2.5) .. (8,0) .. controls (8,-2.5) and (7.5,-3) .. (6,-3) .. controls (4.5,-3) and (4,-2.5) .. (4,0) ;
		\node (B) at (6,0) {\begin{tabular}{c}\small{\behcat}\\\small{(closed under $\otimes,\multimap,\with,\oplus,(\cdot)^{\pol}$)}\\\small{NO weakening, NO mix}\end{tabular}};
	\node (bot) at (2,-3) {$\bullet_{\bot}$};
	\node (one) at (3,-3) {$\bullet_{\cond{1}}$};
	
	\node (top) at (6,-2) {$\bullet_{\cond{T}}$};
	\node (zero) at (7,-2) {$\bullet_{\cond{0}}$};
\end{tikzpicture}
\caption{Structure of the categorical models}\label{catmodels}
\end{figure}
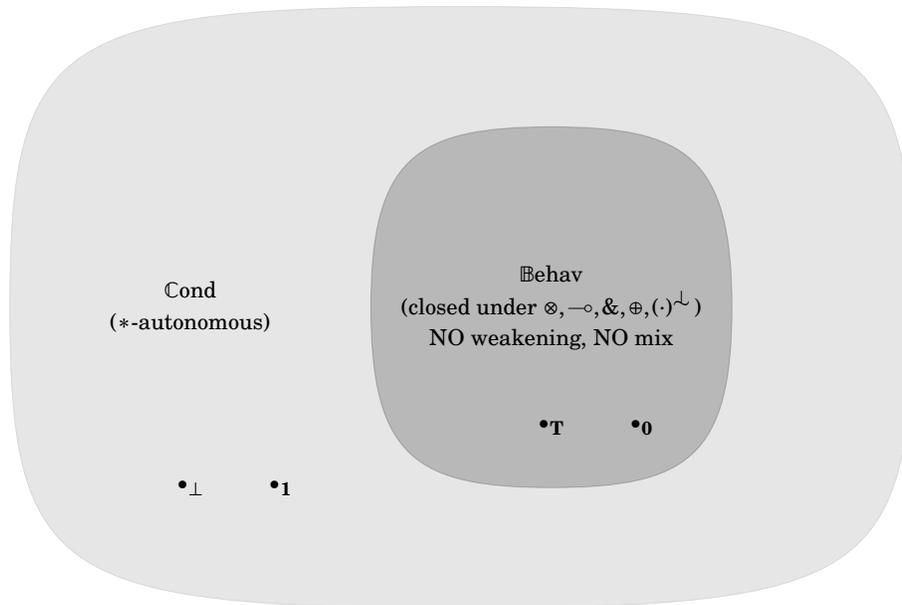

\section{Graphings, Paths and Execution}\label{sec_execution}


We define in this section the notion of graphing. This notion was first considered by Adams \cite{adams}, and later by Levitt and Gaboriau \cite{levitt_graphings,gaboriaucost} in order to study measurable group actions. It generalises in the setting of measure theory the topological notion of pseudo-group \cite{pseudogroup_book} which was introduced by E. Cartan \cite{cartan1,cartan2}.

We will here consider graphings as a generalisation of graph or, more acutely, as a \emph{realiser} of a graph. We will therefore define the notion of path and built upon it a generalisation of the notion of execution mentioned in the previous section. To stay consistent with the spirit of measure-theory, we will first examine the notion of almost-everywhere equality between graphings.

\begin{remark}
The consideration of almost-everywhere equality is not essential for the first parts of the paper. Indeed the trefoil property is, as it was in the setting of graphs, still a purely combinatorial property. One could therefore work without quotienting up to almost-everywhere equality up to the point of providing explicit measurements, i.e.\ up to \autoref{sec_trefoil}. However, our measurements are built from the underlying measure space structure, and then almost-everywhere equality becomes natural.  

This explains why we whose to work from the start with a measure-theoretic approach. A more combinatorial approach may be considered, but we believe that this would make more sense in a topological framework such as the one sketched in \autoref{topographings}.
\end{remark}


\subsection{First Definitions}

The idea is that a graphing is a sort of \enquote{geometric realisation} of a graph: the vertices correspond to measurable subsets of a measure space, and edges correspond to measurable maps\footnote{To be exact, we will consider graphings whose edges are taken in a \emph{microcosm}, that is a subset of all measurable maps which is closed under composition and contains the identity.} from the source subset onto the target subset. Some difficulties arise when one wants to define a tractable notion of graphing. Indeed, a new phenomenon appears when vertices are measurable sets: what should one do when two vertices are neither disjoint or equal, i.e.\ when two vertices are not equal but their intersection is not of null measure? One solution would be to define graphings where vertices are disjoint subsets (i.e.\ their intersection is of null measure), but this makes the definition of execution extremely complex.

Let us consider for instance two graphings with a single edge each, and whose plugging is represented in \autoref{graphagesplugex1}. As suggested by the representation, the measurables sets are subject to the following inclusions: $V_{t}\subset U_{s}$ and $V_{s}\subset U_{t}$. To represent the set of alternating paths whose source and target are subsets of the symmetric difference of the carriers -- the execution of the two graphs -- we would need to decompose each of the measurable sets into a disjoint union of sets, each one corresponding to the source and/or target of a path. In the particular case we show in the figure, this operation is a bit complicated but still tractable: it is sufficient to consider the sets $(\psi\phi)^{-k}(\phi^{-1}(U_{t}-V_{s}))\cap(U_{s}-V_{t})$ as the set of sources. Indeed, those are the subsets of $U_{s}-V_{t}$ containing all points such that $\phi(\psi\phi)^{k}(x)$ is defined and is an element of $U_{t}-V_{s}$, i.e.\ the sets of points that are in the domain of the \enquote{length $k$} alternating path between $\phi$ and $\psi$. However, the operation quickly becomes much more complicated as we add new edges and create cycles. \autoref{graphagesplugex2} represents the case of two graphings with two edges each. Defining the decomposition of the set of vertices induced by the execution is -- already in this case -- very difficult. In particular, since the sets of vertices considered can be infinite (but countable), the number of cycles can be infinite, and the operation is then of an extreme complexity.

As a consequence, we have chosen to work with a different presentation of graphings, where two distinct vertices can have a intersection of strictly positive measure -- they can even be equal. We will now define the notion of graphing taking into account these remarks. The terminology is borrowed from works of Levitt \cite{levitt_graphings} and Gaboriau \cite{gaboriaucost}, in which the underlying notion of graphing (forgetting about the weights) is defined.

\begin{figure}
\centering
\begin{tikzpicture}[x=0.9cm,y=0.9cm]
	\draw[|-|,red] (0,0) -- (3,0) node [very near start, above] {$U_{s}$};
	\draw[|-|,red] (5,0) -- (8,0) node [very near end, above] {$U_{t}$};
	
	\draw[|-|,blue] (1,-0.25) -- (2,-0.25) node [near start, below] {$V_{t}$};
	\draw[|-|,blue] (6,-0.25) -- (7,-0.25) node [near end, below] {$V_{s}$};
	
	\draw[->,red] (1.5,0.25) .. controls (1.5,1.5) and (6.5,1.5) .. (6.5,0.25) node [midway,above] {$\phi$};
	\draw[<-,blue] (1.5,-0.5) .. controls (1.5,-1.5) and (6.5,-1.5) .. (6.5,-0.5) node [midway,below] {$\psi$};
\end{tikzpicture}
\caption{Example of a plugging between graphings}\label{graphagesplugex1}
\end{figure}
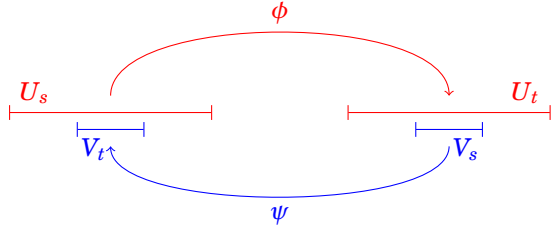

\begin{figure}
\centering
\begin{tikzpicture}[x=0.9cm,y=0.9cm]
	\draw[|-|,red] (0,0) -- (4,0) {};
	\draw[|-|,red] (5,0) -- (9,0) {};
	\draw[|-|,red] (10,0) -- (11,0) {};
	\draw[|-|,red] (7.5,0) -- (8.5,0) {};
	
	\draw[|-|,blue] (1,-0.25) -- (2,-0.25) {};
	\draw[|-|,blue] (6,-0.25) -- (8,-0.25) {};
	\draw[|-|,blue] (8,-0.25) -- (9,-0.25) {};
	\draw[|-|,blue] (10,-0.25) -- (12,-0.25) {};
	
	\draw[->,red] (2,0.25) .. controls (2,1.5) and (7,1.5) .. (7,0.25) {};
	\draw[->,red] (10.5,0.25) .. controls (10.5, 1.5) and (8,1.5) .. (8,0.25) {};
	\draw[<-,blue] (1.5,-0.5) .. controls (1.5,-1.5) and (8.5,-1.5) .. (8.5,-0.5) {};
	\draw[->,blue] (7,-0.5) .. controls (7,-1.5) and (11,-1.5) .. (11,-0.5) {};
\end{tikzpicture}
\caption{Example of a plugging between graphings}\label{graphagesplugex2}
\end{figure}


\begin{definition}
Let $(X,\mathcal{B},\lambda)$ be a measure space. We denote by $\mathcal{M}(X)$ the set of non-singular \preservesmeasurable transformations\footnote{A non-singular transformation $f:X\rightarrow X$ is a measurable map which preserves the sets of null measure, i.e.\ $\lambda(f(A))=0$ if and only if $\lambda(A)=0$. A map $f:X\rightarrow X$ is \preservesmeasurable if it maps every measurable set to a measurable set.} $X\rightarrow X$. A \emph{microcosm}  of the measure space $X$ is a subset $\mathfrak{m}$ of $\mathcal{M}(X)$ which is closed under composition and contains the identity.
\end{definition}

In the following, we will consider a notion of graphing depending on a \emph{weight-monoid} $\Omega$, i.e.\ a monoid $(\Omega,\cdot,1)$ which contains the possible weights of the edges. 

\begin{definition}[Graphings]
Let $\mathfrak{m}$ be a microcosm of a measure space $(X,\mathcal{B},\lambda)$ and $V^{F}$ a measurable subset of $X$. A \emph{$\Omega$-weighted graphing in $\mathfrak{m}$} with carrier $V^{F}$ is a countable family $F=\{(\omega_{e}^{F},\phi_{e}^{F}: S_{e}^{F}\rightarrow T_{e}^{F})\}_{e\in E^{F}}$, where, for all $e\in E^{F}$ (the set of \emph{edges}):
\begin{itemize}[noitemsep,nolistsep]
\item $\omega_{e}^{F}$ is an element of $\Omega$, the \emph{weight} of the edge $e$;
\item $S_{e}^{F}\subset V^{F}$ is a measurable set, the \emph{source} of the edge $e$;
\item $\phi_{e}^{F}$ is the restriction of an element of $\mathfrak{m}$ to $S_{e}^{F}$, the \emph{realiser} of the edge $e$;
\item $T_{e}^{F}=\phi_{e}^{F}(S_{e}^{F})\subset V^{F}$ is a measurable set, the \emph{target} of the edge $e$.
\end{itemize}
Given a graphing $F$, we define its \emph{effective carrier} as the measurable set $\bigcup_{e\in E^{F}} S_{e}^{F}$.
\end{definition}

%
%
%
%

\subsection{Almost-Everywhere Equality}

For the remaining of this section, we consider that we fixed once and for all the weight monoid $\Omega$ and the microcosm $\mathfrak{m}$. We will therefore refer to $\Omega$-weighted graphings in $\mathfrak{m}$ simply as \emph{graphings}.

It is usual, when doing measure theory, to work modulo sets of null measure. Similarly, we will work with graphings modulo almost everywhere equality, a notion that we need to define first. Before giving the definition, we will define the useful notion of empty graphing. An empty graphing will be almost everywhere equal to the graphing without edges.

\begin{definition}[Empty graphings]\label{emptygraphing}
A graphing $F$ is said to be \emph{empty} if its effective carrier is of null measure.
\end{definition}

\begin{definition}[Almost Everywhere Equality]\label{equivalencepp}
Two graphings $F,G$ are \emph{almost everywhere equal} if there exists two empty graphings $0_{F},0_{G}$ and a bijection $\theta: E^{F}\disjun E^{0_{F}}\rightarrow E^{G}\disjun E^{0_{G}}$ such that:
\begin{itemize}[noitemsep,nolistsep]
\item for all $e\in E^{F}\disjun E^{0_{F}}$, $\omega_{e}^{F\cup 0_{F}}= \omega_{\theta(e)}^{G\cup 0_{G}}$;
\item for all $e\in E^{F}\disjun E^{0_{F}}$, $S_{e}^{F\cup 0_{F}}\Delta S_{\theta(e)}^{G\cup 0_{G}}$ is of null measure;
\item for all $e\in E^{F}\disjun E^{0_{F}}$, $T_{e}^{F\cup 0_{F}}\Delta T_{\theta(e)}^{G\cup 0_{G}}$ is of null measure;
\item for all $e\in E^{F}\disjun E^{0_{F}}$, $\phi_{\theta(e)}^{G\disjun 0_{G}}$ and $\phi_{e}^{F\disjun 0_{F}}$ are equal almost everywhere on $S_{\theta(e)}^{G\disjun 0_{G}}\cap S_{e}^{F\disjun 0_{F}}$;
\end{itemize}
\end{definition}

\begin{proposition}
We define the relation $\sim_{a.e.}$ between graphings:
\begin{equation*}
\text{$F\sim_{a.e.} G$ if and only if $F$ and $G$ are almost everywhere equal}
\end{equation*}
This relation is an equivalence relation.
\end{proposition}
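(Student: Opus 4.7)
The plan is to verify the three axioms of an equivalence relation separately. Reflexivity and symmetry are essentially formal; the real content is in transitivity, where one must reconcile the fact that the two hypotheses $F\sim_{a.e.} G$ and $G\sim_{a.e.} H$ witness equality using \emph{different} empty graphings on the $G$ side.

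For reflexivity, take both $0_F$ and the second empty graphing to be the empty family (no edges at all) and let $\theta$ be the identity on $E^F$; every one of the four clauses of Definition~\ref{equivalencepp} is then trivially satisfied. For symmetry, given a witness $(0_F,0_G,\theta)$ for $F\sim_{a.e.} G$, the triple $(0_G,0_F,\theta^{-1})$ is a witness for $G\sim_{a.e.} F$: the weights match because equality in $\Omega$ is symmetric, the source and target conditions use that symmetric difference is commutative, and the realiser clause is symmetric in $e$ and $\theta(e)$ because it is stated on the intersection $S_{\theta(e)}^{G\cup 0_G}\cap S_{e}^{F\cup 0_F}$.

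For transitivity, suppose $(0_F,0_G,\theta)$ witnesses $F\sim_{a.e.}G$ and $(0'_G,0_H,\theta')$ witnesses $G\sim_{a.e.}H$. The idea is to enlarge the ``garbage'' added on each side so that the two bijections can be composed. Concretely, introduce two empty graphings $C$ and $D$ indexed respectively by $E^{0'_G}$ and $E^{0_G}$, with each new edge carrying the same weight as its index and with source and target sets of null measure (for instance the empty set, realiser being the identity of $\mathfrak{m}$). Set $\tilde 0_F=0_F\disjun C$ and $\tilde 0_H=0_H\disjun D$, and extend $\theta$ to a bijection $\tilde\theta:E^F\disjun E^{\tilde 0_F}\to E^G\disjun E^{0_G}\disjun E^{0'_G}$ by mapping $C$ onto $E^{0'_G}$ index-wise; extend $\theta'$ symmetrically to $\tilde{\theta'}$ whose domain is the same set $E^G\disjun E^{0_G}\disjun E^{0'_G}$. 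Composing gives a bijection $\tilde\theta'\circ\tilde\theta:E^F\disjun E^{\tilde 0_F}\to E^H\disjun E^{\tilde 0_H}$.

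It then remains to verify the four clauses for this composed bijection. Weight equality is immediate from composing the weight equalities of $\tilde\theta$ and $\tilde{\theta'}$. For the source (resp.\ target) clause, one uses that the symmetric difference is controlled by the triangle inequality $A\triangle C\subseteq (A\triangle B)\cup(B\triangle C)$, so that nullity of $A\triangle B$ and $B\triangle C$ forces nullity of $A\triangle C$. The main obstacle is the realiser clause: the two hypotheses give almost-everywhere equality of the relevant realisers on $A\cap B$ and on $B\cap C$ respectively, whereas we need the equality on $A\cap C$, where $A=S_e^{F\cup\tilde 0_F}$, $B=S_{\tilde\theta(e)}^{G\cup 0_G\cup 0'_G}$, $C=S_{\tilde\theta'\tilde\theta(e)}^{H\cup\tilde 0_H}$. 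This is handled by the observation that $A\triangle B$ and $B\triangle C$ are null, hence $A\cap C$ and $A\cap B\cap C$ coincide up to a null set; on the triple intersection both almost-everywhere equalities apply and can be chained, producing the required equality $\phi_e^{F\cup\tilde 0_F}=\phi_{\tilde\theta'\tilde\theta(e)}^{H\cup\tilde 0_H}$ almost everywhere on $A\cap C$.
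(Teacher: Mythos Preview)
Your proof is correct and follows essentially the same approach as the paper's: pad each side with (a copy of) the other $G$-side empty graphing so that both bijections share the common middle set $E^{G}\disjun E^{0_{G}}\disjun E^{0'_{G}}$, compose, and then handle the realiser clause by observing that $A\cap C$ agrees with $A\cap B\cap C$ up to a null set. The only cosmetic difference is that the paper reuses $0'_{G}$ and $0_{G}$ themselves as the padding (writing the composite as $(\theta'\disjun\mathrm{Id})\circ(\mathrm{Id}\disjun\tau)\circ(\theta\disjun\mathrm{Id})$ with a swap $\tau$), whereas you build fresh index-wise copies $C,D$; this changes nothing of substance.
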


\begin{proof}
It is obvious that this relation is reflexive and symmetric (it suffices to take the bijection $\theta^{-1}$). We therefore only need to show transitivity. Let $F,G,H$ be three graphings such that $F\sim_{a.e.} G$ and $G\sim_{a.e.} H$. Therefore there exists four empty graphings $0_{F},0_{G^{F}},0_{G^{H}},0_{H}$ and two bijections $\theta_{F,G}: E^{F}\disjun E^{0_{F}}\rightarrow E^{G}\disjun E^{0_{G^{F}}}$ and $\theta_{G,H}: E^{G}\disjun E^{0_{G^{H}}}\rightarrow E^{H}\disjun E^{0_{H}}$ that satisfy the properties listed in the preceding definition. We notice that $0_{F}\disjun 0_{G^{H}}$ and $0_{G^{F}}\disjun 0_{H}$ are empty graphings. One can then define $\theta_{F,H}=(\theta_{G,H}\disjun \text{Id}_{E^{0_{G^{F}}}})\circ (\text{Id}_{E^{G}}\disjun\tau)\circ(\theta_{F,G}\disjun \text{Id}_{E^{0_{G^{H}}}})$, where $\tau$ represents the symmetry $E^{0_{G^{F}}}\disjun E^{0_{G^{H}}}\rightarrow E^{0_{G^{H}}}\disjun E^{0_{G^{F}}}$;
\begin{center}
\begin{tikzpicture}
	\node (dom) at (0,0) {$E^{F}\disjun E^{0_{F}}\disjun E^{0_{G^{H}}}$};
	\node (im1) at (6,0) {$E^{G}\disjun E^{0_{G^{F}}}\disjun E^{0_{G^{H}}}$};
	\node (im11) at (6,-3) {$E^{G}\disjun E^{0_{G^{H}}}\disjun E^{0_{G^{F}}}$};
	\node (im2) at (0,-3) {$E^{H}\disjun E^{0_{H}}\disjun E^{0_{G^{F}}}$};
	
	\draw[->] (dom) -- (im1) node [midway,above] {$\theta_{F,G}\disjun \text{Id}_{E^{0_{G^{H}}}}$};
	\draw[->] (im1) -- (im11) node [midway,right] {$\text{Id}_{E^{G}}\disjun \tau$};
	\draw[->] (im11) -- (im2) node [midway,above] {$\theta_{G,H}\disjun \text{Id}_{E^{0_{G^{F}}}}$};
	
	\draw[->,dotted] (dom) -- (im2) node [midway,left] {$\theta_{F,H}$};
\end{tikzpicture}
\end{center}
It is then easy to verify that the three first properties of almost everywhere equality are satisfied. We will only detail the proof that the fourth property also holds. We will forget about the superscripts in order to simplify notations. We will moreover denote by $\tilde{\theta}_{F,G}$ (resp. $\tilde{\tau}$, resp. $\tilde{\theta}_{G,H}$) the function $\theta_{F,G}\disjun\text{Id}_{E^{0_{G^{H}}}}$ (resp. $\text{Id}_{E^{G}}\disjun\tau$, resp. $\theta_{G,H}\disjun\text{Id}_{E^{0_{G^{F}}}}$). 

Chose $e\in E^{F}\disjun E^{0_{F}}\disjun E^{0_{G^{H}}}$:
\begin{itemize}[noitemsep,nolistsep]
\item if $e\in E^{0_{G^{H}}}$, then $\tilde{\theta}_{F,G}(e)=e$, and $\phi_{\tilde{\theta}(e)}=\phi_{e}$;
\item if $e\in E^{F}\disjun E^{0_{F}}$ then, by the definition of $\theta_{F,G}$, $\phi_{\tilde{\theta(e)}}$ is almost everywhere equal to $\phi_{e}$ on $S_{e}\cap S_{\tilde{\theta(e)}}$.
\end{itemize}
Thus $\phi_{\tilde{\theta}(e)}$ and $\phi_{e}$ are equal almost everywhere on $S_{e}\cap S_{\tilde{\theta}(e)}$ in all cases. A similar reasoning shows that for all $f\in E^{G}\disjun E^{0_{H}}\disjun E^{0_{G^{F}}}$, the functions $\phi_{\theta_{G,H}(f)}$ and $\phi_{f}$ are almost everywhere equal on $S_{\theta_{G,H}(f)}\cap S_{f}$.

Moreover, $\phi_{\tilde{\theta}_{F,G}(e)}$ and $\phi_{\tilde{\tau}(\tilde{\theta}_{F,G}(e))}$ are equal and have the same domain $S_{\tilde{\theta}_{F,G}(e)}=S_{\tilde{\tau}(\tilde{\theta}_{F,G}(e))}$. Thus $\phi_{\tilde{\tau}(\tilde{\theta}_{F,G}(e))}$ and $\phi_{e}$ are almost everywhere equal on the intersection $S_{\tilde{\tau}(\tilde{\theta}(e))}\cap S_{e}$. Moreover,  $\phi_{\tilde{\tau}(\tilde{\theta}_{F,G}(e))}$ and $\phi_{\tilde{\theta}_{G,H}(\tilde{\tau}(\tilde{\theta}_{F,G}(e)))}$ are almost everywhere equal on the intersection $S_{\tilde{\tau}(\tilde{\theta}_{F,G}(e))}\cap S_{\tilde{\theta}_{G,H}(\tilde{\tau}(\tilde{\theta}_{F,G}(e)))}$. We deduce from this that the functions $\phi_{e}$ and $\phi_{\theta_{F,H}(e)}$ are almost everywhere equal on 
$$S_{e}\cap S_{\theta_{F,H}(e)}\cap S_{\tilde{\tau}(\tilde{\theta}_{F,G}(e))}=S_{e}\cap S_{\theta_{F,H}(e)}\cap S_{\tilde{\theta}_{F,G}(e)}$$
We denote by $Z$ the set of null measure on which they differ. Since $S_{e}\Delta S_{\tilde{\theta}_{F,G}(e)}$ is of null measure, there exists two sets $X,Y$ of null measure such that $S_{e}\cup X=S_{\tilde{\theta}_{F,G}(e)}\cup Y$. We can deduce\footnote{One can chose $Y$ in such a way so that $S_{\tilde{\theta}_{F,G}(e)}\cap Y=\emptyset$.} that $S_{\tilde{\theta}_{F,G}(e)}=S_{e}\cup X -Y$. Thus 
\begin{eqnarray*}
\lefteqn{S_{e}\cap S_{\theta_{F,H}(e)}\cap S_{\tilde{\theta}_{F,G}(e)}}\\
&=&S_{e}\cap S_{\tilde{\theta}_{F,H}(e)}\cap (S_{e}\cup X-Y)\\
&=&S_{e}\cap S_{\tilde{\theta}_{F,H}(e)}\cap S_{e}-Y\\
&=&S_{e}\cap S_{\tilde{\theta}_{F,H}(e)}-Y
\end{eqnarray*} 
We then conclude that the functions $\phi_{e}$ and $\phi_{\tilde{\theta}_{F,H}(e)}$, restricted to $S_{e}\cap S_{\tilde{\theta}_{F,H}(e)}$, are equal outside of $Y\cup Z$ which is a set of null measure.
\end{proof}

\subsection{Paths and Execution}

We now need to define what is a path, since we won't be able to work with the usual notion of a path in a graph. Obviously, a path will be a finite sequence of edges. We will replace the condition that the source of an edge be equal to the target of the preceding edge by the condition that the intersection of these source and target sets be of non-null measure.

\begin{definition}[Plugging]
Being given two graphings $F,G$, we define their plugging $F\bicolmes G$ as the graphing $F\disjun G$ endowed with the coloring function $\delta: E^{F\disjun G} \rightarrow \{0,1\}$ such that $\delta(e)=1$ if and only if $e\in E^{G}$.
\end{definition}

\begin{definition}[Alternating Paths]
A path in a graphing $F$ is a finite sequence $\{e_{i}\}_{i=0}^{n}$ of elements of $E^{F}$ such that for all $0\leqslant i\leqslant n-1$, $T_{e_{i}}^{F}\cap S_{e_{i+1}}^{F}$ is of strictly positive measure.

An \emph{alternating path} between two graphings $F,G$ is a path $\{e_{i}\}_{i=0}^{n}$ in the graphing $F\bicolmes G$ such that for all $0\leqslant i\leqslant n-1$, $\delta(e_{i})\neq\delta(e_{i+1})$. We will denote by $\pathsmes{F,G}$ the set of alternating paths in $F\bicolmes G$.

We also define the \emph{weight} of a path $\pi=\{e_{i}\}_{i=0}^{n}$ in the graphing $F$ as the scalar $\omega_{\pi}^{F}=\prod_{i=0}^{n}\omega^{F}_{e_{i}}$.
\end{definition}

Given a path $\{e_{i}\}_{i=0}^{n}$ in a graphing $F$, one can define a function $\phi_{\pi}^{F}$ as the partial transformation:
\begin{equation*}
\phi_{\pi}^{F}=\phi_{e_{n}}^{F}\circ \chi_{T_{e_{n}}^{F}\cap S_{e_{n-1}}^{F}}\circ \phi_{e_{n-1}}^{F}\circ\chi_{T_{e_{n-1}}^{F}\cap S_{e_{n-2}}^{F}}\circ\dots\circ\chi_{T_{e_{1}}^{F}\cap S_{e_{0}}^{F}}\circ\phi_{e_{0}}^{F}
\end{equation*}
where for all measurable set $A$, the function $\chi_{A}$ is the partial identity $A\rightarrow A$.

We denote by $S_{\pi}$ and $T_{\pi}$ respectively the domain and codomain of this partial transformation $S_{e_{0}}^{F}\rightarrow T_{e_{n}}^{F}$. It is then clear that the transformation $\phi_{\pi}^{F}: S_{\pi}\rightarrow T_{\pi}$ is measurable. Moreover, if all $\phi_{e_{i}}$ are in a microcosm $\mathfrak{m}$, the transformation $\phi_{\pi}$ is itself in the microcosm $\mathfrak{m}$.

We now introduce the notion of \emph{carving} of a graphing along a measurable set $C$. This operation will consists in replacing an edge by four disjoint edges whose source and target are either subsets of $C$ or subsets of the complementary set of $C$.

\begin{definition}[Carvings]
Let $\phi: S\rightarrow T$ be a measurable transformation, $C$ a measurable set and $C^{c}$ its complementary set. We define the measurable transformations:
\begin{eqnarray*}
[\phi]{}_{i}^{i}&=&\phi\restr{C\cap\phi^{-1}(C)}:A\cap C\cap \phi^{-1}(C)\rightarrow B\cap \phi(C)\cap C\\{}
[\phi]{}_{i}^{o}&=&\phi\restr{C\cap\phi^{-1}(C^{c})}:A\cap C\cap \phi^{-1}(C^{c})\rightarrow B\cap \phi(C)\cap C^{c}\\{}
[\phi]{}_{o}^{i}&=&\phi\restr{C^{c}\cap\phi^{-1}(C)}:A\cap C^{c}\cap \phi^{-1}(C)\rightarrow B\cap \phi(C^{c})\cap C\\{}
[\phi]{}_{o}^{o}&=&\phi\restr{C^{c}\cap\phi^{-1}(C^{c})}:A\cap C^{c}\cap \phi^{-1}(C^{c})\rightarrow B\cap \phi(C^{c})\cap C^{c}
\end{eqnarray*}
We will denote by $[S]_{a}^{b}, [T]_{a}^{b}$ ($a,b\in\{i,o\}$) the domain and codomain of $[\phi]{}_{a}^{b}$. 

If $F$ is a graphing in a microcosm $\mathfrak{m}$, define the \emph{carving of $F$ along $C$} as the graphing $F^{\decoupe{C}}=\{(\omega_{e}^{F},[\phi_{e}^{F}]{}_{a}^{b})~|~e\in E^{F}, a,b\in\{i,o\}\}$ which is a graphing in the microcosm $\mathfrak{m}$.
\end{definition}

In some cases, the carving a graphing $G$ along a measurable set $C$ is almost the same as $G$. Indeed, if each edge have its source and target (up to a null-measure set) either in $C$ or in its complementary set, the graphing obtained from the carving operation is almost everywhere equal to $G$.

\begin{definition}
Let $A,B$ be two measurable sets. We say that $A$ intersects $B$ trivially if $\lambda(A\cap B)=0$ or $\lambda(A\cap B^{c})=0$.

If $F$ is a graphing and for all $e\in E^{F}$, $S^{F}_{e}$ and $T_{e}^{F}$ intersect $C$ trivially, then $F$ will be said to be $C$-tough.
\end{definition}

\begin{lemma}
Let $F$ be a graphing and $C$ be a measurable set. If $F$ is $C$-tough, then $F^{\decoupe{C}}\sim_{a.e.} F$.
\end{lemma}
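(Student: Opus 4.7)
The strategy rests on the observation that, under $C$-toughness, exactly one of the four carvings attached to each original edge is essentially non-degenerate, while the other three have null source. I will exhibit a bijection matching each edge of $F$ with its dominant carving, and absorb the three residual carvings into an empty graphing added on the $F$-side. Concretely, for each $e \in E^{F}$, choose $(a_{e}, b_{e}) \in \{i,o\}^{2}$ so that $a_{e} = i$ exactly when $\lambda(S_{e}^{F} \cap C^{c}) = 0$ and $b_{e} = i$ exactly when $\lambda(T_{e}^{F} \cap C^{c}) = 0$; $C$-toughness guarantees such a choice exists (with arbitrary tie-breaking in the degenerate case where $S_{e}^{F}$ or $T_{e}^{F}$ itself has null measure).

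The core technical step is to show that the dominant carving has source $[S_{e}^{F}]_{a_{e}}^{b_{e}}$ which agrees with $S_{e}^{F}$ up to a null set, and analogously for targets. Toughness of $S_{e}^{F}$ handles the factor $C$ (or $C^{c}$) occurring in the definition of $[S_{e}^{F}]_{a_{e}}^{b_{e}}$. For the factor $\phi_{e}^{-1}(C)$ (or $\phi_{e}^{-1}(C^{c})$), one combines non-singularity of $\phi_{e}^{F}$ with toughness of $T_{e}^{F}$: if, say, $\lambda(T_{e}^{F} \cap C^{c}) = 0$, then $\phi_{e}^{F}(S_{e}^{F} \cap (\phi_{e}^{F})^{-1}(C^{c})) \subseteq T_{e}^{F} \cap C^{c}$ is null, hence by non-singularity $S_{e}^{F} \cap (\phi_{e}^{F})^{-1}(C^{c})$ is null, whence $S_{e}^{F} \cap (\phi_{e}^{F})^{-1}(C) = S_{e}^{F}$ up to null. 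Intersecting with $S_{e}^{F} \cap C$ (which equals $S_{e}^{F}$ up to null by toughness of the source) yields $[S_{e}^{F}]_{i}^{i} = S_{e}^{F}$ up to null. The other three cases are symmetric, and the assertion for targets follows by the same argument after passing to forward images.

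With the technical step established, the rest is assembly. Define $0_{F}$ as the graphing whose edges are the non-dominant carvings, that is, the triples $(e,a,b)$ for $e \in E^{F}$ and $(a,b) \neq (a_{e}, b_{e})$, equipped with the weights and realisers inherited from $F^{\decoupe{C}}$. By the technical step each such edge has null-measure source, so the effective carrier of $0_{F}$ is a countable union of null sets, hence null, and $0_{F}$ is empty in the sense of \autoref{emptygraphing}. Take $0_{F^{\decoupe{C}}}$ to be the graphing with no edges. Define $\theta : E^{F} \disjun E^{0_{F}} \to E^{F^{\decoupe{C}}} \disjun E^{0_{F^{\decoupe{C}}}}$ by sending $e \in E^{F}$ to $(e, a_{e}, b_{e})$ and acting as the identity on $E^{0_{F}}$. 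This is visibly a bijection, and the four conditions of \autoref{equivalencepp} follow at once: weights match by construction; sources and targets match almost everywhere by the technical step (trivially on $E^{0_{F}}$, since both sides have null source); and the realisers agree on the intersection of their sources because $[\phi_{e}^{F}]_{a}^{b}$ is, by definition, a pointwise restriction of $\phi_{e}^{F}$. The sole obstacle is therefore the technical step itself, and its real substance is the transport of the toughness condition on $T_{e}^{F}$ back along $\phi_{e}^{F}$ via non-singularity.
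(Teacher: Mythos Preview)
Your proof is correct and follows essentially the same approach as the paper's: both isolate, for each edge $e$, the unique carving $(e,a_{e},b_{e})$ whose domain is of positive measure, match it with $e$, and absorb the three remaining null carvings into an empty graphing on the $F$-side. The paper's bijection runs in the opposite direction (from $E^{F^{\decoupe{C}}}$ to $E^{F}\disjun E^{0_{F}}$) but is otherwise identical; your version is actually more complete, since you spell out the non-singularity argument needed to conclude that $[S_{e}^{F}]_{a_{e}}^{b_{e}}$ agrees with $S_{e}^{F}$ up to a null set, and you handle the degenerate case where $S_{e}^{F}$ or $T_{e}^{F}$ is itself null, both of which the paper leaves implicit.
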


\begin{proof}
Chose $e\in E^{F}$. Since $F$ is $C$-tough, we are in one of the four following cases:
\begin{itemize}[noitemsep,nolistsep]
\item $S^{F}_{e}\cap C$ and $T^{F}_{e}\cap C$ are of null measure;
\item $S^{F}_{e}\cap C$ and $T^{F}_{e}\cap C^{c}$ are of null measure;
\item $S^{F}_{e}\cap C^{c}$ and $T^{F}_{e}\cap C$ are of null measure;
\item $S^{F}_{e}\cap C^{c}$ and $T^{F}_{e}\cap C^{c}$ are of null measure;
\end{itemize}
These four cases are treated in a similar way. Indeed, among the functions $[\phi^{F}_{e}]{}_{a}^{b}$, $a,b\in\{i,o\}$, only one is of domain (and thus of codomain) a set of strictly positive measure. We thus define an empty graphing $0_{F}$, with $E^{0_{F}}=E^{F}\times\{1,2,3\}$, and a bijection $E^{F^{\decoupe{C}}}\rightarrow E^{F}\disjun E^{0_{F}}$ which associates to the element $(e,a,b)$ ($e\in E^{F}$, $a,b\in\{i,o\}$) the element $e\in E^{F}$if the domain of $[\phi_{e}^{F}]{}_{a}^{b}$ is of strictly positive measure, and one of the elements $(e,i)\in E^{0_{F}}$ otherwise. One can then easily show that this bijection satisfies all the necessary properties to conclude that $F\sim_{a.e.} F^{\decoupe{C}}$.
\end{proof}

Thanks to the carving operation, we are now able to define the execution of two graphings $F$ and $G$: we consider the set of alternating paths between $F$ and $G$, and we then keep the part of each path which is external to the intersection $C$ -- the location of the \emph{cut} -- of the carriers of $F$ and $G$. The execution for graphings is therefore the natural generalisation of the execution we defined earlier on graphs.

\begin{definition}[Execution]
Let $F,G$ be two graphings in a microcosm $\mathfrak{m}$ of respective carriers $V^{F},V^{G}$ and let $C=V^{F}\cap V^{G}$. We define the execution of $F$ and $G$, denoted by $F\plugmes G$, as the graphing in the microcosm $\mathfrak{m}$ with carrier $V^{F}\cup V^{G}-C$ defined as follows (the carving is taken along $C$): 
\begin{equation*}
\{(\omega_{\pi}^{F\bicolmes G},\phi_{\pi}^{F\bicolmes G}: [S_{\pi}]_{o}^{o}\rightarrow [T_{\pi}]_{o}^{o})~|~\pi\in\pathsmes{F,G}, \lambda([S_{\pi}]_{o}^{o})\neq 0\}
\end{equation*}
\end{definition}

\subsection{Cycles}

\begin{definition}[Alternating Cycles]
A cycle in a graphing $F$ is a path $\{e_{i}\}_{i=0}^{n}$ in $F$ such that $S_{e_{0}}^{F}\cap T_{e_{n}}^{F}$ is of strictly positive measure.

An alternating cycle between two graphings $F,G$ is a cycle $\{e_{i}\}_{i=0}^{n}$ in $F\bicolmes G$ which is an alternating path and such that $\delta(e_{0})\neq\delta(e_{n})$. We will denote by $\cyclesmes{F,G}$ the set of alternating cycles between $F$ and $G$.
\end{definition}

\begin{proposition}\label{cheminspp}
Let $F,F',G$ be graphings such that $F\sim_{a.e.} F'$. Then there exists a bijection 
$$\theta: \pathsmes{F,G}\rightarrow \pathsmes{F',G}$$
such that $\phi_{\pi}=_{a.e.}\phi_{\theta(\pi)}$ for all path $\pi$.
\end{proposition}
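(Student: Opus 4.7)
The plan is to lift the edge-level bijection from almost-everywhere equality to a path-level bijection. By definition of $F \sim_{a.e.} F'$, we have empty graphings $0_{F}, 0_{F'}$ and a bijection $\theta_{0}: E^{F}\disjun E^{0_{F}}\to E^{F'}\disjun E^{0_{F'}}$ matching weights, sources, targets and realisers up to null-measure sets. The idea is to define $\theta$ by applying $\theta_{0}$ componentwise to the $F$-colored edges of a path and leaving the $G$-colored edges unchanged.

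Before doing so, I would first make the key preliminary observation that edges occurring in any alternating path must have sources and targets of strictly positive measure. Indeed, in a path $\{e_{i}\}_{i=0}^{n}$ the intersections $T_{e_{i}}\cap S_{e_{i+1}}$ have positive measure, which forces each $T_{e_{i}}$ and $S_{e_{i+1}}$ to have positive measure; combined with non-singularity of the realisers this propagates to all sources and targets in the path. Consequently, if $e\in E^{F}$ occurs in a path, then $\theta_{0}(e)$ cannot lie in $E^{0_{F'}}$ (whose effective carrier has null measure), so $\theta_{0}(e)\in E^{F'}$. The same holds in the reverse direction for $\theta_{0}^{-1}$.

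I would then define, for a path $\pi = e_{0}\dots e_{n}\in\pathsmes{F,G}$, the sequence $\theta(\pi) = f_{0}\dots f_{n}$ by $f_{i} = \theta_{0}(e_{i})$ if $e_{i}\in E^{F}$ and $f_{i}=e_{i}$ if $e_{i}\in E^{G}$. The alternating coloring is preserved trivially. To see that $\theta(\pi)$ is a genuine path in $F'\bicolmes G$, I would use that $S_{f_{i}}\Delta S_{e_{i}}$ and $T_{f_{i}}\Delta T_{e_{i}}$ are null (from \autoref{equivalencepp}), so $T_{f_{i}}\cap S_{f_{i+1}}$ differs from $T_{e_{i}}\cap S_{e_{i+1}}$ by a null set and therefore retains positive measure. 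Bijectivity of $\theta$ follows at once from bijectivity of $\theta_{0}$ together with the observation of the previous paragraph applied symmetrically to build $\theta^{-1}$.

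For the realiser equality, I would unpack the formula
\begin{equation*}
\phi_{\pi} = \phi_{e_{n}}\circ \chi_{T_{e_{n-1}}\cap S_{e_{n}}}\circ\phi_{e_{n-1}}\circ\dots\circ\chi_{T_{e_{0}}\cap S_{e_{1}}}\circ\phi_{e_{0}}
\end{equation*}
and the analogous expression for $\phi_{\theta(\pi)}$. Since $\phi_{e_{i}}$ and $\phi_{f_{i}}$ coincide outside a null subset of $S_{e_{i}}\cap S_{f_{i}}$, and the partial identities differ only on null sets, a straightforward induction on path length (composing a.e.-equal non-singular maps on a.e.-equal domains) gives $\phi_{\pi}=_{a.e.}\phi_{\theta(\pi)}$; non-singularity is what guarantees that null exceptional sets do not accumulate into something of positive measure under the compositions. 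I expect the main nuisance rather than any genuine obstacle to be the bookkeeping in this last step: one must track that the null set on which $\phi_{e_{i}}$ and $\phi_{f_{i}}$ differ, transported backwards through the non-singular composition of earlier maps, remains null, so that the two path realisers coincide off a null set of $S_{\pi}$.
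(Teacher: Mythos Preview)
Your proposal is correct and follows essentially the same route as the paper: lift the edge-level bijection $\theta_{0}$ to paths componentwise, use the positive-measure intersections along a path to ensure that the image of an $F$-edge under $\theta_{0}$ cannot land in the empty graphing $0_{F'}$, and conclude that the resulting sequence is again an alternating path. The paper treats one representative case ($\pi=f_{0}g_{0}\dots f_{n}g_{n}$) and declares the others similar; you give the argument uniformly. Your discussion of the realiser equality---propagating null exceptional sets backwards through compositions of non-singular maps---is more explicit than the paper's, which simply asserts this point as ``clear.''
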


\begin{proof}
By definition, there exists two empty graphings $0_{F},0_{F'}$ and a bijection $\theta: E^{F'}\disjun E^{0_{F'}}\rightarrow E^{F}\disjun E^{0_{F}}$ such that:
\begin{itemize}[noitemsep,nolistsep]
\item for all $e\in E^{F'}\disjun E^{0_{F'}}$, $\omega_{e}^{F'\cup 0_{F}}= \omega_{\theta(e)}^{F\cup 0_{F}}$;
\item for all $e\in E^{F'}\disjun E^{0_{F'}}$, $S_{e}^{F'\cup 0_{F'}}\Delta S_{\theta(e)}^{F\cup 0_{F}}$ is of null measure;
\item for all $e\in E^{F'}\disjun E^{0_{F'}}$, $T_{e}^{F\cup 0_{F'}}\Delta T_{\theta(e)}^{F\cup 0_{F}}$ is of null measure;
\item for all $e\in E^{F'}\disjun E^{0_{F'}}$, $\phi_{\theta(e)}^{F\disjun 0_{F}}$ et $\phi_{e}^{F'\disjun 0_{F'}}$ are almost everywhere equal;
\end{itemize}
Let $\pi$ be an alternating path in $F\bicolmes G$. We treat the case $\pi=f_{0}g_{0}\dots f_{n}g_{n}$ as an example, the other cases are dealt with in a similar way. We can define a path $\theta^{-1}(\pi)$ in $F'\bicolmes G$ by $\pi'=\theta^{-1}(f_{0})g_{0}\theta^{-1}(f_{1})\dots \theta^{-1}(f_{n})g_{n}$. Indeed, since the sets $S_{f_{i+1}}\cap T_{g_{i}}$ (resp. $S_{g_{i}}\cap T_{f_{i}}$) are of strictly positive measure, then $\theta^{-1}(f_{i+1})$ (resp. $\theta^{-1}(f_{i})$) is of strictly positive measure (hence an element of $E^{F'}$) and moreover satisfies that $S_{\theta^{-1}(f_{i+1})}\cap T_{g_{i}}$ (resp. $S_{g_{i}}\cap T_{\theta^{-1}(f_{i})}$) is of strictly positive measure. That the realisers are almost everywhere equal is clear.

Conversely, a path $\pi'=e_{0}g_{0}\dots e_{n}g_{n}$ in $F'\bicolmes G$ allows one to define a path $\theta(\pi')=\theta(e_{0})g_{0}\theta(e_{1})\dots \theta(e_{n})g_{n}$. It is clear that $\theta(\theta^{-1}(\pi))=\pi$ (resp. $\theta^{-1}(\theta(\pi'))=\pi'$) for all path $\pi$ (resp. $\pi'$) in $F\bicolmes G$ (resp. $F'\bicolmes G$).
\end{proof}

We also need to check that the operation of execution is compatible with the notion of almost everywhere equality. Indeed, since we want to work with graphings considered up to almost everywhere equality, the result of the execution should not depend on the representative of the equivalence class considered.

\begin{corollary}
Let $F,F',G$ be graphings such that $F\sim_{a.e.} F'$. Then $F\plugmes G\sim_{a.e.} F'\plugmes G$.
\end{corollary}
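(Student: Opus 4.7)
The plan is to lift the path-level bijection of \autoref{cheminspp} to an a.e.\ equality of the two executions. First, I apply \autoref{cheminspp} to fix a bijection $\theta : \pathsmes{F,G} \to \pathsmes{F',G}$ with $\phi_\pi =_{a.e.} \phi_{\theta(\pi)}$. By an edge-by-edge inspection along the path $\pi$, using \autoref{equivalencepp} for $F \sim_{a.e.} F'$, I would also extract the weight equality $\omega_\pi^{F\bicolmes G} = \omega_{\theta(\pi)}^{F'\bicolmes G}$ and, by induction on the length of $\pi$, the null-measure identities $\lambda(S_\pi \Delta S_{\theta(\pi)}) = 0$ and $\lambda(T_\pi \Delta T_{\theta(\pi)}) = 0$. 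The induction is the substantive step: at each stage one intersects with an instance of $T_{e_i} \cap S_{e_{i+1}}$ and then composes with a realiser $\phi_{e_i}$; since every realiser lies in the microcosm $\mathfrak{m}$ and is hence non-singular, edge-wise null discrepancies $S_{e_i}^F \Delta S_{\theta(e_i)}^{F'}$ and $T_{e_i}^F \Delta T_{\theta(e_i)}^{F'}$ are preserved under preimage and accumulate into a finite union of null sets.

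Second, I would restrict $\theta$ to the edge sets of the two executions. After replacing $V^F$ and $V^{F'}$ by their union (which does not change the graphings modulo $\sim_{a.e.}$), the cuts $C = V^F \cap V^G$ and $C' = V^{F'} \cap V^G$ coincide, so both executions are carved along the same $C$. Combined with non-singularity of $\phi_\pi$, the preceding step yields $\lambda([S_\pi]_o^o \Delta [S_{\theta(\pi)}]_o^o) = 0$, hence $\lambda([S_\pi]_o^o) > 0$ iff $\lambda([S_{\theta(\pi)}]_o^o) > 0$. Thus $\theta$ restricts to a bijection between $E^{F\plugmes G}$ and $E^{F'\plugmes G}$; the (at most countably many) paths $\pi$ discarded on each side are those with null carved source, and they are packaged into empty graphings $0_{F\plugmes G}$ and $0_{F'\plugmes G}$, extended by an identity correspondence, to fit the shape required by \autoref{equivalencepp}.

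Finally, the four conditions of \autoref{equivalencepp} are verified mechanically: weights match by the first step; the symmetric differences of sources and targets are null by the carving argument above; and the realisers $\phi_\pi^{F\bicolmes G}$ and $\phi_{\theta(\pi)}^{F'\bicolmes G}$ agree almost everywhere on $[S_\pi]_o^o \cap [S_{\theta(\pi)}]_o^o$ because this is inherited from the a.e.\ equality $\phi_\pi =_{a.e.} \phi_{\theta(\pi)}$ provided by \autoref{cheminspp}. The main obstacle is the inductive control of $S_\pi \Delta S_{\theta(\pi)}$: one must check that pushing null discrepancies through composition and preimage along paths of arbitrary length does not blow up, which is exactly where non-singularity of the elements of $\mathfrak{m}$ is essential; once this is nailed down, the rest is bookkeeping against \autoref{equivalencepp}.
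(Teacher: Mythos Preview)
Your proposal is correct and follows essentially the same approach as the paper: lift the path bijection of \autoref{cheminspp} and check that it satisfies the clauses of \autoref{equivalencepp} for the two executions. The paper's proof is much terser---it simply observes that weights match and that the realisers (hence their domains and codomains) agree a.e.\ as compositions of pairwise a.e.-equal maps---whereas you spell out the inductive control of null discrepancies and the carrier alignment; your packaging of discarded paths into empty graphings is harmless but not actually needed, since once the cuts coincide the bijection already restricts to a bijection between $E^{F\plugmes G}$ and $E^{F'\plugmes G}$.
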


\begin{proof}
Let $\theta$ be the bijection defined in the statement of the preceding proposition. We notice that $\omega_{\pi}=\omega_{\theta^{-1}(\pi)}$, and that $\phi_{\pi}$ and $\phi_{\theta^{-1}(\pi)}$ are almost everywhere equal as compositions of pairwise almost everywhere equal maps. In particular, their domain and codomain are equal up to a set of null measure. We can then conclude that $\theta:E^{F'\plugmes G}\rightarrow E^{F\plugmes G}$ satisfies all the necessary properties: $F'\plugmes G$ and $F\plugmes G$ are almost everywhere equal.
\end{proof}

\begin{corollary}
Let $F,F',G$ be graphings such that $F\sim_{a.e.} F'$. Then there exists a bijective correspondence sending each map in $\cyclesmes{F, G}$ to an (almost everywhere) equal map in $\cyclesmes{F', G}$.
\end{corollary}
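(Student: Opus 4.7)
The plan is to reuse directly the bijection $\theta$ constructed in Proposition \ref{cheminspp} between $\pathsmes{F,G}$ and $\pathsmes{F',G}$, and simply show that it restricts to a bijection between $\cyclesmes{F,G}$ and $\cyclesmes{F',G}$. Since $\cyclesmes{F,G}$ is, by definition, the subset of $\pathsmes{F,G}$ cut out by the two additional conditions that $S_{e_0}\cap T_{e_n}$ has strictly positive measure and that $\delta(e_0)\neq\delta(e_n)$, it is enough to verify that these two closing conditions are preserved in both directions by $\theta$.

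For the alternation condition $\delta(e_0)\neq\delta(e_n)$, this is immediate: the bijection $\theta$ leaves every edge of $G$ untouched and maps edges in $E^{F'}$ (color $0$) to edges in $E^{F}$ (color $0$, modulo possibly to the empty-graphing padding, which I address below), so the colors along the path, including the endpoints, are preserved. For the closing intersection, write $e_0' = \theta^{-1}(e_0)$ and $e_n' = \theta^{-1}(e_n)$; by the third and second items in Definition \ref{equivalencepp}, the symmetric differences $S_{e_0}\Delta S_{e_0'}$ and $T_{e_n}\Delta T_{e_n'}$ (interpreting $\theta$ as the identity on edges of $G$) are of null measure, so $S_{e_0'}\cap T_{e_n'}$ and $S_{e_0}\cap T_{e_n}$ differ by a null set. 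In particular, one has strictly positive measure if and only if the other does, which is exactly what is needed.

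A minor subtlety is that $\theta$ is actually defined on $E^{F}\disjun E^{0_F}$ rather than on $E^F$ alone, so in principle an edge of a cycle could be sent into the empty graphing $0_{F'}$. But an empty graphing has, by Definition \ref{emptygraphing}, effective carrier of null measure, so any such edge $e$ has $S_e$ of null measure, contradicting the strict positivity of $S_e \cap T_{e'}$ required along the cycle. Thus the restriction of $\theta$ (resp.\ $\theta^{-1}$) to cycles never hits the padding and is genuinely valued in $\cyclesmes{F',G}$ (resp.\ $\cyclesmes{F,G}$). Finally, the almost-everywhere equality of the realisers $\phi_\pi$ and $\phi_{\theta^{-1}(\pi)}$ is inherited verbatim from Proposition \ref{cheminspp}, since $\phi_\pi$ depends only on the edges traversed and not on the closing condition. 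The only real obstacle is this bookkeeping about empty-graphing paddings, which is resolved exactly as in the previous corollary on execution.
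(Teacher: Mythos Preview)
Your proof is correct and follows essentially the same approach as the paper: both reuse the bijection $\theta$ from Proposition~\ref{cheminspp} and argue that it restricts to cycles with almost-everywhere equal realisers. Your version is in fact more detailed than the paper's three-sentence proof, explicitly verifying the closing condition and the colour-alternation at the endpoints, and handling the empty-graphing padding, all of which the paper leaves implicit.
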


\begin{proof}
Let $\theta$ be the bijection defined in the proof of \autoref{cheminspp}.
The functions $\phi_{\pi}$ and $\phi_{\theta(\pi)}$ are almost everywhere equal and their domains and codomains are equal up to a set of null measure. We can deduce from this that $[\phi_{\pi}]_{i}^{i}$ and $[\phi_{\theta^{-1}(\pi)}]_{i}^{i}$ are almost everywhere equal, and their domains and codomains are equal up to a set of null measure.
\end{proof}

\subsection{Carvings, Cycles and Execution}

We now show a technical result that will be useful later, and which gives better insights on the operation of execution between two graphings. The execution of the graphings $F$ and $G$ is defined as a restriction of the set of alternating paths between $F$ and $G$. One could have also considered the carvings of $F$ and $G$ along the intersection $C$ of the carriers of $F$ and $G$, and then define the execution as the set of alternating paths whose source and target lie outside of the set $C$. The technical lemma we now state and prove shows that these two operations are equivalent.

Let $F,G$ be two graphings and $C=V^{F}\cap V^{G}$. One can notice that there should be a bijective correspondence between the edges of $F\plugmes G$ and those of $F^{\decoupe{C}}\plugmes G^{\decoupe{C}}$. Indeed, for two edges $g,f$ to follow each other in a path, one should have that $S_{g}\cap T_{f}$ is of strictly positive measure. But since $S_{g}\cap T_{f}$ and $S_{f}\cap T_{g}$ are subsets of $C$, the following expressions are equal:
\begin{equation*}
\chi_{S_{g}\cap T_{f}}\circ \phi_{f}^{F}\circ\chi_{S_{f}\cap T_{g}}~~~~~\text{and}~~~~~\chi_{S_{g}\cap T_{f}\cap C\cap \phi_{f}^{F}(C)}\circ (\phi_{f}^{F})\restr{C\cap (\phi_{f}^{F})^{-1}(C)}\circ\chi_{S_{f}\cap T_{g}\cap C\cap (\phi_{f}^{F})^{-1}(C)}
\end{equation*}
One can deduce from this the following equality:
\begin{equation*}
\chi_{S_{g}\cap T_{f}}\circ \phi_{f}^{F}\circ\chi_{S_{f}\cap T_{g}}=\chi_{S_{g}\cap [T_{f}]_{i}^{i}}\circ [\phi_{f}^{F}]_{i}^{i}\circ\chi_{[S^{f}]_{i}^{i}\cap T_{g}}
\end{equation*}
One can obtain the following equalities in a similar manner: 
\begin{eqnarray*}
\chi_{C^{c}}\circ \phi_{f}\circ \chi_{S_{f}\cap T_{g}}&=&\chi_{C^{c}}\circ [\phi_{f}]_{i}^{o}\circ \chi_{[S_{f}]_{i}^{o}\cap T_{g}}\\
\chi_{S_{g}\cap T_{f}}\circ \phi_{f}\circ \chi_{C^{c}}&=&\chi_{S_{g}\cap[T_{f}]_{o}^{i}}\circ [\phi_{f}]_{o}^{i}\circ \chi_{C^{c}}\\
\chi_{C^{c}}\circ \phi_{f}\circ \chi_{C^{c}}&=&\chi_{C^{c}}\circ [\phi_{f}]_{o}^{o}\circ\chi_{C^{c}}
\end{eqnarray*}

\begin{lemma}
Let $F,G$ be two graphings, $V^{F},V^{G}$ their carrier and $C=V^{F}\cap V^{G}$. Then:
\begin{equation*}
F\plugmes G=F^{\decoupe{C}}\plugmes G
\end{equation*}
\end{lemma}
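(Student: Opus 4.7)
The plan is to construct a weight- and realiser-preserving bijection between the edges of $F\plugmes G$ and those of $F^{\decoupe{C}}\plugmes G$, and to verify both preservation properties by leveraging the four compositional identities displayed just before the lemma.

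First I would describe the bijection on alternating paths. Given an alternating path $\pi=e_{0}e_{1}\dots e_{n}$ in $F\bicolmes G$ with $\lambda([S_{\pi}]_{o}^{o})\neq 0$, I would define its image $\Phi(\pi)$ by keeping each $G$-edge untouched and replacing each $F$-edge $e_{i}$ by the carved edge $[\phi_{e_{i}}^{F}]{}_{a_{i}}^{b_{i}}$, with $a_{i}=o$ precisely when $i=0$, and $b_{i}=o$ precisely when $i=n$. The justification is geometric: along an alternating path, every internal junction $T_{e_{i}}\cap S_{e_{i+1}}$ lies inside $V^{F}\cap V^{G}=C$ (since the two meeting edges are drawn from opposite sides of the cut), so any $F$-edge not at the extremities of $\pi$ necessarily has its source and target meeting $C$ in positive measure, yielding the $i/i$ label; the boundary labels $o$ at $i=0$ (resp. $i=n$) come from the fact that restriction to $[S_{\pi}]_{o}^{o}$ (resp. $[T_{\pi}]_{o}^{o}$) forces the outer source (resp. target) to lie in $C^{c}$. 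Conversely, any alternating path in $F^{\decoupe{C}}\bicolmes G$ whose contribution has positive measure must have this shape: a $[\phi]{}_{?}^{o}$ edge cannot be followed in positive measure by a $G$-edge (whose source must meet the $F$-target in $C$), and symmetrically for $[\phi]{}_{o}^{?}$, so the $o$ labels can only appear at the extremities of the path. This establishes $\Phi$ as a bijection between the contributing paths on either side.

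Next, preservation of weights is immediate, since the carving copies the weight of each $F$-edge to all four of its carved descendants, and the $G$-factors are untouched. For the realisers, recall that $\phi_{\pi}^{F\bicolmes G}$ is by definition the composition of the $\phi_{e_{i}}$ interleaved with partial identities $\chi_{T_{e_{i}}\cap S_{e_{i+1}}}$, further restricted to $[S_{\pi}]_{o}^{o}\to[T_{\pi}]_{o}^{o}$, and $\phi_{\Phi(\pi)}^{F^{\decoupe{C}}\bicolmes G}$ is the analogous composition using $[\phi_{e_{i}}^{F}]{}_{a_{i}}^{b_{i}}$. Each factor of the form $\chi_{X}\circ\phi_{e_{i}}\circ\chi_{Y}$, where $X$ and $Y$ are the constraints imposed either by a neighboring $G$-edge (forcing a side to intersect $C$) or by the outer $[\cdot]_{o}^{o}$ restriction (forcing a side to intersect $C^{c}$), can be rewritten as $\chi_{X'}\circ[\phi_{e_{i}}^{F}]{}_{a_{i}}^{b_{i}}\circ\chi_{Y'}$ using precisely one of the four identities stated just before the lemma (interior/interior, interior/exterior, exterior/interior, exterior/exterior). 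Splicing these rewrites along the path shows that the two partial transformations coincide on $[S_{\pi}]_{o}^{o}$, and in particular their domains and codomains coincide.

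The main obstacle is not analytical but organisational: one must match up carefully, at the two endpoints of the path, the outer $[\cdot]_{o}^{o}$ restriction with the correct carved version of the first and last $F$-edges, and one must treat uniformly the four cases according to whether $\pi$ starts and ends with an $F$-edge or a $G$-edge (in the latter case the carving plays no role at that end, and the corresponding identity is replaced by the trivial one). Once this case analysis is set up, the four identities supplied in the text cover every configuration, and the verification becomes a straightforward induction on the length of $\pi$.
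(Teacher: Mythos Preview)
Your proposal is correct and follows essentially the same route as the paper: set up a bijection between the contributing alternating paths of $F\bicolmes G$ and those of $F^{\decoupe{C}}\bicolmes G$ by replacing each $F$-edge with the appropriately labelled carved edge, then verify that the four compositional identities displayed just before the lemma make the realisers (and trivially the weights) coincide. The paper carries out one representative case ($\pi=f_{0}g_{0}\dots g_{k}f_{k+1}$) and dismisses the remaining start/end configurations with ``similarly'', whereas you sketch the uniform treatment of all four endpoint cases and are slightly more explicit about why the positive-measure condition on $[S_{\pi}]_{o}^{o}$ forces the $o$ labels to sit only at the extremities; apart from that organisational difference the arguments are the same.
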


\begin{proof}
By definition, the execution $F\plugmes G$ is the graphing:
\begin{equation*}
\{(\omega_{\pi}^{F\bicolmes G},\phi_{\pi}^{F\bicolmes G}: [S_{\pi}]_{o}^{o}\rightarrow [T_{\pi}]_{o}^{o})~|~\pi\in\pathsmes{F,G}, \lambda([S_{\pi}]_{o}^{o})\neq 0\}
\end{equation*}
Similarly, the execution $F^{\decoupe{C}}\plugmes G$ is the graphing:
\begin{equation*}
\{(\omega_{\pi}^{F^{\decoupe{C}}\bicolmes G},\phi_{\pi}^{F^{\decoupe{C}}\bicolmes G}: [S_{\pi}]_{o}^{o}\rightarrow [T_{\pi}]_{o}^{o})~|~\pi\in\pathsmes{F^{\decoupe{C}},G}, \lambda([S_{\pi}]_{o}^{o})\neq 0\}
\end{equation*}

Let $\pi$ be an alternating path in $F\bicolmes G$. Then $\pi$ is an alternating sequence of elements in $E^{F}$ and elements in $E^{G}$. Suppose for instance $\pi=f_{0} g_{0}f_{1}\dots f_{k}g_{k}f_{k+1}$, and let us define $\tilde{\pi}=[f_{0}]_{i}^{o} g_{0}[f_{1}]_{i}^{i}\dots [f_{k}]_{i}^{i}g_{k}[f_{k+1}]_{o}^{i}$. The function $[\phi_{\pi}^{F\bicolmes G}]_{o}^{o}$ is equal to:
\begin{multline*}
\chi_{C^{c}\cap \phi_{\pi}^{F\bicolmes G}(C^{c})}\circ \phi_{f_{k+1}}^{F}\circ \chi_{S_{f_{k+1}}\cap T_{g_{k}}}\circ\phi_{g_{k}}^{G}\circ\dots\\
\dots\circ\phi_{g_{i+1}}^{G} \circ\chi_{S_{g_{i+1}}\cap T_{f_{i+1}}}\circ \phi_{f_{i+1}}^{F}\circ \chi_{S_{f_{i+1}}\cap T_{g_{i}}}\circ\phi_{g_{i}}^{G}\circ\dots\\
\dots\circ\phi_{g_{0}}^{G} \circ\chi_{S_{g_{i+1}}\cap T_{f_{i+1}}}\circ \phi_{f_{i+1}}^{F}\circ \chi_{C^{c}\cap (\phi_{\pi}^{F\bicolmes G})^{-1}(C^{c})}
\end{multline*}
From the remarks preceding the statement of the lemma, one can conclude that $[\phi_{\pi}^{F\bicolmes G}]_{o}^{o}$ is equal to:
\begin{multline*}
\chi_{C^{c}\cap \phi_{\tilde{\pi}}^{F^{\decoupe{C}}\bicolmes G}(C^{c})}\circ [\phi_{f_{k+1}}^{F}]_{i}^{o}\circ \chi_{[S_{f_{k+1}}]_{i}^{o}\cap T_{g_{k}}}\circ\phi_{g_{k}}^{G}\circ\dots\\
\dots\circ\phi_{g_{i+1}}^{G} \circ\chi_{S_{g_{i+1}}\cap [T_{f_{i+1}}]_{i}^{i}}\circ [\phi_{f_{i+1}}^{F}]_{i}^{i}\circ \chi_{[S_{f_{i+1}}]_{i}^{i}\cap T_{g_{i}}}\circ\phi_{g_{i}}^{G}\circ\dots\\
\dots\circ\phi_{g_{0}}^{G} \circ\chi_{S_{g_{i+1}}\cap [T_{f_{i+1}}]_{o}^{i}}\circ [\phi_{f_{i+1}}^{F}]_{o}^{i}\circ \chi_{C^{c}\cap (\phi_{\tilde{\pi}}^{F^{\decoupe{C}}\bicolmes G})^{-1}(C^{c})}
\end{multline*}
We therefore obtain that $[\phi_{\pi}^{F\bicolmes G}]_{o}^{o}=[\phi_{\tilde{\pi}}^{F^{\decoupe{C}}\bicolmes G}]_{o}^{o}$. Conversely, each alternating path in $F^{\decoupe{C}}\bicolmes G$ whose first and last edges are elements of $F^{\decoupe{C}}$ is necessarily of the form $[f_{0}]_{i}^{o} g_{0}[f_{1}]_{i}^{i}\dots [f_{k}]_{i}^{i}g_{k}[f_{k+1}]_{o}^{i}$ where the path $f_{0} g_{0}f_{1}\dots f_{k}g_{k}f_{k+1}$ is an alternating path in $F\bicolmes G$. 

The other cases are treated in a similar way.
\end{proof}


\begin{corollary}\label{decoupeinvariant}
Let $F,G$ be two graphings, $V^{F},V^{G}$ their carrier and $C=V^{F}\cap V^{G}$. Then:
\begin{equation*}
F\plugmes G=F^{\decoupe{C}}\plugmes G^{\decoupe{C}}
\end{equation*}
\end{corollary}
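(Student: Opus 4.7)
My plan is to apply the preceding lemma twice, once to each operand, exploiting the symmetry of the lemma's statement in the two graphings.

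First I would invoke the lemma directly on the pair $(F,G)$, which immediately gives $F\plugmes G = F^{\decoupe{C}}\plugmes G$. Next I need to establish $F^{\decoupe{C}}\plugmes G = F^{\decoupe{C}}\plugmes G^{\decoupe{C}}$. For this, I note that the carrier of $F^{\decoupe{C}}$ coincides with $V^F$: carving only refines individual edges into at most four pieces and does not change the ambient measurable set serving as carrier. Therefore $V^{F^{\decoupe{C}}}\cap V^G = V^F \cap V^G = C$, and we are still carving along the same set.

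The key observation is that the proof of the preceding lemma is symmetric in $F$ and $G$. There, the bijection between alternating paths in $F\bicolmes G$ and in $F^{\decoupe{C}}\bicolmes G$ was built by replacing each $F$-edge $f$ in a path by the appropriate carved version $[\phi_f^F]_a^b$, with $a,b \in \{i,o\}$ determined by whether that occurrence of $f$ was preceded and followed by $G$-edges (both indices $i$) or bordered an endpoint of the path (the corresponding index $o$). The four identities listed just before the lemma — which justify replacing uncarved realisers by carved ones inside a path composition — hold just as well with the roles of $F$ and $G$ swapped, since $C$ is symmetric in $F^{\decoupe{C}}$ and $G$. Running the same construction on the $G$-edges of a path in $F^{\decoupe{C}}\bicolmes G$ therefore yields a realiser- and weight-preserving bijection onto alternating paths in $F^{\decoupe{C}}\bicolmes G^{\decoupe{C}}$, which in turn gives the desired equality of graphings.

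Chaining the two equalities yields the corollary. The main obstacle, such as it is, is purely bookkeeping: verifying that carving preserves the ambient carrier (so the intersection $C$ is unchanged after the first application of the lemma), and that the proof of the lemma is manifestly symmetric so no truly new argument has to be written. Given these, the corollary follows formally from two applications of the lemma.
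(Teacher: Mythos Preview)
Your proposal is correct and matches the paper's intended argument: the corollary is stated without proof precisely because it follows by applying the preceding lemma twice, once in each variable, using the evident symmetry and the fact that carving does not alter the carrier.
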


We can also show the sets of cycles are equal.

\begin{lemma}
Let $F,G$ be two graphings, $V^{F},V^{G}$ their carrier, and $C=V^{F}\cap V^{G}$. Then:
\begin{equation*}
\cyclesmes{F,G}=\cyclesmes{F^{\decoupe{C}},G}
\end{equation*}
\end{lemma}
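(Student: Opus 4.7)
The plan is to mirror the proof of the preceding lemma, but exploit the extra structure that a cycle provides. For an alternating cycle $\pi=\{e_{i}\}_{i=0}^{n}$ in $F\bicolmes G$ with $\delta(e_{0})\neq\delta(e_{n})$, I define $\tilde{\pi}$ in $F^{\decoupe{C}}\bicolmes G$ by keeping edges of $E^{G}$ unchanged and replacing each edge $e_{i}\in E^{F}$ by its inner-inner carving $(\omega_{e_{i}}^{F},[\phi_{e_{i}}^{F}]_{i}^{i})$. The map $\pi\mapsto\tilde{\pi}$ will be the candidate weight-preserving bijection, and the equality in the statement is to be understood through it, exactly as in the preceding lemma.

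The key distinction from the path case is the combined effect of the cyclic closure condition $\lambda(S_{e_{0}}\cap T_{e_{n}})>0$ and the alternation $\delta(e_{0})\neq\delta(e_{n})$: together they guarantee that every edge $e_{i}\in E^{F}$ sees, cyclically, both a predecessor and a successor in $E^{G}$. For any such $e_{i}$, the positive-measure intersections $T_{e_{i-1}}\cap S_{e_{i}}$ and $T_{e_{i}}\cap S_{e_{i+1}}$ (indices mod $n+1$) are subsets of $V^{G}$, hence of $C=V^{F}\cap V^{G}$. Up to null sets, the relevant portions of $S_{e_{i}}$ and $T_{e_{i}}$ therefore lie in $C\cap(\phi_{e_{i}}^{F})^{-1}(C)$ and in $C\cap\phi_{e_{i}}^{F}(C)$ respectively, which are precisely the domain and codomain retained by $[\phi_{e_{i}}^{F}]_{i}^{i}$. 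This makes $\tilde{\pi}$ an alternating cycle in $F^{\decoupe{C}}\bicolmes G$ whose weight equals $\omega_{\pi}^{F\bicolmes G}$, and the computation $[\phi_{\pi}^{F\bicolmes G}]_{i}^{i}=\phi_{\tilde{\pi}}^{F^{\decoupe{C}}\bicolmes G}$ is obtained by the same algebraic identities recalled before the preceding lemma, now applied at every index rather than only in the interior of the path.

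For the inverse, let $\sigma$ be an alternating cycle in $F^{\decoupe{C}}\bicolmes G$. Each of its edges from $F^{\decoupe{C}}$ has the form $[\phi_{e}^{F}]_{a}^{b}$ for some $e\in E^{F}$ and $a,b\in\{i,o\}$, and its immediate neighbours in $\sigma$ are edges of $E^{G}$ with source and target in $V^{G}$. If $b=o$ then $[T_{e}]_{a}^{o}\subseteq C^{c}$, so it cannot intersect the successor's source $\subseteq V^{G}\subseteq\ldots$ in positive measure, since a positive intersection would have to live in $V^{F}\cap V^{G}=C$; symmetrically $a=o$ is ruled out by the predecessor. Hence every $F^{\decoupe{C}}$-edge in $\sigma$ is an $(i,i)$-carving, and $\sigma=\tilde{\pi}$ for a unique cycle $\pi$ in $F\bicolmes G$. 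Injectivity of $\pi\mapsto\tilde{\pi}$ is immediate, and the closure condition $\lambda(S_{e_{0}}\cap T_{e_{n}})>0$ transfers between $\pi$ and $\tilde{\pi}$ by the same localisation argument used at interior vertices.

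The main (mild) obstacle is simply the bookkeeping around the closing edge: in the path lemma the endpoints received the outer pieces $[\cdot]_{i}^{o}$ and $[\cdot]_{o}^{i}$ because sources/targets could lie in $V^{F}\Delta V^{G}$, whereas the cyclic closure here turns those endpoints into ordinary interior positions, uniformly producing $(i,i)$-pieces. Once this observation is in place, the proof reduces to a cyclic instance of the preceding lemma.
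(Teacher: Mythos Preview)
Your proposal is correct and follows essentially the same approach as the paper: associate to a cycle $\pi$ the cycle $\tilde{\pi}$ obtained by replacing each $F$-edge by its $(i,i)$-carving, and observe conversely that any $F^{\decoupe{C}}$-edge appearing in an alternating cycle must be an $(i,i)$-piece. The paper's proof is a two-line sketch of exactly this idea; your version is more explicit about \emph{why} the cyclic closure forces every $F$-edge into the $(i,i)$ case (no outer endpoints survive), which is the right observation and matches the paper's intent.
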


\begin{proof}
The argument is close to the one used in the preceding proof. Indeed, if $\pi=e_{0}\dots e_{n}$ is an alternating cycle between $F$ and $G$, then we can associate it to the cycle $[\pi]=[e_{0}]_{i}^{i}[e_{1}]_{i}^{i}\dots[e_{n}]^{i}_{i}$. Conversely, if $\pi'$ is a cycle in $F^{\decoupe{C}}\bicolmes G$, then each edge in $\pi'$ necessarily is of the form $[e]_{i}^{i}$ for an element $e$ in $F$. Moreover, the associated functions are equal, i.e.\ $[\phi_{\pi}]_{i}^{i}=\phi_{[\pi]}$.
\end{proof}

Notice that the following lemma is the only place where the fact that our transformations are non-singular is used. It is however fundamental, as it is the key to obtain the associativity of execution.

\begin{lemma}\label{lemmesupportchem}
Let $F$ be a graphing, and $\pi=e_{0}\dots e_{n}$ a path in $F$ such that $S_{\pi}$ is of strictly positive measure. We define, for all couple of integers $i<j$, $\rho_{i,j}$ the path $e_{i}e_{i+1}\dots e_{j}$. Then:
\begin{itemize}[noitemsep,nolistsep]
\item for all $0<i<j\leqslant n$, $S_{\rho_{i,j}}\cap T_{e_{i-1}}$  is of strictly positive measure;
\item for all $0\leqslant i<j<n$, $T_{\rho_{i,j}}\cap S_{e_{j+1}}$ is of strictly positive measure.
\end{itemize}
\end{lemma}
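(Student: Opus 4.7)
The plan is to propagate the positive-measure hypothesis on $S_\pi$ forward along the path, using non-singularity as the single essential ingredient. Concretely, I would set $U_0 := S_\pi$ and, inductively for $0 \leq k \leq n$, define $U_{k+1} := \phi_{e_k}(U_k)$, so that $(U_k)_{k=0}^{n+1}$ records the image of the original source set after traversing the first $k$ edges of $\pi$. Since $S_\pi$ is the domain of the full composition $\phi_\pi$, the filterings by $\chi_{T_{e_k}\cap S_{e_{k+1}}}$ occurring in the definition of $\phi_\pi$ ensure by construction that $U_k \subseteq T_{e_{k-1}} \cap S_{e_k}$ for each $1 \leq k \leq n$, and that $U_{n+1} \subseteq T_{e_n}$.

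The core step is then a straightforward induction: because each $\phi_{e_k}$ is non-singular and $\lambda(U_0) = \lambda(S_\pi) > 0$ by hypothesis, we obtain $\lambda(U_k) > 0$ for every $k$. This is the unique place where non-singularity is invoked, which matches the warning in the statement that this is the only lemma relying on it; without the non-singularity condition, the image of a positive-measure set under some $\phi_{e_k}$ could collapse to a null set, and the argument would break down immediately.

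It then remains only to recognize $U_i$ and $U_{j+1}$ inside the sets mentioned in the statement. For the first claim, fix $0 < i < j \leq n$: by construction $U_i \subseteq T_{e_{i-1}}$, and every $x \in U_i$ is the image of some $y \in U_0 = S_\pi$ under $\phi_{e_{i-1}} \circ \cdots \circ \phi_{e_0}$, hence the subpath $\rho_{i,j} = e_i \cdots e_j$ is defined at $x$, so $U_i \subseteq S_{\rho_{i,j}}$. Combining these inclusions gives $U_i \subseteq S_{\rho_{i,j}} \cap T_{e_{i-1}}$, which therefore has strictly positive measure. The second claim is entirely symmetric, applied to $U_{j+1} \subseteq T_{\rho_{i,j}} \cap S_{e_{j+1}}$. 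The only mild care required is the bookkeeping around the $\chi$-filters so that the images $U_k$ are correctly identified with subsets of the relevant intersections; once the indexing is fixed the whole proof reduces to the one induction above.
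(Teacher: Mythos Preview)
Your proof is correct and follows essentially the same approach as the paper: both show that the forward image of $S_\pi$ under the first $i$ edges lands inside $S_{\rho_{i,j}}\cap T_{e_{i-1}}$ and then invoke non-singularity to conclude this set has positive measure. The only cosmetic differences are that the paper phrases the argument as a contradiction and applies non-singularity once to the composite $\phi_{e_0\dots e_{i-1}}$, whereas you define the sequence $U_k$ explicitly and apply non-singularity edge by edge; the content is the same.
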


\begin{proof}
Let us fix $i,j$. We suppose that $S_{\rho_{i,j}}\cap T_{e_{i-1}}$ is of null measure. Then for all $x\in S_{pi}$, $\phi_{e_{0}\dots e_{i-1}}$ is defined at $x$, and such that $\phi_{e_{i}\dots e_{n}}$ is defined at $\phi_{e_{0}\dots e_{i-1}}(x)$. In particular, $\phi_{e_{i}\dots e_{j}}$ is defined at $\phi_{e_{0}\dots e_{i-1}}(x)$, i.e.\ $\phi_{e_{0}\dots e_{i-1}}(x)$ is an element in $S_{\rho_{i,j}}$. Moreover, by the definition, $\phi_{e_{0}\dots e_{i-1}}(x)$ is an element of $T_{e_{i-1}}$. Thus $\phi_{e_{0}\dots e_{i-1}}(S_{\pi})\subset S_{\rho_{i,j}}\cap T_{e_{i-1}}$. Since $\phi_{e_{0}\dots e_{i-1}}$ is a non-singular transformation which is defined at all $x\in S_{\pi}$, we deduce that $\lambda(S_{\pi})=0\Leftrightarrow \lambda(S_{\rho_{i,j}}\cap T_{e_{i-1}})=0$. This lead us to a contradiction since this implies that $\lambda(S_{\pi})=0$.

A similar argument shows that $T_{\rho_{i,j}}\cap S_{e_{j+1}}$ is of strictly positive measure.
\end{proof}
%

As this was the case with the execution between graphs in earlier constructions \cite{seiller-goim,seiller-goia}, we can show the associativity of execution under the hypothesis that the intersection of the carriers is of null measure.

\begin{theorem}[Associativity of Execution]\label{thm_associativity}
Let $F,G,H$ be three graphings such that $\lambda(V^{F}\cap V^{G}\cap V^{H})=0$. Then:
\begin{equation*}
F\plugmes(G\plugmes H)=(F\plugmes G)\plugmes H
\end{equation*}
\end{theorem}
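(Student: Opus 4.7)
The plan is to construct a weight-preserving bijection between the edges of $F\plugmes(G\plugmes H)$ and those of $(F\plugmes G)\plugmes H$ whose corresponding realisers agree almost everywhere; the desired equality of graphings will then follow by unfolding \autoref{equivalencepp}. The bijection will be mediated by a common description of both sides as sequences $e_0 e_1\cdots e_n$ drawn from $E^F\disjun E^G\disjun E^H$ in which no two consecutive edges belong to the same graphing---the natural three-colored generalisation of alternating paths.

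First I would apply \autoref{decoupeinvariant} to replace $F,G,H$ by their iterated carvings along each of the three pairwise intersections $V^F\cap V^G$, $V^G\cap V^H$, and $V^F\cap V^H$. Two applications of the corollary on each side of the putative equality show that both sides are unchanged under this substitution. After the reduction, the hypothesis $\lambda(V^F\cap V^G\cap V^H)=0$ makes the three pairwise intersections themselves pairwise disjoint up to null sets, which is what will make the subsequent bookkeeping unambiguous; this is the only place the hypothesis on triple intersections is used.

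Next I would unfold the definitions. An edge of $(F\plugmes G)\plugmes H$ is an alternating path in $(F\plugmes G)\bicolmes H$ with outer endpoints outside $(V^F\cup V^G-V^F\cap V^G)\cap V^H$, and each of its $(F\plugmes G)$-edges is itself an alternating path in $F\bicolmes G$ with endpoints outside $V^F\cap V^G$. Concatenating the two layers yields precisely a three-colored alternating sequence $e_0\cdots e_n$ of the kind described above. A symmetric unfolding of $F\plugmes(G\plugmes H)$ yields the same kind of sequence, and the identity on such sequences is the candidate bijection. Weight preservation is immediate since weights are products. Equality of realisers follows by repeated application of the four identities displayed just before \autoref{decoupeinvariant}, such as $\chi_{C^c}\circ\phi\circ\chi_{C^c}=\chi_{C^c}\circ[\phi]_o^o\circ\chi_{C^c}$: the internal carving introduced at a junction in one grouping can be absorbed into the global composition and then re-expanded via the carving corresponding to the other grouping without changing the resulting partial map on the common carrier.

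The main obstacle will be to ensure that this bijection is not broken by null-measure artefacts at the intermediate level: an inner subpath producing a non-null edge of one intermediate execution might, after further composition and restriction, become null on one side but not the other. This is precisely where \autoref{lemmesupportchem} is essential. Non-singularity of the realisers ensures that whenever the globally composed realiser has a positive-measure domain, every intermediate junction along the way also has positive measure; hence a sequence $e_0\cdots e_n$ survives on the left-hand side if and only if it survives on the right-hand side. Combined with the pairwise disjointness of the $V^X\cap V^Y$'s established in the first step, this rules out the only scenario in which the two groupings could disagree---that of a junction vertex lying in the triple intersection $V^F\cap V^G\cap V^H$---and so the identity bijection is well-defined in both directions, completing the proof.
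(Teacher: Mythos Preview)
Your proposal is correct and follows essentially the same approach as the paper: reduce via carving so that the edges of $F,G,H$ respect the relevant cut loci, identify both sides with the set of $3$-alternating paths in $F\disjun G\disjun H$, and invoke \autoref{lemmesupportchem} to guarantee that every intermediate junction retains strictly positive measure so the regrouping is well-defined in either direction. The only cosmetic difference is that the paper carves each graphing once along $V^{F}\cap(V^{G}\cup V^{H})$ (and cyclically), whereas you carve along the three pairwise intersections separately; under the hypothesis $\lambda(V^{F}\cap V^{G}\cap V^{H})=0$ these reductions coincide up to null sets, and your appeal to \autoref{decoupeinvariant} makes the invariance of the executions explicit where the paper simply asserts one may assume toughness.
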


\begin{proof}
We can first suppose that $F$ (resp. $G$, resp. $H$) is $C_{F}=V^{F}\cap(V^{G}\cup V^{H})$-tough (resp. $C_{G}=V^{G}\cap (V^{F}\cup V^{H})$-tough, resp. $C_{H}=V^{H}\cap(V^{F}\cup V^{G})$-tough). Indeed, if this was not the case, we can always consider the carving along the set $C_{F}$ (resp. $C_{G}$, resp. $C_{H}$). This simplifies the following argument since it allows us to consider paths instead of restrictions of paths. The proof then follows the proof of the associativity of the execution for directed weighted graphs obtained in our previous papers \cite{seiller-goim,seiller-goia}. We detail it anyway for the sake of self-containment.

We can define the simultaneous plugging of the three graphings $F,G,H$ as the graphing $F\disjun G\disjun H$ endowed with a coloring map $\delta$ defined by $\delta(e)=0$ when $e\in E^{F}$, $\delta(e)=1$ when $e\in E^{G}$ and $\delta(e)=2$ when $e\in E^{H}$. We can then define the set of $3$-alternating paths between $F,G,H$ as the paths $e_{0}e_{1}\dots e_{n}$ such that $\delta(e_{i})\neq\delta(e_{i+1})$.

If $e_{0}f_{0}e_{1}\dots f_{k-1}e_{k}f_{k}$ is an alternating path in $F\plugmes(G\plugmes H)$, where every $e_{i}$ is an alternating path $e_{i}=g_{0}^{i}h_{0}^{i}\dots g_{n_{i}}^{i}h_{n_{i}}^{i}$, then the sequence of edges obtained by replacing each $e_{i}$ by the associated sequence (and forgetting about parentheses)  is a path. Indeed, we know that, for instance, $S_{e_{i}}\cap T_{f_{i-1}}$ is of strictly positive measure, and $S_{e_{i}}\subset S_{g_{0}}$, thus $S_{g_{0}}\cap T_{f_{i-1}}$ is of strictly positive measure. We therefore defined a $3$-alternating path between $F, G$ and $H$. The two paths define the same measurable partial transformation, and have the same domains and codomains.

Conversely, if $e_{0}e_{1}\dots e_{n}$ is a $3$-alternating path between $F, G$ and $H$, then we can see it as an alternating sequence of edges in $F$ and alternating sequences between $G$ and $H$. Let $\pi=g_{0}h_{0}\dots g_{k}h_{k}$ be the path defined by such a sequence appearing in the path $e_{0}\dots e_{n}$. We can use the preceding lemma to ensure that $S_{\pi}\cap  T_{e_{j}}$ is of strictly positive measure. Similarly, $T_{\pi}\cap S_{e_{k}}$ is of strictly positive measure. We thus showed that we had an edge in $F\plugmes(G\plugmes H)$. The two paths define the same partial measurable transformation, and have the same domains and codomains.
\end{proof}

\section{Cycles, Circuits and The Trefoil Property}\label{sec_measurement}

In this section, we go a bit further into the theory of graphings. Indeed, one of the main motivations behind the use of continuous sets as vertices of a graph instead of usual discrete sets lies in the idea that edges of a graphing may be \emph{split} into sub-edges. In order to formalize this idea (illustrated in \autoref{illustrationrefinement}), we define the notion of refinement of a graphing. Once this notion defined, we will want to consider graphings \emph{up to refinement}, meaning that a graphing and one of it refinements should intuitively represent the same computation. We therefore define a equivalence relation on the set of graphings by saying that two graphings are equivalent if they possess a common refinement. We show that this equivalence relation is compatible with the execution defined in the previous section. 

We then explore the notion of cycles between graphings. If a cycle might be defined in the obvious and natural way, we want to define a measurement which is compatible with the equivalence relation based on refinements. This means that the measurement should be \enquote{refinement-invariant}, something that involves lots of complex combinatorics. Moreover, this measurement should satisfy the trefoil property with respect to the execution defined earlier. We therefore introduce the notion of \enquote{circuit-quantifying map} which satisfies two abstract properties. We then show that any map satisfying these properties defines a measurement which is both refinement-invariant and satisfies the trefoil property.

\subsection{Refinements}

We now define the notion of \emph{refinement} of a graphing. This a very natural operation to consider. A simple example of refinement is to consider a graphing $F$ and one of its edges $e\in E^{F}$: one can obtain a refinement of $F$ by replacing $e$ with two edges $f,f'$ such that $S_{f}\cup S_{f'}=S_{e}$ and $S_{f}\cap S_{f'}$ is of null measure (one should then define $T_{f}=\phi_{e}(S_{f})$ and $T_{f'}=\phi_{e}(S_{f'})$ accordingly). This is illustrated in \autoref{illustrationrefinement}.

\begin{figure}
\begin{center}
\subfigure[A graphing $G$]{
\begin{tikzpicture}
	\draw[-] (0,0) -- (2,0) node [below,very near start] {$[0,2]$};
	\draw[-] (3,0) -- (5,0) node [below,very near end] {$[3,5]$};
	\node (1) at (1,0) {};
	\node (4) at (4,0) {};
	
	\draw[->,red] (1) .. controls (1,1.5) and (4,1.5) .. (4) node [midway,above] {$x\mapsto 5-x$};
\end{tikzpicture}}
\subfigure[A refinement of $G$]{
\begin{tikzpicture}
	\draw[-] (7.5,0) -- (8.5,0) node [below,very near start] {$[0,1]$};
	\draw[-] (9,0) -- (10,0) node [below,very near start] {$[1,2]$};
	\draw[-] (10.5,0) -- (11.5,0) node [below,very near end] {$[3,4]$};
	\draw[-] (12,0) -- (13,0) node [below,very near end] {$[4,5]$};
	\node (8) at (8,0) {};
	\node (95) at (9.5,0) {};
	\node (11) at (11,0) {};
	\node (125) at (12.5,0) {};
	
	\draw[->,red] (8) .. controls (8,2) and (12.5,2) .. (125) node [midway,above] {$x\mapsto 5-x$};
	\draw[->,red] (95) .. controls (9.5,1) and (11,1) .. (11) node [midway,above] {$x\mapsto 5-x$};
\end{tikzpicture}}
\end{center}
\caption{Illustration of refinement}\label{illustrationrefinement}
\end{figure}
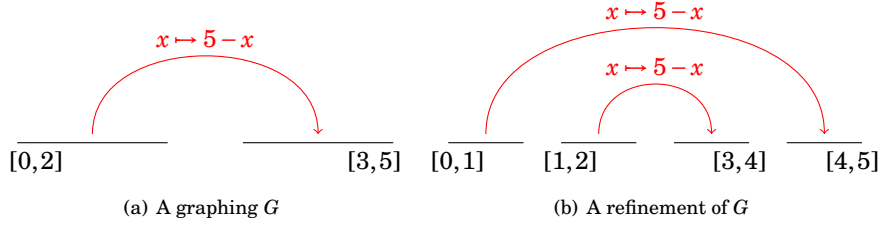

\begin{definition}[Refinements]
Let $F,G$ be two graphings. We will say that $F$ is a \emph{refinement} of $G$ -- denoted by $F\leqslant G$ -- if there exists a function $\theta:E^{F}\rightarrow E^{G}$ such that:
\begin{itemize}[noitemsep,nolistsep]
\item for all $e,e'\in E^{F}$ such that $\theta(e)=\theta(e')$ and $e\neq e'$, $S_{e}^{F}\cap S_{e'}^{F}$ and $T_{e}^{F}\cap T_{e'}^{F}$ are of null measure;
\item for all $e\in E^{F}$, $\omega_{\theta(e)}^{G}=\omega_{e}^{F}$;
\item for all $f\in E^{G}$, $S_{f}^{G}$ and $\cup_{e\in\theta^{-1}(f)} S_{e}^{F}$ are equal up to a set of null measure;
\item for all $f\in E^{G}$, $T_{f}^{G}$ and $\cup_{e\in\theta^{-1}(f)} T_{e}^{F}$ are equal up to a set of null measure;
\item for all $e\in E^{F}$, $\phi_{\theta(e)}^{G}$ and $\phi_{e}^{F}$ are equal almost everywhere $S_{\theta(e)}^{G}\cap S_{e}^{F}$.
\end{itemize}
We will say that $F$ is a \emph{refinement of $G$ along $g\in E^{G}$} if there exists a set $D$ of elements of $E^{F}$ such that:
\begin{itemize}[noitemsep,nolistsep]
\item $\theta^{-1}(g)=D$;
\item $\theta\restr{E^{F}-D}: E^{F}-D\rightarrow E^{G}-\{g\}$ is bijective.
\end{itemize}
If $D$ contains only two elements, we will say that $F$ is a \emph{simple refinement along $g$}. The refinements will sometimes be written $(F,\theta)$ in order to precise the function $\theta$. 
\end{definition}

\begin{proposition}
We define the relation $\sim_{\leqslant}$ on the set of graphings as follows:
\begin{equation*}
F\sim_{\leqslant} G\Leftrightarrow \exists H, (H\leqslant F)\wedge (H\leqslant G)
\end{equation*}
This is an equivalence relation.
\end{proposition}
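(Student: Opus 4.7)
The plan is to prove reflexivity and symmetry directly, and then handle transitivity by (i) first establishing that the refinement relation $\leqslant$ itself is transitive, and (ii) constructing, given any two graphings $K,L$ that both refine the same graphing $G$, a common refinement $M$ of $K$ and $L$.

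Reflexivity of $\sim_{\leqslant}$ is immediate: $F\leqslant F$ via $\theta=\text{Id}_{E^F}$, since each condition in the definition of refinement collapses to a trivially true statement (the only $e\neq e'$ with $\theta(e)=\theta(e')$ do not exist). Symmetry of $\sim_{\leqslant}$ is immediate from the symmetry in the defining existential statement.

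For transitivity of the underlying relation $\leqslant$, given $(K,\alpha)\leqslant F$ and $(M,\beta)\leqslant K$, the map $\alpha\circ\beta:E^M\to E^F$ is a refinement map: weights are preserved by composition of equalities, the equalities $S^F_f=_{a.e.}\bigcup_{k\in\alpha^{-1}(f)}S^K_k$ and $S^K_k=_{a.e.}\bigcup_{m\in\beta^{-1}(k)}S^M_m$ combine (since countable unions of null sets are null) to give $S^F_f=_{a.e.}\bigcup_{m\in(\alpha\circ\beta)^{-1}(f)}S^M_m$, similarly for targets, and almost-everywhere equality of realisers composes likewise. The only slightly delicate point is the disjointness condition: for distinct $m,m'\in(\alpha\circ\beta)^{-1}(f)$, either $\beta(m)=\beta(m')$, in which case $S^M_m\cap S^M_{m'}$ is null by the refinement $M\leqslant K$; or $\beta(m)\neq\beta(m')$, in which case both lie in $\alpha^{-1}(f)$, so $S^K_{\beta(m)}\cap S^K_{\beta(m')}$ is null, and since $S^M_m\subset S^K_{\beta(m)}$ up to a null set, the intersection $S^M_m\cap S^M_{m'}$ is null.

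The main step, and the point that needs the most care, is the construction of a common refinement. Suppose $K\sim_{\leqslant}G$ is witnessed by some graphing with refinement maps into $K$ and $G$; unpacking, if $F\sim_{\leqslant}G$ and $G\sim_{\leqslant}H$, we have $(K,\alpha_F)\leqslant F$, $(K,\alpha_G)\leqslant G$, $(L,\beta_G)\leqslant G$, $(L,\beta_H)\leqslant H$. Define $M$ to have edge set
\begin{equation*}
E^M=\{(k,l)\in E^K\times E^L\mid \alpha_G(k)=\beta_G(l)\},
\end{equation*}
with, for each $(k,l)\in E^M$, weight $\omega^M_{(k,l)}=\omega^K_k=\omega^L_l$, source $S^M_{(k,l)}=S^K_k\cap S^L_l$, target $T^M_{(k,l)}=\phi^K_k(S^M_{(k,l)})$, and realiser $\phi^M_{(k,l)}$ the restriction of $\phi^K_k$ (equivalently, up to a null set, $\phi^L_l$) to this intersection. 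We then verify that the first projection $\pi_K:(k,l)\mapsto k$ witnesses $M\leqslant K$: weights and realisers match by definition, while for each fixed $k$, the sources $\{S^M_{(k,l)}\}_{l\in\beta_G^{-1}(\alpha_G(k))}$ cover $S^K_k$ up to a null set because $\bigcup_l S^L_l=_{a.e.}S^G_{\alpha_G(k)}\supseteq S^K_k$ (again up to a null set), and their pairwise intersections are null because the sets $S^L_l$ are pairwise almost disjoint for $l\in\beta_G^{-1}(\alpha_G(k))$; targets are handled identically. The second projection gives $M\leqslant L$ symmetrically. Combining with transitivity of $\leqslant$ already established, $M\leqslant F$ and $M\leqslant H$, so $F\sim_{\leqslant}H$.

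The only obstacle worth flagging is bookkeeping: every equality of sets above is only almost-everywhere, and one must check that the at-most-countable unions and the pairwise intersections needed do not escape the class of null sets, which they do not since $E^K$ and $E^L$ are countable by definition of a graphing.
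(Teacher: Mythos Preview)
Your proof is correct and follows essentially the same route as the paper: the key construction is the fibered product of the two witnessing refinements over $G$, with edge set $\{(k,l)\mid \alpha_G(k)=\beta_G(l)\}$ and sources given by intersections, which is exactly what the paper builds. You are somewhat more explicit than the paper in separating out the transitivity of $\leqslant$ as a lemma and in flagging the almost-everywhere bookkeeping, but the underlying argument is the same.
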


\begin{proof}
Reflexivity and symmetry are straightforward. We are therefore left with transitivity: let $F,G,H$ be three weighted graphs such that $F\sim_{\leqslant} G$ and $G\sim_{\leqslant} H$. We will denote by $(P_{F,G},\theta)$ (resp. $(P_{G,H},\rho)$) a common refinement of $F$ and $G$ (resp. of $G$ and $H$). We will now define a graphing $P$ such that $P\leqslant P_{F,G}$ and $P\leqslant P_{G,H}$. Let us define:
\begin{itemize}[noitemsep,nolistsep]
\item $E^{P}=\{(e,f)\in E^{P_{F,G}}\times E^{P_{G,H}}~|~\theta(e)=\rho(f)\}$;
\item $S^{P}_{(e,f)}=S^{P_{F,G}}_{e}\cap S^{P_{G,H}}_{f}$ when $\theta(e)=\rho(f)$;
\item $T^{P}_{(e,f)}=T^{P_{F,G}}_{e}\cap T^{P_{G,H}}_{f}$ when $\theta(e)=\rho(f)$;
\item $\omega^{P}_{(e,f)}=\omega^{P_{F,G}}_{e}=\omega^{P_{G,H}}_{f}$;
\item $\phi^{P}_{(e,f)}$ is the restriction of $\phi^{P_{F,G}}_{e}$ to $S^{P}_{(e,f)}$;
\item $\mu_{F,G}:(e,f)\mapsto e$ and $\mu_{G,H}: (e,f)\mapsto f$.
\end{itemize}
It is then easy to check that $(P,\mu_{F,G})$ (resp. $(P,\mu_{G,H})$) is a refinement of $P_{F,G}$ (resp. of $P_{G,H}$).

Since $(P_{F,G},\theta)$ is a refinement of $F$ and $(P,\mu_{F,G})$ is a refinement of $P_{F,G}$, it is clear that $(P,\theta\circ\mu_{F,G})$ is a refinement of $F$. In a similar way, $(P,\theta\circ\mu_{G,H})$ is a refinement of $H$. Finally, $P\leqslant F$ and $P\leqslant H$, which shows that $F\sim_{\leqslant} H$.
\end{proof}

\begin{proposition}\label{subsetrelation}
The relation $\sim_{\leqslant}$ contains the relation $\sim_{a.e.}$.
\end{proposition}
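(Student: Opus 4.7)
My plan is to construct, given $F \sim_{a.e.} G$, an explicit common refinement $H$ together with refinement maps $\mu_F : E^H \to E^F$ and $\mu_G : E^H \to E^G$. First I would unpack $F \sim_{a.e.} G$ to obtain empty graphings $0_F, 0_G$ and a bijection $\theta : E^F \disjun E^{0_F} \to E^G \disjun E^{0_G}$ satisfying the four conditions of \autoref{equivalencepp}. The idea is that the bijection $\theta$ already tells us how edges of $F$ are matched to edges of $G$ up to null sets, so I can throw away the bookkeeping data coming from the empty graphings and keep only the edges that are genuinely matched across both sides.

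Concretely, I would set $E^H = \{e \in E^F \mid \theta(e) \in E^G\}$, and, for each such $e$, define $S_e^H = S_e^F \cap S_{\theta(e)}^G$, $\omega_e^H = \omega_e^F = \omega_{\theta(e)}^G$, $\phi_e^H = \phi_e^F \restr{S_e^H}$, and $T_e^H = \phi_e^H(S_e^H)$. The realiser $\phi_e^H$ is well-defined as a map in the microcosm, and it agrees almost everywhere with $\phi_{\theta(e)}^G$ on $S_e^H$ by the fourth clause of \autoref{equivalencepp}. Take $\mu_F$ to be the inclusion $E^H \hookrightarrow E^F$ and $\mu_G$ to be the restriction of $\theta$ to $E^H$; both are injective.

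It remains to verify the clauses of the refinement definition for $(H, \mu_F) \leqslant F$ and $(H, \mu_G) \leqslant G$. Preservation of weights, the agreement of sources and targets up to null measure, and the almost-everywhere equality of realisers on the intersections are all immediate. Injectivity makes the disjointness condition vacuous. The only clause that requires a small argument is the surjectivity-type condition: for each $f \in E^F$ with $\mu_F^{-1}(f) = \emptyset$ -- i.e.\ for $f \in E^F \setminus E^H$ -- one must show $S_f^F$ is null. But such an $f$ satisfies $\theta(f) \in E^{0_G}$, so by the third clause of \autoref{equivalencepp}, $S_f^F$ differs from $S_{\theta(f)}^{0_G}$ by a null set, and the latter is itself null since $0_G$ is empty (its effective carrier, and hence each individual source, is of null measure). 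The symmetric argument handles $(H, \mu_G) \leqslant G$. The main (and minor) obstacle is essentially this last step: one must recognize that the fact that $\theta$ need not restrict to a bijection $E^F \to E^G$ is harmless because any mismatch is necessarily carried by sources of null measure, which refinements are allowed to ignore.
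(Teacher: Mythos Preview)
Your proof is correct, but it takes a different route from the paper's. The paper does not construct a common refinement directly; instead it argues via transitivity of $\sim_{\leqslant}$. Concretely, the paper observes that $F \leqslant F \uplus 0_F$ (the inclusion $E^F \hookrightarrow E^F \uplus E^{0_F}$ is a refinement map because the edges of $0_F$ have null sources) and similarly $G \leqslant G \uplus 0_G$, so $F \sim_{\leqslant} F \uplus 0_F$ and $G \sim_{\leqslant} G \uplus 0_G$. Then the bijection $\theta$ itself witnesses $(F \uplus 0_F, \theta) \leqslant G \uplus 0_G$, the refinement conditions being read off verbatim from the clauses of almost-everywhere equality. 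Transitivity then gives $F \sim_{\leqslant} G$.

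Your approach builds the common refinement $H$ explicitly by intersecting sources and restricting realisers, which is more constructive and avoids any appeal to transitivity. The price is a slightly longer checklist of verifications (in particular the target clause for $(H,\mu_G)$ uses non-singularity of the realisers, which you did not mention explicitly but which is needed to pass from $S_e^H$ differing from $S_e^F$ by a null set to $T_e^H$ differing from $T_e^F$ by a null set). The paper's route is shorter because it reuses the already-proved transitivity and never has to manufacture a new graphing. One small slip: where you invoke ``the third clause'' of \autoref{equivalencepp} to conclude $S_f^F$ is null, you mean the second clause (sources), not the third (targets).
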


\begin{proof}
Let $F,G$ be two graphings such that $F\sim_{a.e.} G$. We will show that $F\sim_{\leqslant} G$. We will use the notations of \autoref{equivalencepp}: $0_{F},0_{G}$ for the empty graphings and $\theta$ for the bijection between the sets of vertices. First, we notice that $F\leqslant F\disjun 0_{F}$ and $G\leqslant G\disjun 0_{G}$. As a consequence, $F\sim_{\leqslant} F\disjun 0_{F}$ and $G\sim_{\leqslant} G\uplus 0_{G}$. Moreover, the bijection $\theta: E^{F}\disjun E^{0_{F}}\rightarrow E^{G}\disjun E^{0_{G}}$ clearly satisfies the necessary conditions for $(F\disjun 0_{F},\theta)$ to be a refinement of $G\disjun 0_{G}$, which implies that $F\disjun 0_{F}\sim_{\leqslant} G\disjun 0_{G}$. Using the transitivity of $\sim_{\leqslant}$, we can now conclude that $F\sim_{\leqslant} G$.
\end{proof}

Of course, the carving of a graphing $G$ along a measurable set $C$ defines a refinement of $G$ where each edge is replaced by exactly four disjoint edges. This will be of use to simplify some conditions later: a measurement that is invariant under refinements will be invariant under carvings too.

\begin{proposition}
Let $G$ be a graphing and $C$ a measurable set. The graphing $G^{\decoupe{C}}$ is a refinement of  $G$. 
\end{proposition}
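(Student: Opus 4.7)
The plan is to exhibit an explicit function $\theta\colon E^{G^{\decoupe{C}}}\to E^G$ witnessing the refinement and to verify the five clauses of the definition of refinement. Recall that by construction $E^{G^{\decoupe{C}}}$ is indexed by triples $(e,a,b)$ with $e\in E^G$ and $a,b\in\{i,o\}$, with realiser $[\phi_e^G]_a^b$, source $[S_e^G]_a^b$ and target $[T_e^G]_a^b$. The natural candidate is then the projection $\theta(e,a,b)=e$.

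First I would verify disjointness (first clause). Two distinct preimages of $e$ are triples $(e,a,b)\neq(e,a',b')$. Writing $C^{(i)}=C$ and $C^{(o)}=C^c$, the source $[S_e^G]_a^b$ is $S_e^G\cap C^{(a)}\cap (\phi_e^G)^{-1}(C^{(b)})$. If $a\neq a'$ then $C^{(a)}\cap C^{(a')}=\emptyset$, so the sources have null intersection; if $a=a'$ but $b\neq b'$ then $(\phi_e^G)^{-1}(C^{(b)})\cap(\phi_e^G)^{-1}(C^{(b')})=(\phi_e^G)^{-1}(C^{(b)}\cap C^{(b')})=\emptyset$, again yielding a null intersection. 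For the targets the same argument applies to $\phi_e^G(C^{(a)})$ and $C^{(b)}$; note that here one uses the non-singular, measurable-preserving character of $\phi_e^G$ so that $\phi_e^G(C^{(a)})\cap \phi_e^G(C^{(a')})$ is indeed measurable and null when $a\neq a'$.

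The weight clause is immediate from the very definition of the carving, since $\omega^{G^{\decoupe{C}}}_{(e,a,b)}=\omega_e^G=\omega^G_{\theta(e,a,b)}$. The two covering clauses reduce to the identities
\begin{equation*}
S_e^G=S_e^G\cap(C\cup C^c)\cap\bigl((\phi_e^G)^{-1}(C)\cup(\phi_e^G)^{-1}(C^c)\bigr)=\bigcup_{a,b\in\{i,o\}}[S_e^G]_a^b,
\end{equation*}
and similarly $T_e^G=\phi_e^G(S_e^G)=\bigcup_{a,b}[T_e^G]_a^b$, where the second equality uses that $\phi_e^G$ is defined on all of $S_e^G$ so images commute with finite unions. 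These are exact equalities, so the required "equal up to null measure" condition is trivially met. Finally, $[\phi_e^G]_a^b$ is by definition the restriction of $\phi_e^G$ to $[S_e^G]_a^b$, hence the two maps agree pointwise (and in particular almost everywhere) on $S_{\theta(e,a,b)}^G\cap[S_e^G]_a^b=[S_e^G]_a^b$, which settles the last clause.

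No real obstacle is expected here; the only subtle point is that non-singularity of the realisers is what guarantees that the disjointness property passes through to the targets, but this is granted by the standing assumption that realisers live in $\mathcal{M}(X)$. The proof is thus essentially a bookkeeping check, but it is worth stating explicitly because the refinement $G^{\decoupe{C}}\leqslant G$ will be invoked repeatedly to transfer refinement-invariant measurements through carvings.
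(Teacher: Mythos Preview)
Your proof follows exactly the same approach as the paper's: take $\theta(e,a,b)=e$ and check the refinement clauses; the paper's own argument is just a terser version that asserts the four sources (resp.\ targets) partition $S_e^G$ (resp.\ $T_e^G$), that weights coincide, and that each $[\phi_e^G]_a^b$ equals $\phi_e^G$ on its domain.

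One small correction, though: your appeal to non-singularity to conclude that $\phi_e^G(C^{(a)})\cap\phi_e^G(C^{(a')})$ is null when $a\neq a'$ is not valid. Non-singularity says null sets map to null sets and conversely; it does \emph{not} force images of disjoint sets to have null intersection when the map is non-injective. What actually settles the target clause cleanly is the case $b\neq b'$, since then $[T_e]_a^b\subseteq C^{(b)}$ and $[T_e]_{a'}^{b'}\subseteq C^{(b')}$ are disjoint; in the remaining case $b=b'$, $a\neq a'$, one needs either injectivity of the realiser on its source (which is the usual convention for graphings in the Levitt--Gaboriau sense) or to observe that the actual targets $\phi_e^G([S_e]_a^b)$ are what matter. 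The paper's proof is equally brief on this point, so your argument is no worse than the original, but the non-singularity remark should be dropped.
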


\begin{proof}
It is sufficient to verify that the function $\theta: E^{G^{\decoupe{C}}}\rightarrow E^{G}$, $(e,a,b)\rightarrow e$ satisfies all the necessary conditions. Firstly, using the definition, the weights $\omega_{(e,a,b)}^{G^{\decoupe{C}}}$ and $\omega_{e}^{G}$ are equal. Then, the sets $[S_{e}^{G}]_{a}^{b}$, $a,b\in\{i,o\}$ (resp. $[T_{e}^{G}]_{a}^{b}$) define a partition of $S_{e}^{G}$ (resp. $T_{e}^{G}$). Finally, using the definition again, $[\phi_{e}^{G}]_{a}^{b}$ is equal to $\phi_{e}^{G}$ on its domain.
\end{proof}

\begin{lemma}
Let $F,G$ be two graphings, $e\in E^{F}$ and $F^{(e)}$ be a simple refinement of $F$ along $e$. Then $F^{(e)}\plugmes G$ is a refinement of $F\plugmes G$.
\end{lemma}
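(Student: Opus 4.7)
The plan is to construct an explicit refinement map $\Theta \colon E^{F^{(e)}\plugmes G} \to E^{F\plugmes G}$ induced by the simple refinement of $F$ along $e$. Write $\theta_0 \colon E^{F^{(e)}} \to E^F$ for the witnessing map, with $\theta_0^{-1}(e) = \{e_1, e_2\}$ where $S_{e_1} \cap S_{e_2}$ is null, $S_{e_1} \cup S_{e_2} = S_e$ modulo null sets, and $\phi_{e_i}^{F^{(e)}}$ agrees almost everywhere with $\phi_e^F$ on $S_{e_i}$; on all other edges, $\theta_0$ is the identity. Extend $\theta_0$ to a map $\theta_0^{\ast}$ on alternating paths in $F^{(e)} \bicolmes G$ by acting coordinate-wise (and as the identity on $G$-edges). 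The proposed refinement map is $\Theta(\pi') = \theta_0^{\ast}(\pi')$, regarded as an edge of $F \plugmes G$.

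First I would check that $\Theta$ is well defined, i.e.\ that if $\pi'$ contributes an edge to $F^{(e)}\plugmes G$ (so $[S_{\pi'}]_o^o$ has positive measure), then so does $\theta_0^\ast(\pi')$ in $F\plugmes G$. This follows because, up to a null set, $S_{\pi'} \subseteq S_{\theta_0^\ast(\pi')}$ and the partial identities $\chi_{C^c}$ in the carving depend only on the ambient space, not on the refinement. The weight condition is immediate: each edge $e_i$ has weight $\omega_e^F$, so $\omega_{\pi'} = \omega_{\theta_0^\ast(\pi')}$. The realiser condition reduces to showing $\phi_{\pi'}^{F^{(e)} \bicolmes G}$ and $\phi_{\theta_0^\ast(\pi')}^{F \bicolmes G}$ agree almost everywhere on the intersection of their domains, which is a consequence of composing pairwise a.e.\ equal non-singular maps with common cut-off functions.

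The main obstacle, as usual, is the combinatorial control of fibres. Fix an alternating path $\pi$ in $F \bicolmes G$ with $[S_\pi]_o^o$ non-negligible, and let $i_1, \dots, i_k$ be the indices of $\pi$ at which the $F$-edge is $e$. The fibre $\Theta^{-1}(\pi)$ is the set of lifts $\pi'$ obtained by independently replacing $e$ at each $i_j$ by either $e_1$ or $e_2$; there are at most $2^k$ such lifts. I would argue two things. First, distinct lifts $\pi' \ne \pi''$ differ at some $i_j$, so $[S_{\pi'}]_o^o \cap [S_{\pi''}]_o^o$ is mapped by the preceding composite realiser into $S_{e_1} \cap S_{e_2}$, which is null; by Lemma~\ref{lemmesupportchem} and non-singularity of the realisers of $F^{(e)}$, the intersection itself is null, and symmetrically for targets. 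Second, for almost every $x \in [S_\pi]_o^o$ the trajectory of $x$ under $\phi_{\pi}^{F\bicolmes G}$ is defined and, at each step $i_j$, lands in $S_e = S_{e_1} \sqcup S_{e_2}$ up to null; this canonically selects one lift $\pi'(x)$ with $x \in [S_{\pi'(x)}]_o^o$. Hence $\bigsqcup_{\pi' \in \Theta^{-1}(\pi)} [S_{\pi'}]_o^o$ and $[S_\pi]_o^o$ agree up to null measure, and the same for targets.

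Combining these ingredients verifies the five clauses of the definition of refinement for $\Theta$, yielding $F^{(e)} \plugmes G \leqslant F \plugmes G$. The argument only uses the simple refinement structure and the fact (established via Lemma~\ref{lemmesupportchem}) that non-singularity propagates positivity of measure through path realisers, so no further hypothesis on $G$ or the microcosm is required.
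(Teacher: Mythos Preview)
Your proposal is correct and follows essentially the same approach as the paper: define the refinement map by applying $\theta_0$ coordinate-wise to alternating paths, then verify the refinement clauses by (i) tracking the first index where two lifts differ to get disjoint sources, and (ii) tracing each $x\in[S_\pi]_o^o$ through the path to select the unique lift containing it. The only cosmetic difference is that the paper first replaces $F^{(e)}$ by an almost-everywhere equal graphing with $S_{e_1}\cap S_{e_2}=\emptyset$ strictly, which makes the disjointness of sources literal rather than up to null sets and renders your appeal to non-singularity (and to Lemma~\ref{lemmesupportchem}, which is not actually needed here) unnecessary.
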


\begin{proof}
By definition,
\begin{equation*}
F\plugmes G=\{(\omega_{\pi}^{F\bicolmes G},\phi_{\pi}^{F\bicolmes G}: [S_{\pi}]_{o}^{o}\rightarrow [T_{\pi}]_{o}^{o})~|~\pi\in\pathsmes{F,G}, \lambda([S_{\pi}]_{o}^{o})\neq 0\}
\end{equation*}
Since $F^{(e)}$ is a simple refinement of $F$ along $e$, there exists a partition of $S_{e}^{F}$ in two sets $S_{1},S_{2}$, and a partition of $T_{e}^{F}$ in two sets $T_{1},T_{2}$ such that $\phi_{e}^{F}(S_{i})=T_{i}$. We can suppose, without loss of generality, that $S_{1}\cap S_{2}=\emptyset$ since there exists a graphing which is almost everywhere equal to $F^{(e)}$ and satisfies this additional condition, and since execution is compatible with almost everywhere equality. This additional assumption implies in particular that $T_{1}\cap T_{2}$ is of null measure. We denote by $f_{1},f_{2}$ the two elements of $E^{F^{(e)}}$ whose image is $e$ by $\theta$.

To any element $\pi=\{e_{i}\}_{i=0}^{n}$ of $F^{(e)}\plugmes G$, we associate the path $\theta(\pi)=\{\theta(a_{i})\}_{i=0}^{n}$. We now need to check that this is indeed a refinement. Let $\pi_{1},\pi_{2}$ be two distinct paths such that $\theta(\pi_{1})=\theta(\pi_{2})$. We want to show that $S_{\pi_{1}}\cap S_{\pi_{2}}$ is of null measure. Since $\pi_{1}=\{p_{i}\}_{i=0}^{n_{1}}$ and $\pi_{2}=\{q_{i}\}_{i=0}^{n_{2}}$ are distinct, they differ at least on one edge. Let $k$ be the smallest integer such that $p_{k}\neq q_{k}$. We can suppose without loss of generality that $p_{k}=f_{1}$ and $q_{k}=f_{2}$. If $x\in S_{\pi_{1}}$, then $x\in\phi_{p_{0}\dots p_{k-1}}^{-1}(S_{1})$. Similarly, if $x\in S_{\pi_{2}}$, then $x\in\phi_{q_{0}\dots q_{k-1}}^{-1}(S_{2})=\phi_{p_{0}\dots p_{k-1}}^{-1}(S_{2})$. Since we supposed that $S_{1}\cap S_{2}=\emptyset$, we deduce that $S_{\pi_{1}}\cap S_{\pi_{2}}=\emptyset$.

By definition, the weight of a path $\pi$ is equal to the weight of every path $\pi'$ such that $\theta(\pi')=\pi$. Moreover, the functions $\phi_{\pi'}$ and $\phi_{\theta(\pi')}$ are by definition almost everywhere equal on the intersection of their domain since every $\phi_{e}$ is almost everywhere equal to $\phi_{\theta(e)}$.

We are now left to show $S_{\pi}=\cup_{\pi'\in\theta^{-1}(\pi)} S_{\pi'}$ (the result concerning $T_{\pi}$ is then obvious). It is clear that $S_{\pi'}\subset S_{\pi}$ when $\theta(\pi')=\pi$, and it is therefore enough to show one inclusion: that for all $x\in S_{\pi}$ there exists a $\pi'$ with $\theta(\pi')=\pi$ such that $x\in S_{\pi'}$. Let $\pi=\pi_{0}e_{0}\pi_{1}e_{1}\dots \pi_{n}e_{n}\pi_{n+1}$ where for all $i$, $e_{i}=e$, and $\pi_{i}$ is a path (that could be empty if $i=0$ or $i=n+1$). Now chose $x\in S_{\pi}$. Then for all $i=0,\dots, n$, $\phi_{\pi_{0}e_{0}\dots \pi_{i}}(x)\in S_{e_{i}}=S_{e}$, thus $\phi_{\pi_{0}e_{0}\dots \pi_{i}}(x)$ is either in $S_{1}$ or in $S_{2}$. We obtain in this way a sequence $a_{0},\dots, a_{n}$ in $\{1,2\}^{n}$. It is then easy to see that $x\in S_{\pi'}$ where $\pi'=\pi_{0}f_{a_{0}}\pi_{1}f_{a_{1}}\dots \pi_{n}f_{a_{n}}\pi_{n+1}$.
\end{proof}

\begin{lemma}
Let $F,G$ be two graphings, $e\in E^{F}$ and $(F',\theta)$ a refinement of $F$ along $e$. Then $F'\plugmes G$ is a refinement of $F\plugmes G$.
\end{lemma}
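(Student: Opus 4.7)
My plan is to mimic the argument of the preceding lemma (the simple-refinement case) but now work with an arbitrary, at most countable, $D = \theta^{-1}(e)$. The proof has three tasks: construct a path-level map $\tilde\theta: \pathsmes{F',G}\to\pathsmes{F,G}$, check the disjointness of preimage sources, and check that the union of preimage sources covers the source of a path in $F\plugmes G$ up to a null set. The non-singularity of the realisers and the countability of everything in sight are the two ingredients that keep the bookkeeping tractable.

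First, I would define $\tilde\theta$ by mapping an alternating path $\pi' = a_0 a_1 \dots a_n$ in $F'\bicolmes G$ to the path obtained by replacing each $a_i \in E^{F'}$ with $\theta(a_i)\in E^F$ and leaving every $a_i \in E^G$ fixed; since $S^{F'}_{a_i}$ and $T^{F'}_{a_i}$ are contained, up to null sets, in $S^F_{\theta(a_i)}$ and $T^F_{\theta(a_i)}$, the positivity of $T^{F'\bicolmes G}_{a_i}\cap S^{F'\bicolmes G}_{a_{i+1}}$ is preserved. The weight and realiser conditions follow directly from the definition of refinement along $e$, together with the fact that $\phi_{a_i}^{F'}$ and $\phi_{\theta(a_i)}^F$ agree almost everywhere on their common domain.

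Second, for disjointness, let $\pi_1'\neq\pi_2'$ be two paths with $\tilde\theta(\pi_1')=\tilde\theta(\pi_2')$, and let $k$ be the first index where they differ. Since $\tilde\theta$ is the identity on $E^G$-edges, both differing edges $p_k,q_k$ must lie in $D$, with $\theta(p_k)=e=\theta(q_k)$; by the refinement condition, $S^{F'}_{p_k}\cap S^{F'}_{q_k}$ is of null measure. The common prefix realiser $\phi_{p_0\dots p_{k-1}}=\phi_{q_0\dots q_{k-1}}$ is a finite composition of non-singular transformations cut down by partial identities, hence itself non-singular, so the preimage of any null set under it is of null measure. I then conclude that $S_{\pi_1'}\cap S_{\pi_2'}$ is of null measure, as it is contained in this preimage (up to another null set).

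Third, for covering, I write $\pi = \pi_0\, e^{(0)} \pi_1\, e^{(1)}\dots \pi_n\, e^{(n)} \pi_{n+1}$ with each $e^{(i)}=e$, and fix $x\in S_\pi$. For each $i\le n$, the point $y_i=\phi_{\pi_0 e^{(0)}\dots \pi_i}(x)$ lies in $S^F_e$; by the refinement hypothesis, $S^F_e=\bigcup_{f\in D} S^{F'}_f$ up to a null set $N_e$. Let $E_i$ be the set of $x\in S_\pi$ such that $y_i\in N_e$; by non-singularity of the prefix map, $E_i$ is of null measure, and so is $\bigcup_i E_i$. For any $x\in S_\pi\setminus \bigcup_i E_i$ we may pick, for each $i$, some $f_{a_i}\in D$ with $y_i\in S^{F'}_{f_{a_i}}$, yielding a path $\pi'=\pi_0 f_{a_0}\pi_1\dots f_{a_n}\pi_{n+1}$ in $F'\bicolmes G$ with $\tilde\theta(\pi')=\pi$ and $x\in S_{\pi'}$. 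Hence $S_\pi$ and $\bigcup_{\pi'\in\tilde\theta^{-1}(\pi)} S_{\pi'}$ agree up to a null set; an identical argument handles the targets.

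The main obstacle will be the bookkeeping in the covering step: one must check that the various excluded null sets stay null after being collected over all positions $i$ and all paths $\pi$. Since $E^F$ and $E^G$ are countable, so is $\pathsmes{F,G}$, and each path has finite length; hence the overall exceptional set is a countable union of null sets, which is still null. No single step is deep — the lemma really just says that a refinement along $e$ is compositionally a (transfinite but countable) iteration of simple refinements — but this countable-union argument is what makes the generalisation beyond the simple case go through.
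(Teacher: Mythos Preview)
Your argument is correct and follows essentially the same route as the paper's proof, which is explicitly presented as an adaptation of the simple-refinement case: define $\tilde\theta$ edge-wise, check disjointness via the first differing index, and check covering by tracking where each point of $S_\pi$ lands at each occurrence of $e$. The only cosmetic difference is that the paper first replaces $F'$ by an almost-everywhere equal graphing in which the sources $S_d$ ($d\in D$) are pairwise \emph{exactly} disjoint, which lets it avoid the explicit null-set bookkeeping you carry through; your version is slightly more careful but not genuinely different.
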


\begin{proof}
This is a simple adaptation of the proof of the preceding lemma. Let $D$ be the set of elements such that $\theta^{-1}(e)=D$; we can suppose, modulo considering an almost everywhere equal graphing, that the sets $S_{d}$ ($d\in D$) are pairwise disjoint. To every path $\pi=(f_{i})_{i=0}^{n}$ in $F'\plugmes G$, we associate $\tilde{\theta}(\pi)=\{\theta(f_{i})\}_{i=0}^{n}$. Conversely, a path $\pi=(g_{i})_{i=0}^{n}$ in $F\plugmes G$ defines a countable set of paths:
\begin{equation*}
C_{\pi}=\{(f_{i})_{i=0}^{n}~|~\theta(f_{i})=g_{i}\}
\end{equation*}
We are left with the task of checking that $\tilde{\theta}:\pi\mapsto\tilde{\theta}(\pi)$ is a refinement. For this, we consider two paths $\pi_{1}$ and $\pi_{2}$ such that $\tilde{\theta}(\pi_{1})=\tilde{\theta}(\pi_{2})$. Using the same argument as in the preceding proof, we show that $S_{\pi_{1}}\cap S_{\pi_{2}}$ is of null measure. The verification concerning the weights is straightforward, as is the fact that the functions are almost everywhere equal on the intersection of their domains. The last thing left to show is that $S_{\pi}=\cup_{\pi'\in C_{\pi}} S_{\pi'}$. Here, the argument is again the same as in the preceding proof: an element $x\in S_{\pi}$ is in the domain of one and only one $S_{\pi'}$ for $\pi'\in C_{\pi}$.
\end{proof}

\begin{theorem}\label{invarianceplug}
Let $F,G$ be graphings and $(F',\theta)$ be a refinement of $F$. Then $F'\plugmes G$ is a refinement of $F\plugmes G$.
\end{theorem}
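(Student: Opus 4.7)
The plan is to extend the argument of the preceding lemma, where a single-edge refinement was lifted to a refinement of the execution, to the case where all edges of $F$ are refined simultaneously. Given $(F',\theta)$, I extend $\theta$ by the identity on $E^G$ and induce a map on alternating paths by $\tilde\theta((f_i)_{i=0}^n)=(\theta(f_i))_{i=0}^n$. A preliminary check is that $\tilde\theta(\pi)\in\pathsmes{F,G}$ whenever $\pi\in\pathsmes{F',G}$, and that $\lambda([S_{\tilde\theta(\pi)}]_o^o)>0$ whenever $\lambda([S_\pi]_o^o)>0$; both follow from the inclusions $S^{F'}_f\subseteq S^F_{\theta(f)}$ and $T^{F'}_f\subseteq T^F_{\theta(f)}$ (valid up to null sets) together with non-singularity of the realisers.

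I would then verify the four conditions making $(F'\plugmes G,\tilde\theta)$ a refinement of $F\plugmes G$. Weight preservation is immediate, since the weight of a path is the product of edge weights and $\theta$ preserves the latter. Almost-everywhere equality of realisers on the intersection of sources holds because $\phi^{F'\bicolmes G}_\pi$ and $\phi^{F\bicolmes G}_{\tilde\theta(\pi)}$ are finite compositions of edge-realisers interleaved with partial identities, and each factor of the former is a.e.\ equal to the corresponding factor of the latter. For near-disjointness, if $\pi_1\neq\pi_2$ but $\tilde\theta(\pi_1)=\tilde\theta(\pi_2)$, pick the smallest index $k$ where they differ; then $\theta(\pi_1(k))=\theta(\pi_2(k))$ with $\pi_1(k)\neq\pi_2(k)$, so $S^{F'}_{\pi_1(k)}\cap S^{F'}_{\pi_2(k)}$ is null by the definition of refinement. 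Since the two paths agree on the initial segment and the realisers along it are non-singular, pulling back this null intersection preserves null-ness, yielding $[S_{\pi_1}]_o^o\cap[S_{\pi_2}]_o^o$ null; the target statement is symmetric.

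The main obstacle is the covering condition: for each $\rho=(g_i)_{i=0}^n\in\pathsmes{F,G}$, one must show $[S_\rho]_o^o=\bigcup_{\pi\in\tilde\theta^{-1}(\rho)}[S_\pi]_o^o$ up to null measure, even though the fibres $\theta^{-1}(g_i)$ may be countably infinite. The argument would proceed by induction along the length of $\rho$: for almost every $x\in[S_\rho]_o^o$, the intermediate image $\phi_{g_0\cdots g_{i-1}}(x)$ lies in $S^F_{g_i}=\bigcup_{e\in\theta^{-1}(g_i)}S^{F'}_e$ up to a null set. Since this union is countable, almost every such image lies in exactly one $S^{F'}_{e_i(x)}$ with $e_i(x)\in\theta^{-1}(g_i)$, the ambiguity affecting only a null set, which non-singularity of $\phi_{g_0\cdots g_{i-1}}$ pulls back to a null subset of $[S_\rho]_o^o$; as $n$ is finite, the total exceptional set remains null. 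The selection $(e_0(x),\dots,e_n(x))$ then produces an element $\pi\in\tilde\theta^{-1}(\rho)$ with $x\in[S_\pi]_o^o$, and the covering of targets follows symmetrically. Combining the four checks concludes the proof.
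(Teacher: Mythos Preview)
Your proposal is correct but takes a different route from the paper. The paper does not re-verify the refinement conditions in full generality; instead it enumerates the edges of $F$ as $f_0,f_1,\dots$, forms the truncations $F^n=\{(\omega_{f_i}^F,\phi_{f_i}^F)\}_{i=0}^n$ and $(F')^n=\{(\omega_e^{F'},\phi_e^{F'})\mid e\in\theta^{-1}(f_i)\}_{i=0}^n$, and applies the preceding single-edge lemma iteratively (refining one edge at a time) to conclude that $(F')^n\plugmes G$ is a refinement of $F^n\plugmes G$ for every $n$. It then passes to the limit, using that $F'\plugmes G=\bigcup_{n\geqslant 0}(F')^n\plugmes G$ and $F\plugmes G=\bigcup_{n\geqslant 0}F^n\plugmes G$.

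Your direct verification is arguably cleaner: it handles the countably-many-edge refinement in one pass and makes explicit the two places where non-singularity is used (pulling back null overlaps for near-disjointness, and pulling back the null exceptional sets in the covering argument). The paper's approach, on the other hand, avoids redoing the pointwise selection argument by recycling the single-edge lemma, at the cost of a somewhat terse limiting step whose justification is left implicit. Either way the same map $\tilde\theta$ on paths is the witnessing refinement.
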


\begin{proof}
If $\pi$ is an alternating path $f_{0}g_{0}f_{1}\dots f_{n}g_{n}$ between $F'$ and $G$, we define $\theta(\pi)=\theta(f_{0})g_{0}\dots \theta(f_{1})\dots g_{n}\theta(f_{n})$. This clearly defines a path, since $S_{f_{i}}\subset S_{\theta(f_{i})}$ (resp. $T_{f_{i}}\subset T_{\theta(f_{i})}$) and $\pi$ is itself a path.

Let us denote by $f_{0},\dots f_{n},\dots$ the edges of $F$. We define the graphings $F^{n}$ as the following restrictions of  $F$: $\{(\omega_{f_{i}}^{F},\phi_{f_{i}}^{F})\}_{i=0}^{n}$. We define the corresponding restrictions of $F'$ as the graphings $(F')^{n}=\{(\omega_{e}^{F'},\phi_{e}^{F'})~|~e\in\theta^{-1}(f_{i})\}_{i=0}^{n}$. By an iterated use of the preceding lemma, we obtain that $((F')^{n}\plugmes G,\theta)$ is a refinement of $F^{n}\plugmes G$ for every integer $n$. It is then easy to see that $(F'\plugmes G,\theta)=(\cup_{n\geqslant 0} (F')^{n}\plugmes G,\theta)$ is a refinement of $\cup_{n\geqslant 0} F^{n}\plugmes G$, i.e.\ of $F\plugmes G$.
\end{proof}

\subsection{Measurement of circuits}

To define a measurement, we first want to define the functions representing the circuits between graphings. This is where things get a little bit more complicated: if $\pi_{1}$ and $\pi_{2}$ are two cycles representing the same circuit (i.e.\ $\pi_{1}$ is a cyclic permutation of $\pi_{2}$) the functions $\phi_{\pi_{1}}$ and $\phi_{\pi_{2}}$ are not equal in general! We will skip this complication by considering \enquote{choices of representative of circuits}, i.e. we fix a particular representative of each equivalence class of cycles modulo cyclic permutations. We will however need to take this non-uniformity later, when defining the notion of circuit-quantifying maps (in the cases -- which are those of interest-- where these maps depend on the functions $\phi_{\pi_{1}},\phi_{\pi_{2}}$ associated to the representatives of circuits).

\begin{definition}
Let $F,G$ be two graphings. A \emph{choice of representatives of circuits} is a set $\repcirc{F,G}$ such that:
\begin{itemize}
\item for all cycle $\rho$ in $\cyclesmes{F,G}$ there exists a unique element $\pi$ in $\repcirc{F,G}$ such that $\bar{\rho}=\bar{\pi}$ ($\bar{\pi}$ denotes the equivalence class of $\pi$ modulo the action of cyclic permutations, see \autoref{defcircuits});
\item if $\pi^{k}\in\repcirc{F,G}$, then $\pi\in\repcirc{F,G}$.
\end{itemize}
\end{definition}

The second condition implies that a choice of representative of circuits is uniquely determined by the choice of representatives for all $1$-circuits (\autoref{defcircuits}).



Now we would like to define a measurement of circuits between two graphings $F$ and $G$ in such a way that if $(F',\theta)$ is a refinement of $F$, the measurements $\meas{F,G}$ and $\meas{F',G}$ are equal. Firstly, one should be aware that to define the notion of circuit-quantifying maps one should take into account the fact that if $\pi_{1},\pi_{2}$ are two representatives of a given circuit, the functions $\phi_{\pi_{1}}$ and $\phi_{\pi_{2}}$ are not equal in general.

Secondly, suppose that we obtained such a map $q$ (which does not depend on the choice of representatives), that $\pi$ is an alternating cycle between $F$ and $G$ and that $(F',\theta)$ is a refinement of $F$. We will try to understand what the set of circuits induced by $\pi$ in $F'\bicolmes G$ looks like. We first notice that the cycle $\pi$ corresponds to a family $E_{\pi}$ of alternating cycles between $F'$ and $G$. If for instance $\pi=f_{0}g_{0}\dots f_{n}g_{n}$, one should consider the set of sequences $\{f_{0}'g_{0}\dots f_{n}'g_{n}~|~\forall i, f_{i}'\in\theta^{-1}(f_{i})\}$. However, each of these sequences does not necessarily define a path: it is possible that $S_{g_{i}}\cap T_{f'_{i}}$ (or $S_{f'_{i+1}}\cap T_{g_{i}}$) is of null measure. It is even possible that such a sequence will be a path without being a cycle, and that a cycle of length $l$, once decomposed along the refinement, becomes a cycle of length $m\times l$, where $m$ is an arbitrary integer. \autoref{exempledecompocycles} shows how a cycle of length $2$ can induce either a cycle of length $4$ or a set of two cycles of length $2$ after a refinement. However, a cycle of length $4$ could very well be induced by the cycle $\pi^{2}$ if the latter is an element of $\cyclesmes{F,G}$. The following definitions takes all these remarks into account to introduce the notion of \enquote{refinement-invariant maps}.

\begin{figure}
\centering
\subfigure[A cycle doubling its length]{
\begin{tikzpicture}[x=0.8cm,y=0.8cm]
	\draw[-] (0,0) -- (2,0) node [below,very near start] {$[0,2]$};
	\draw[-] (3,0) -- (5,0) node [below,very near end] {$[3,5]$};
	\node (1) at (1,0) {};
	\node (4) at (4,0) {};
	
	\draw[->,red] (1) .. controls (1,1.5) and (4,1.5) .. (4) node [midway,above] {$x\mapsto 5-x$};
	\draw[->,blue] (4) .. controls (4,-1.5) and (1,-1.5) .. (1) node [midway,below] {$x\mapsto x-3$};
	
	\draw[-] (7.5,0) -- (8.5,0) node [below,very near start] {$[0,1]$};
	\draw[-] (9,0) -- (10,0) node [below,very near start] {$[1,2]$};
	\draw[-] (10.5,0) -- (11.5,0) node [below,very near end] {$[3,4]$};
	\draw[-] (12,0) -- (13,0) node [below,very near end] {$[4,5]$};
	\node (8) at (8,0) {};
	\node (95) at (9.5,0) {};
	\node (11) at (11,0) {};
	\node (125) at (12.5,0) {};
	
	\draw[->,red] (8) .. controls (8,2) and (12.5,2) .. (125) node [midway,above] {$x\mapsto 5-x$};
	\draw[->,red] (95) .. controls (9.5,1) and (11,1) .. (11) node [midway,above] {$x\mapsto 5-x$};
	\draw[->,blue] (125) .. controls (12.5,-1.5) and (9.5,-1.5) .. (95) node [near start,below] {};
	\draw[->,blue] (11) .. controls (11,-1.5) and (8,-1.5) .. (8) node [near end,below] {};
	\node[blue] (a) at (10.25,-1.4) {$x\mapsto x-3$};
	
\end{tikzpicture}
}
\subfigure[A cycle splitting in two cycles]{
\begin{tikzpicture}[x=0.8cm,y=0.8cm]
	\draw[-] (0,0) -- (2,0) node [below,very near start] {$[0,2]$};
	\draw[-] (3,0) -- (5,0) node [below,very near end] {$[3,5]$};
	\node (1) at (1,0) {};
	\node (4) at (4,0) {};
	
	\draw[->,red] (1) .. controls (1,1.5) and (4,1.5) .. (4) node [midway,above] {$x\mapsto x+3$};
	\draw[->,blue] (4) .. controls (4,-1.5) and (1,-1.5) .. (1) node [midway,below] {$x\mapsto x-3$};
	
	\draw[-] (7.5,0) -- (8.5,0) node [below,very near start] {$[0,1]$};
	\draw[-] (9,0) -- (10,0) node [below,very near start] {$[1,2]$};
	\draw[-] (10.5,0) -- (11.5,0) node [below,very near end] {$[3,4]$};
	\draw[-] (12,0) -- (13,0) node [below,very near end] {$[4,5]$};
	\node (8) at (8,0) {};
	\node (95) at (9.5,0) {};
	\node (11) at (11,0) {};
	\node (125) at (12.5,0) {};
	
	\draw[->,red] (8) .. controls (8,1.5) and (11,1.5) .. (11) {};
	\draw[->,red] (95) .. controls (9.5,1.5) and (12.5,1.5) .. (125) {};
	\draw[->,blue] (125) .. controls (12.5,-1.5) and (9.5,-1.5) .. (95) node [near start,below] {};
	\draw[->,blue] (11) .. controls (11,-1.5) and (8,-1.5) .. (8) node [near end,below] {};
	\node[blue] (a) at (10.25,-1.4) {$x\mapsto x-3$};
	\node[red] (a) at (10.25,1.4) {$x\mapsto x+3$};
	
\end{tikzpicture}
}
\caption{Examples of the evolution of a cycle when performing a refinement}\label{exempledecompocycles}
\end{figure}
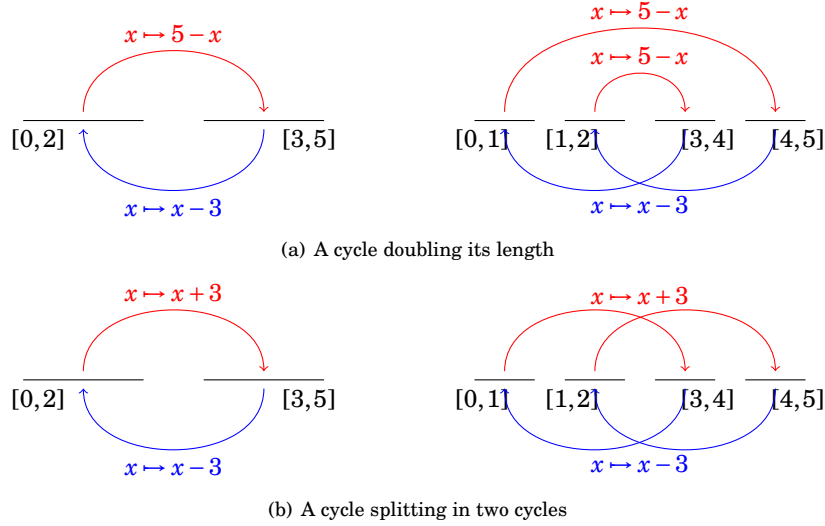

\begin{definition}
Let $\pi$ be a cycle between two graphings $F,G$, and $\repcirc{F,G}$ a choice of representatives of circuits. We define $\pi^{\omega}_{\repcirc{F,G}}$ -- or simply $\pi^{\omega}$ when the context is clear -- as the set $\{\pi^{k}~|~k\in\naturalN\}\cap\repcirc{F,G}$. 
\end{definition}

\begin{definition}\label{invarianceparraff}
Let $\pi$ be a cycle between two graphings $F,G$, and $\repcirc{F,G}$ a choice of representatives of circuits. Let $(F',\theta)$ be a refinement of $F$. For every choice of representatives of circuits $\repcirc{F',G}$,  we define 
\begin{equation*}
E^{(F',\theta)}_{\pi,\repcirc{F',G}}=\{\rho=f_{0}'g_{0}f'_{1}g_{1}\dots f'_{n}g_{n}\in\repcirc{F',G}~|~\exists k\in\naturalN, \theta(\rho)\in\pi^{\omega}\}
\end{equation*}
Here again, we will usually write this set as $E^{(F',\theta)}_{\pi}$ when the context is clear.
\end{definition}

\begin{definition}\label{invarianceparraff}
A function $q$ from the set\footnote{As in the graph setting, we work modulo a the renaming of edges. As a consequence, if the class of cycles is not a priori a set, the function $q$ will not depend on the name of edges, but only on the weight and the transformation associated to the cycle. We can therefore define $q$ as a function on the set of equivalence classes of cycles modulo renamings of edges.} of cycles into $\realposN\cup\{\infty\}$ is \emph{refinement-invariant} if for all graphings $F,G$, all simple refinement $(F',\theta)$ of $F$, and all choice of representative of circuits $\repcirc{F,G}$, there exists a choice of representative of circuits $\repcirc{F',G}$ such that the following equality holds for every $1$-cycle $\pi$ between $F$ and $G$:
\begin{equation}
\sum_{\rho\in\pi^{\omega}_{\repcirc{F,G}}}q(\rho)=\sum_{\rho\in E^{(F',\theta)}_{\pi,\repcirc{F',G}}} q(\rho)\label{eq_refinv}
\end{equation}
\end{definition}

\begin{definition}[Circuit-Quantifying Maps]
Let $\mathfrak{m}$ denote a microcosm. A map $q$ from the set of $\mathfrak{m}$-cycles\footnote{This is the set of weighted cycles realised by maps in the microcosm $\mathfrak{m}$, i.e.\ the set of couples $(\omega_{\pi},\phi_{\pi})$ of a weight in $\Omega$ and a restriction $\phi_{\pi}$ of an element $m\in\microcosm{m}$ to a measurable set $S$ such that $m(S)=S$.} into $\realposN\cup\{\infty\}$ is a ($\mathfrak{m}$-)circuit-quantifying map if:
\begin{enumerate}[noitemsep,nolistsep]
\item for all representatives $\pi_{1},\pi_{2}$ of a circuit $\pi$, $q(\pi_{1})=q(\pi_{2})$;
\item $q$ is refinement-invariant.
\end{enumerate}
\end{definition}

A circuit-quantifying map should therefore meet quite complex conditions and it is a very natural question to ask wether such maps exist. We will define in the next section a family of circuit-quantifying maps for the set of $1$-circuits (i.e.\ the chosen map sends all circuits which are not $1$-circuits to $0$), answering this question positively.

We now define the measurement associated to a circuit-quantifying map. 

\begin{definition}[Measure]\label{measurement}
Let $q$ be a circuit-quantifying map. We define the associated measure of interaction as the function $\meas[q]{\cdot,\cdot}$ which associates, to all couple of graphings $F,G$, the quantity:
\begin{equation*}
\meas[q]{F,G}=\sum_{\pi\in\repcirc{F,G}} q(\pi)
\end{equation*}
\end{definition}

Notice that although the formal definition depends on a choice of a set of representatives of circuits, the result $\meas{F,G}$ is obviously independent of it. This is a consequence of the first condition in the definition of circuit-quantifying maps. 
Indeed, one can rewrite the expression above as:
$$\meas[q]{F,G}=\sum_{\rho\in\cyclesmes{F,G}} \frac{q(\rho)}{\card{\bar{\rho}}}$$
As a consequence, we will be able to pick, when needed, adequate choices of representative of circuits to ease the proofs.

A natural question is why we are interested in measuring circuits instead of cycles, since the above expression on cycles clearly does not depend on the choice of representatives of circuits. The reason for this can be found in the proof of the trefoil property. Indeed, the above expression does not depend on the choice of representatives, but it depends on the length of cycles. However, the geometric identity shown in \autoref{geometrictrefoil} does not hold for cycles as the length of the cycles forming the corresponding circuits, in e.g. $\circuits{F\plug G,H}$ and $\circuits{G\plug H,F}$, are not equal in general.


\begin{lemma}
Let $F,G$ be two graphings, $e\in E^{F}$, and $F^{(e)}$ a simple refinement along $e$ of $F$. Then:
\begin{equation*}
\meas[q]{F,G}=\meas[q]{F^{(e)},G}
\end{equation*}
\end{lemma}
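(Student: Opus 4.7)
The plan is to use the refinement-invariance of $q$ as the main engine, combined with a careful choice of representatives of circuits on both sides. Since $\meas[q]{F,G}$ does not depend on the chosen representatives (as observed right after \autoref{measurement}), we are free to pick particularly convenient ones. First, for each $1$-circuit class $\bar{\pi}$ between $F$ and $G$, choose one $1$-cycle $\pi$ as representative, and take the literal powers $\pi^{k}$ ($k \geq 1$) as representatives of the classes $\overline{\pi^{k}}$; let $P$ denote the set of $1$-cycles chosen in this way. By construction $\repcirc{F,G} = \bigsqcup_{\pi \in P} \pi^{\omega}_{\repcirc{F,G}}$, with $\pi^{\omega}_{\repcirc{F,G}} = \{\pi^{k} : k \geq 1\}$ when $\pi \in P$ and $\pi^{\omega}_{\repcirc{F,G}} = \emptyset$ otherwise. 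Applying the refinement-invariance hypothesis to the simple refinement $(F^{(e)}, \theta)$ yields a $\repcirc{F^{(e)}, G}$ satisfying equation (\ref{eq_refinv}) for every $1$-cycle $\pi$ between $F$ and $G$.

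The core of the proof consists in showing that the family $\{E^{(F^{(e)}, \theta)}_{\pi}\}_{\pi \in P}$ forms a partition of $\repcirc{F^{(e)}, G}$. Disjointness is immediate: if $\rho$ were in both $E^{(F^{(e)}, \theta)}_{\pi_{1}}$ and $E^{(F^{(e)}, \theta)}_{\pi_{2}}$, then $\theta(\rho) = \pi_{1}^{k_{1}} = \pi_{2}^{k_{2}}$ would hold as sequences of edges, forcing $\pi_{1} = \pi_{2}$ by primitivity of $\pi_{1}, \pi_{2}$ in the free monoid of edges. For coverage, given $\rho \in \repcirc{F^{(e)}, G}$, the sequence $\theta(\rho)$ is a cycle in $\cyclesmes{F,G}$ and hence a power $\sigma^{k}$ of a unique primitive $1$-cycle $\sigma$; the representative $\pi \in P$ of $\bar{\sigma}$ is a cyclic rotation of $\sigma$, and cyclically rotating $\rho$ by the corresponding amount (which does not change its circuit class) makes the identity $\theta(\rho) = \pi^{k}$ hold literally, placing $\rho$ in $E^{(F^{(e)}, \theta)}_{\pi}$.

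Once the partition is in place, the calculation collapses:
\begin{equation*}
\meas[q]{F^{(e)}, G} = \sum_{\rho \in \repcirc{F^{(e)}, G}} q(\rho) = \sum_{\pi \in P} \sum_{\rho \in E^{(F^{(e)}, \theta)}_{\pi}} q(\rho) = \sum_{\pi \in P} \sum_{\rho \in \pi^{\omega}_{\repcirc{F,G}}} q(\rho) = \meas[q]{F,G},
\end{equation*}
where the third equality is refinement-invariance applied one $\pi$ at a time. The main technical obstacle lies in the coverage step: nothing a priori guarantees that the $\repcirc{F^{(e)}, G}$ produced by the existential quantifier in the definition of refinement-invariance has representatives whose $\theta$-images are literally (rather than merely cyclically) powers of the chosen $\pi^{k}$. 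This is resolved by exploiting the freedom to cyclically rotate each chosen representative on the $F^{(e)}$ side: such rotations preserve both sides of equation (\ref{eq_refinv}), since $q$ is constant on cyclic-permutation classes by the first condition on circuit-quantifying maps.
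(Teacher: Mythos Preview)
Your argument is correct and follows essentially the same route as the paper's. The only organisational difference is that the paper first separates the circuits of $\repcirc{F,G}$ into those passing through the refined edge $e$ and those that do not: for the latter the bijection induced by $\theta$ is the identity on edges and the equality of the two sums is immediate, while refinement-invariance is invoked only for the $1$-cycles $\rho\in O(\{e\},G)$ going through $e$. Your uniform partition $\repcirc{F,G}=\bigsqcup_{\pi\in P}\pi^{\omega}$ subsumes this split, since for $\pi\in P$ avoiding $e$ the set $E^{(F^{(e)},\theta)}_{\pi}$ is exactly $\pi^{\omega}$ and the equation is trivial.

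On the coverage step you are actually more explicit than the paper, which simply asserts that a choice of $\repcirc{F^{(e)},G}$ with the required properties ``is clear from the definition of refinement-invariance''. Your observation that one may rotate each representative in $\repcirc{F^{(e)},G}$ so that its $\theta$-image lands in $\repcirc{F,G}$ is precisely what makes this claim work: the first defining condition on circuit-quantifying maps ensures that $q$ is unchanged by such rotations, so the refinement-invariance identities survive the adjustment and the family $\{E^{(F^{(e)},\theta)}_{\pi}\}_{\pi\in P}$ becomes a genuine partition of $\repcirc{F^{(e)},G}$. This is exactly the content of the paper's unargued ``clear'', so your proof and the paper's coincide.
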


\begin{proof}
We write $\theta: E^{F^{(e)}}\rightarrow E^{F}$ and $\{f,f'\}=\theta^{-1}(e)$. We will use the notations introduced in \autoref{invarianceparraff}. Moreover, we write $\repcirc{F,G}^{\{e\}}$ the set of cycles in $\repcirc{F,G}$ that go through the edge $e$ at least once. The two sets $\repcirc{F,G}^{\{e\}}$ and $\repcirc{F-\{e\},G}$ form a partition of $\repcirc{F,G}$. Similarly, we write $\repcirc{F^{(e)},G}^{\{f,f'\}}$ the set of cycles in $\repcirc{F^{(e)},G}$ that go through $f$ or $f'$ at least once (notice that it does not necessarily go through both).

We will also denote by $O(\{e\},G)$ the set of $1$-cycles in $\repcirc{F,G}^{\{e\}}$. Then the family $\{\pi^{\omega}\}_{\pi\in O(\{e\},G)}$ is a partition of $\repcirc{F,G}^{\{e\}}$: it is clear that if $\pi,\pi'$ are two distinct elements of $O(\{e\},G)$, $\pi^{\omega}$ and $(\pi')^{\omega}$ are disjoint, and it is equally obvious that $\repcirc{F,G}^{\{e\}}=\cup_{\pi\in O(\{e\},G)} \pi^{\omega}$ since the fact that $\pi^{k}\in\repcirc{F,G}$ implies that $\pi\in\repcirc{F,G}$.

Now, we will consider a choice of representations of circuits $\repcirc{F^{(e)},G}$ such that:
\begin{itemize}[noitemsep,nolistsep]
\item $\repcirc{F^{(e)}-\{f,f'\},G}$ coincides with $\repcirc{F-\{e\},G}$;
\item $\sum_{\pi\in E_{\rho}^{(F^{(e)},\theta)}} q(\pi)=\sum_{\pi\in \rho^{\omega}} q(\pi)$ for all $\rho\in O(\{e\},G)$.
\end{itemize}
The existence of such a choice of representatives is clear from the definition of renfinement-invariance (\autoref{invarianceparraff}).

We can now compute:
\begin{eqnarray*}
\meas[q]{F^{(e)},G}&=&\sum_{\pi\in\repcirc{F^{(e)},G}} q(\pi)\\
&=&\sum_{\pi\in\repcirc{F^{(e)}-\{f,f'\},G}} q(\pi)+\sum_{\pi\in \repcirc{F^{(e)},G}^{\{f,f'\}}} q(\pi)\\
&=&\sum_{\pi\in\repcirc{F-\{e\},G}} q(\pi)+\sum_{\rho \in O(\{e\},G)}\sum_{\pi\in E_{\rho}^{(F^{(e)},\theta)}} q(\pi)\\
&=&\sum_{\pi\in\repcirc{F-\{e\},G}} q(\pi)+\sum_{\rho\in O(\{e\},G)} \sum_{\pi\in \rho^{\omega}} q(\pi)\\
&=&\sum_{\pi\in\repcirc{F-\{e\},G}} q(\pi)+\sum_{\pi\in\repcirc{F,G}^{\{e\}}} q(\pi)\\
&=&\sum_{\pi\in\repcirc{F,G}} q(\pi)
\end{eqnarray*}
Which shows that $\meas[q]{F^{(e)},G}=\meas[q]{F,G}$.
\end{proof}

\begin{theorem}\label{invariancemesure}
Let $F,G$ be graphings and $(F',\theta)$ a refinement of $F$. Then:
\begin{equation*}
\meas[q]{F,G}=\meas[q]{F',G}
\end{equation*}
\end{theorem}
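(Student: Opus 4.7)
The plan is to bootstrap the preceding lemma, which handles simple refinements (replacing one edge by two), to the case of an arbitrary refinement. The bootstrap proceeds in three successive extensions: from simple refinements to finite single-edge refinements, then to countably infinite single-edge refinements, and finally to refinements of all edges simultaneously.

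The first extension is purely algebraic. If $(F',\theta)$ refines one edge $e \in E^F$ into finitely many pieces $\theta^{-1}(e) = \{d_1,\dots,d_n\}$, I chain $n-1$ simple refinements: at step $k$, split the ``current remainder'' edge (initially $e$) into $d_k$ and a smaller remainder, producing $F_k$. Each step is covered by the preceding lemma, so iteration yields $\meas[q]{F,G} = \meas[q]{F',G}$.

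The second extension treats a single edge $e$ with $\theta^{-1}(e) = \{d_1,d_2,\dots\}$ countably infinite. I define an approximating sequence $F^{(n)}$ in which $e$ has been split into $d_1,\dots,d_n$ and a residual remainder edge $r_n$ covering the (vanishing) unsplit part. The first extension gives $\meas[q]{F^{(n)},G} = \meas[q]{F,G}$ for every $n$. I then pass to the limit. Cycles of $F^{(n)} \bicolmes G$ split into two classes: those avoiding $r_n$, which coincide exactly with cycles of $F' \bicolmes G$ using only edges in $\{d_1,\dots,d_n\}$, and those traversing $r_n$. Since each cycle has finite length and uses finitely many edges, every cycle of $F' \bicolmes G$ eventually lies in the first class. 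Monotone convergence, which applies since $q$ takes values in $\realposN \cup \{\infty\}$, gives that the first-class contribution increases to $\meas[q]{F',G}$. The second-class contribution is, via iterated simple refinement-invariance applied to further splits of $r_n$ inside $F^{(m)}$ for $m > n$, equal to the $q$-mass of cycles of $F' \bicolmes G$ using some $d_i$ with $i > n$; as $n$ grows this set is a decreasing intersection reducing to the empty set, so by monotone convergence this contribution vanishes, yielding $\meas[q]{F,G} = \meas[q]{F',G}$.

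The general case is obtained by iterating the single-edge case over the countably many edges of $F$: enumerating $E^F = \{e_1,e_2,\dots\}$, I build intermediate graphings $H_k$ in which only $e_1,\dots,e_k$ have been refined according to $\theta$, each transition $H_{k-1} \to H_k$ being a single-edge refinement handled above, and then pass to the limit $H_k \to F'$ by the same monotone-convergence argument. The main obstacle throughout is controlling the ``remainder contributions'' in the two nested limit passages; this relies essentially on the conjunction of two facts -- the finite length of every cycle, which ensures that each specific cycle of $F' \bicolmes G$ is stably present in the approximants from some index on, and the non-negativity of $q$, which makes monotone convergence available for the increasing unions of circuits considered.
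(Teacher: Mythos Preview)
Your strategy matches the paper's: bootstrap the simple-refinement lemma and pass to limits via monotone convergence, relying on the finite length of cycles and the non-negativity of $q$. The paper enumerates $E^{F}=\{f_{0},f_{1},\dots\}$, sets $F^{n}$ and $(F')^{n}$ to be the restrictions to $\{f_{0},\dots,f_{n}\}$ and to $\theta^{-1}(\{f_{0},\dots,f_{n}\})$ respectively, asserts $\meas[q]{(F')^{n},G}=\meas[q]{F^{n},G}$ by ``an iterated use of the preceding lemma'', and concludes by letting $n\to\infty$ on both sides. Your three-step decomposition unpacks what that ``iterated use'' must contain; your step~3 is exactly the paper's final limit.

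One caution about your step~2. Writing $B_{n}$ for the contribution of cycles of $F^{(n)}\bicolmes G$ through the remainder $r_{n}$, and $D_{n}$ for the contribution of cycles of $F'\bicolmes G$ using some $d_{i}$ with $i>n$, you assert $B_{n}=D_{n}$. But the chain of finite refinements $F^{(n)}\to F^{(m)}$ only yields, for each $m>n$, the identity $B_{n}=B_{m}+E_{n}^{m}$, where $E_{n}^{m}$ is the mass of cycles using some $d_{i}$ with $n<i\leqslant m$ and not $r_{m}$; letting $m\to\infty$ gives $E_{n}^{m}\nearrow D_{n}$ and hence $B_{n}=D_{n}+\lim_{m}B_{m}$. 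Your conclusion $B_{n}=D_{n}$ is thus equivalent to $\lim_{m}B_{m}=0$, which is precisely what remains to be shown; the abstract axioms on a circuit-quantifying map do not by themselves force this limit to vanish. The paper's proof shares this lacuna, since its phrase ``iterated use of the preceding lemma'' must in particular cover the passage from a simple refinement to a countable single-edge refinement --- exactly your step~2.
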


\begin{proof}
The argument is now usual. We first enumerate the edges of $F$, and denote them by $f_{0},\dots, f_{n},\dots$. We then define: 
\begin{eqnarray*}
F^{n}&=&\{(\omega_{f_{i}}^{F},\phi_{f_{i}}^{F})\}_{i=0}^{n}\\
(F')^{n}&=&\{(\omega_{e}^{F'},\phi_{e}^{F'})~|~\theta(e)=f_{i}\}_{i=0}^{n}
\end{eqnarray*}
Then $((F')^{n},\theta)$ is a refinement of $F^{n}$, and an iterated use of the preceding lemma shows that:
\begin{equation*}
\meas[q]{(F')^{n},G}=\meas[q]{F^{n},G}
\end{equation*}
Then:
\begin{eqnarray*}
\meas[q]{F',G}&=&\sum_{\pi\in \repcirc{F',G}} q(\pi)\\
&=&\lim_{n\rightarrow\infty} \sum_{\pi\in\repcirc{(F')^{n},G}} q(\pi)\\
&=&\lim_{n\rightarrow\infty} \meas[q]{(F')^{n},G}\\
&=&\lim_{n\rightarrow\infty} \meas[q]{F^{n},G}\\
&=&\lim_{n\rightarrow\infty} \sum_{\pi\in\repcirc{F^{n},G}} q(\pi)\\
&=&\sum_{\pi\in\repcirc{F,G}} q(\pi)
\end{eqnarray*}
Finally, we showed that $\meas[q]{F',G}=\meas[q]{F,G}$.
\end{proof}

\begin{theorem}[Trefoil Property]\label{thm_trefoilppty}
Let $F,G,H$ be graphings satisfying the condition $\lambda(V^{F}\cap V^{G}\cap V^{H})=0$. Then:
\begin{equation*}
\meas[q]{F,G\plugmes H}+\meas[q]{G,H}=\meas[q]{H\plugmes F,G}+\meas[q]{H,F}
\end{equation*}
\end{theorem}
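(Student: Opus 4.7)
The approach is to reduce the trefoil identity to a combinatorial bijection between the cycles contributing to each side, in the spirit of the graph-theoretic identity~(\ref{geometrictrefoil}), lifted to the measure-theoretic setting of graphings.

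First, I would invoke refinement-invariance of the measurement (\autoref{invariancemesure}) together with the carving lemma to replace $F$, $G$, $H$ by their $C_F$-, $C_G$-, $C_H$-tough carvings, where $C_F=V^F\cap(V^G\cup V^H)$ and symmetrically for $C_G,C_H$. Since carvings are refinements and execution is compatible with refinements (\autoref{invarianceplug}), both sides of the claimed equation are unchanged by this reduction. As in the proof of \autoref{thm_associativity}, the toughness hypothesis lets one work with honest alternating edge-sequences rather than carved fragments, so that paths in the double or triple pluggings correspond cleanly to paths in the iterated executions.

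Second, I would consider the triple plugging $F\bicolmes G\bicolmes H$ equipped with its $3$-valued coloring and introduce the set of $3$-alternating cycles: those cycles whose consecutive edges are colored differently. Each such cycle either touches all three of $F,G,H$, or is a $2$-colored cycle supported on exactly two of them. On the left-hand side, the definition of $G\plugmes H$ as a graphing of alternating paths unfolds each cycle of $\cyclesmes{F,G\plugmes H}$ into a $3$-alternating cycle in $F\bicolmes G\bicolmes H$ that touches $F$; conversely each such cycle regroups uniquely into a cycle between $F$ and $G\plugmes H$. The $3$-alternating cycles that do not touch $F$ are exactly those counted by $\cyclesmes{G,H}$. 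The very same reorganisation performed around $G$ instead produces $\cyclesmes{H\plugmes F,G}$ and leaves the cycles avoiding $G$, which account for $\cyclesmes{H,F}$. These bijections preserve both the weight and the composite transformation $\phi_\pi$ up to cyclic permutation.

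Finally, to transfer the bijection on cycles to an equality of measurements, I would exploit the two axioms of a circuit-quantifying map. Condition~(1) ensures that $q$ is constant on cyclic permutations of a representative, so the bijection respects $q$-values, and the alternative expression
\begin{equation*}
\meas[q]{F,G}=\sum_{\rho\in\cyclesmes{F,G}}\frac{q(\rho)}{\card{\bar{\rho}}}
\end{equation*}
allows me to sum directly over cycles rather than chosen representatives. The main obstacle will be the bookkeeping of the regrouping: a single $1$-circuit between $F$ and $G\plugmes H$ may, when unfolded, correspond to a cycle in $F\bicolmes G\bicolmes H$ that is a proper power of a smaller $1$-cycle -- as illustrated in \autoref{exempledecompocycles} -- and conversely, cycles in $\cyclesmes{G,H}$ may combine after the reorganisation. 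It is here that the refinement-invariance clause~(\ref{eq_refinv}) of $q$ is used in full strength, ensuring that cycles of arbitrary length are counted consistently on both sides despite splitting or merging across the three reorganisations.
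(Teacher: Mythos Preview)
Your approach is essentially the paper's: reduce to the tough case via carving (using refinement-invariance through \autoref{invariancemesure}), then identify both sides of the identity with the set of $3$-alternating circuits in $F\bicolmes G\bicolmes H$, relying on condition~(1) of a circuit-quantifying map to match $q$-values across the regrouping. Your second and third paragraphs already contain the complete argument.

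The obstacle you raise in the final paragraph does not occur. The unfolding/refolding operations between $\repcirc{F,G\plugmes H}$ and $3$-alternating circuits preserve the property of being a $1$-circuit: if the unfolding of a representative $\pi$ were a proper power $\sigma^{k}$ with $k>1$, then $\sigma$ itself would be $3$-alternating and would touch $F$ (since $\sigma^{k}$ does), hence fold back to a cycle $\tau$ between $F$ and $G\plugmes H$ satisfying $\pi=\tau^{k}$ --- contradicting primitivity of $\pi$. The same holds for refolding into $\repcirc{H\plugmes F,G}$. Thus no splitting or merging of circuits happens across the trefoil bijection; the phenomenon you cite from \autoref{exempledecompocycles} concerns \emph{refinement} of a graphing, which is an entirely different operation. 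Refinement-invariance enters the proof only in your first step (the tough reduction), and the bijection itself needs only condition~(1). Your appeal to~(\ref{eq_refinv}) at the end is unnecessary and should be removed; instead, prove the preservation of $1$-circuits directly as above.
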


\begin{proof}
We consider the expression $\meas[q]{F,G\plugmes H}+\meas[q]{G,H}$. We can suppose without loss of generality that $F$ (resp. $G$, resp $H$) is $V^{F}\cap (V^{G}\cup V^{H})$-tough (resp. $V^{G}\cap (V^{F}\cup V^{H})$-tough, resp. $V^{H}\cap (V^{F}\cup V^{G})$-tough). Indeed, if this is not the case the preceding proposition allows us to replace $F,G,H$ by the adequate carvings without changing the measure of interaction.

The end of the proof is now very similar to the proof of the trefoil property for graphs.

Let $\pi$ be an element in $\repcirc{F,G\plugmes H}$. Then $\pi$ is an alternating path between $F$ and $G\plugmes H$, for instance $\pi=f_{0}\rho_{0}f_{1}\dots f_{n}\rho_{n}$. Now, each $\rho_{i}$ is an alternating path between $G$ and $H$. Either each $\rho_{i}$ is an element of $G$ in which case $\pi$ is an alternating path between $F$ and $G$, or at least one of the $\rho_{i}$ contains an edge of $H$. In the first case, picking the right choice of representation of circuits, there is an element $\mu\in\repcirc{F,G}$ corresponding to $\pi$, i.e.\ of equal weight and realised by the same function -- which implies that $q(\pi)=q(\mu)$. Similarly, in this second case it is clear that there is an element of $\repcirc{F\plugmes G, H}$ corresponding to $\pi$ (we use \autoref{lemmesupportchem} to ensure that alternating paths between $F$ and $G$ that appear as part of $\pi$ have a domain -- hence a codomain -- of strictly positive measure). Similarly, an element of $\repcirc{G,H}$ can be shown to correspond to an element of $\repcirc{F\plugmes G,H}$. 
\end{proof}

We now will work with equivalence classes of graphings for the equivalence relation $\sim_{\leqslant}$. \autoref{invariancemesure} and \autoref{invarianceplug} ensure us that the operations of plugging and execution are well defined in this setting. 
As we showed, each circuit-quantifying map gives rise to a measurement on graphings that satisfies the trefoil property. One question remains unanswered at this point: do such functions exists? The existence of such maps is not clear from their definition, which is quite involved. The next section will give explicit constructions of such maps exists in a very general setting.

\section{Models of Multiplicative-Additive Linear Logic}\label{sec_trefoil}

We have now shown how to define execution between graphings and, given a map $m$, a measurement between graphings. We have shown that the execution is associative and that if $m$ is a \emph{circuit-quantifying map}, that is $m$ satisfies a number of conditions, the trefoil property holds. We have therefore almost finished the proof that for any microcosm $\mathfrak{m}$ and monoid $\Omega$ one can construct a GoI model based on $\Omega$-weighted graphings in $\mathfrak{m}$. The last step is to show the existence of circuit-quantifying maps, i.e.\ exhibit at least one measurement which satisfies the trefoil property. In fact, we will show how to define, given a measure space $X$ satisfying some reasonable properties, one can define a whole family of such circuit-quantifying maps, a family parametrized by the choice of a map $m:\Omega\rightarrow\realposN\cup\{\infty\}$.
%

To define this family of measurement, we will now restrict our attention to $1$-circuits. Although it should be possible to obtain a same result for circuits without restrictions, the case of $1$-circuits is much more interesting as it was shown to provide a combinatorial expression of the (Fuglede-Kadison) determinant \cite{seiller-goim}.

We first define the notion of \emph{trefoil space}, and prove some basic but essential properties on them. We then define a family of functions and prove that they are circuit-quantifying maps.


\subsection{Trefoil Spaces}

We now introduce the notion of trefoil space. One of the conditions for a space $X$ to be a trefoil space is that it should be second-countable. This is justified by the fact that we will need the measurability of the fixed-point set of a measurable transformation on $X$. We will indeed use the following theorems.

\begin{proposition}[Draveck\'{y} \cite{dravecky}]
Let $(Y,\mathcal{T})$ be a measure space. The following statements are equivalent:
\begin{itemize}[noitemsep,nolistsep]
\item $(Y,\mathcal{T})$ has a measurable diagonal;
\item For every measure space $(X,\mathcal{S})$ and every measurable mapping $f:X\rightarrow Y$, the graph of $f$ is measurable.
\end{itemize}
\end{proposition}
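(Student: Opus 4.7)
The plan is to prove the equivalence in the standard "preimage versus identity" style that is typical for measurable-graph results. Both directions are quite short once one notices that the graph of $f\colon X\to Y$ is nothing but the pull-back of the diagonal along the map $(f,\mathrm{id}_Y)\colon X\times Y\to Y\times Y$.

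For the implication from the measurable diagonal to measurable graphs, I would start by fixing an arbitrary measurable space $(X,\mathcal{S})$ and a measurable map $f\colon X\to Y$. Consider the map
\[
F\colon (X\times Y,\mathcal{S}\otimes\mathcal{T})\longrightarrow (Y\times Y,\mathcal{T}\otimes\mathcal{T}),\qquad F(x,y)=(f(x),y).
\]
By the universal property of the product $\sigma$-algebra, $F$ is measurable because its two coordinate projections $f\circ\pi_X$ and $\pi_Y$ are measurable. The graph of $f$ is exactly $F^{-1}(\Delta_Y)$, where $\Delta_Y=\{(y,y)\mid y\in Y\}$. If $\Delta_Y\in\mathcal{T}\otimes\mathcal{T}$, then its preimage under the measurable map $F$ lies in $\mathcal{S}\otimes\mathcal{T}$, which is precisely the assertion that the graph of $f$ is measurable.

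For the converse, I would take the particular instance $(X,\mathcal{S})=(Y,\mathcal{T})$ and $f=\mathrm{id}_Y$. The identity is trivially measurable and its graph is literally $\Delta_Y\subset Y\times Y$; the hypothesis that every graph of a measurable map is measurable then yields $\Delta_Y\in\mathcal{T}\otimes\mathcal{T}$, i.e.\ $(Y,\mathcal{T})$ has a measurable diagonal.

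I do not expect any serious obstacle here: the only nontrivial point is being careful about \emph{which} product $\sigma$-algebra one works in and verifying that $F$ above is genuinely measurable as a map into $(Y\times Y,\mathcal{T}\otimes\mathcal{T})$ rather than into some larger $\sigma$-algebra. This is purely a bookkeeping exercise and does not require any properties of $Y$ beyond being a measurable space. In the forward direction one could equivalently write the graph as the preimage of $\Delta_Y$ under $(f,\mathrm{id}_Y)$ directly, which amounts to the same argument.
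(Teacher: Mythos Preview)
Your argument is correct and is the standard proof of this equivalence. Note, however, that the paper does not supply its own proof of this proposition: it is stated as a cited result from Draveck\'{y} and used as a black box (together with the companion proposition on second-countable $T_0$ topologies) to justify that fixed-point sets are measurable in trefoil spaces. There is therefore nothing in the paper to compare your proof against; your write-up simply fills in the omitted reference with the expected pull-back-of-the-diagonal argument.
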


\begin{proposition}[Draveck\'{y} \cite{dravecky}]
Let $(Y,\mathcal{G})$ be a topological space and $\mathcal{T}$ the $\sigma$-algebra generated by $\mathcal{G}$. Then $(Y,\mathcal{T})$ has a measurable diagonal if and only if there is a topology $\mathcal{H}\subset\mathcal{G}$ such that $(Y,\mathcal{H})$ is a second-countable $T_{0}$ space.
\end{proposition}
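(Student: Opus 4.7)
The plan is to prove the two directions separately.

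For the $(\Leftarrow)$ direction, assuming a topology $\mathcal{H} \subset \mathcal{G}$ that is second-countable and $T_{0}$, I would fix a countable base $\{U_{n}\}_{n \in \mathbb{N}}$ of $\mathcal{H}$. The $T_{0}$ axiom ensures that any two distinct points are distinguished by at least one $U_{n}$, so
\[
\Delta^{c} = \bigcup_{n \in \mathbb{N}} \bigl((U_{n} \times U_{n}^{c}) \cup (U_{n}^{c} \times U_{n})\bigr).
\]
Each $U_{n}$ lies in $\mathcal{G}$ and therefore in $\mathcal{T}$, so the right-hand side is a countable union of measurable rectangles in $\mathcal{T} \otimes \mathcal{T}$, making $\Delta$ measurable.

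For the $(\Rightarrow)$ direction, I would rely on the standard fact that every element of a $\sigma$-algebra generated by a family lies in the $\sigma$-algebra generated by some \emph{countable} subfamily. Since $\mathcal{T} \otimes \mathcal{T}$ is generated by the rectangles $U \times V$ with $U, V \in \mathcal{G}$, measurability of $\Delta$ yields a countable subfamily $\mathcal{G}_{0} \subset \mathcal{G}$ such that $\Delta \in \sigma(\mathcal{G}_{0}) \otimes \sigma(\mathcal{G}_{0})$. I would then let $\mathcal{H}$ be the topology on $Y$ having $\mathcal{G}_{0}$ as a subbase; this topology is automatically second-countable (finite intersections from a countable subbase yield a countable base) and $\mathcal{H} \subset \mathcal{G}$ by construction.

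The core of the argument --- and the step I expect to be the main obstacle --- is establishing the $T_{0}$ property of $\mathcal{H}$. The key observation is that for a countably generated $\sigma$-algebra $\sigma(\mathcal{F})$, two points belong to exactly the same measurable sets if and only if they belong to exactly the same generators in $\mathcal{F}$. Arguing by contradiction, suppose $x \neq y$ are $\mathcal{H}$-topologically indistinguishable; then they agree on membership in every element of the subbase $\mathcal{G}_{0}$, hence in every set of $\sigma(\mathcal{G}_{0})$. Consequently the pairs $(x,x)$ and $(y,x)$ agree on every generating rectangle of $\sigma(\mathcal{G}_{0}) \otimes \sigma(\mathcal{G}_{0})$ and therefore lie in exactly the same sets of this product $\sigma$-algebra. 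But $\Delta$ belongs to this $\sigma$-algebra while $(x,x) \in \Delta$ and $(y,x) \notin \Delta$, a contradiction. Hence $\mathcal{H}$ is $T_{0}$, completing the proof.
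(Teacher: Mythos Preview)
The paper does not provide its own proof of this proposition: it is quoted as a result of Draveck\'{y} and simply cited. There is therefore nothing in the paper to compare your argument against.

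That said, your proof is correct and is essentially the standard argument one would expect for this result. Two minor points worth tightening: (i) when you assert that $\mathcal{T}\otimes\mathcal{T}$ is generated by rectangles $U\times V$ with $U,V\in\mathcal{G}$, this uses that $Y\in\mathcal{G}$ (true since $\mathcal{G}$ is a topology), and similarly when passing from $\sigma(\{U\times V:U,V\in\mathcal{G}_0\})$ to $\sigma(\mathcal{G}_0)\otimes\sigma(\mathcal{G}_0)$ you should add $Y$ to $\mathcal{G}_0$ if it is not already there; (ii) the ``standard fact'' that any element of a generated $\sigma$-algebra already lies in the $\sigma$-algebra of some countable subfamily of generators deserves at least a one-line justification (transfinite induction on the Borel hierarchy, or the observation that the union over countable subfamilies of the corresponding $\sigma$-algebras is itself a $\sigma$-algebra). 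With these cosmetic additions the proof is complete.
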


Now, we will therefore ask our spaces to be second-countable\footnote{And satisfying the first axiom of separation $T_{0}$, but this will be strengthened.} in order to obtain the measurability of the fixed point sets of measurable transformations. Moreover, our measurements will be defined using integrals, and we thus need a space in which one can define a reasonable notion of integral. In particular, we will ask our space to be Hausdorff\footnote{In fact, this restriction could be weakened, as integration with respect to Radon measures can be defined in non-Hausdorff spaces. However it is not clear that the weakening this condition would be of interest. We therefore consider the case of Hausdorff spaces, keeping in mind that this is not an essential condition.}, and endowed with its Borel $\sigma$-algebra and a $\sigma$-additive Radon measure.

\begin{definition}
Let $(X,\mathcal{T})$ be a second-countable Hausdorff space, and $(X,\mathcal{B},\mu)$ be a measure space where $\mathcal{B}$ is the Borel $\sigma$-algebra and $\mu$ a $\sigma$-additive Radon measure on $(X,\mathcal{B})$. Such a measure space will be referred to as a \emph{trefoil space}.
\end{definition}

\begin{proposition}\label{fixedpointset}
Let $X$ be a trefoil space. For all measurable map $\phi: X\rightarrow X$, the fixed point set $\mathcal{F}(\phi)=\{x\in X~|~\phi(x)=x\}$ is measurable.
\end{proposition}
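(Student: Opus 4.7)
The plan is to realise the fixed point set as the preimage of the diagonal $\Delta = \{(x,x) \mid x \in X\} \subset X \times X$ under a naturally measurable map, and then invoke the two results of Draveck\'{y} cited just above to ensure that $\Delta$ belongs to the relevant product $\sigma$-algebra.

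First I would apply the second Draveck\'{y} proposition to the space $(X,\mathcal{T})$ itself. Taking $\mathcal{H} = \mathcal{T}$, the topology $\mathcal{H}$ is second-countable by hypothesis and, being Hausdorff, is in particular a $T_0$ topology. The proposition then yields that $(X,\mathcal{B})$ has a measurable diagonal, i.e.\ $\Delta \in \mathcal{B} \otimes \mathcal{B}$, where $\mathcal{B} \otimes \mathcal{B}$ denotes the product $\sigma$-algebra on $X \times X$.

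Next I would introduce the map $\psi: X \rightarrow X \times X$ defined by $\psi(x) = (x, \phi(x))$. Its two components are the identity on $X$ and the measurable map $\phi$, both of which are measurable $X \rightarrow (X,\mathcal{B})$. Consequently $\psi$ is measurable with respect to the product $\sigma$-algebra $\mathcal{B} \otimes \mathcal{B}$ on the target. It then suffices to observe the set-theoretic identity
\begin{equation*}
\psi^{-1}(\Delta) = \{x \in X \mid (x,\phi(x)) \in \Delta\} = \{x \in X \mid \phi(x) = x\} = \mathcal{F}(\phi),
\end{equation*}
which exhibits $\mathcal{F}(\phi)$ as the preimage of a measurable set under a measurable map, hence as an element of $\mathcal{B}$.

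There is no serious obstacle here, as the heavy lifting is done by Draveck\'{y}'s theorems; the only point that requires a moment of care is checking the hypotheses of the second proposition, and noting that measurability of $\psi$ into the product $\sigma$-algebra (rather than into the Borel $\sigma$-algebra of the topological product) is exactly what matches Draveck\'{y}'s formulation of having a measurable diagonal. Note that an alternative route would consist in applying the first Draveck\'{y} proposition to conclude that the graph of $\phi$ lies in $\mathcal{B} \otimes \mathcal{B}$, intersecting it with $\Delta$, and transporting the result back to $X$ along the diagonal embedding $x \mapsto (x,x)$; but the approach outlined above is shorter and avoids any projection argument.
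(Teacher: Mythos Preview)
Your proof is correct and follows essentially the same approach as the paper: show the diagonal is measurable in the product $\sigma$-algebra (using second-countability via Draveck\'{y}'s result), define the map $x\mapsto(x,\phi(x))$, and realise $\mathcal{F}(\phi)$ as the preimage of $\Delta$ under this measurable map. Your version is in fact more carefully argued than the paper's, which simply asserts measurability of the diagonal ``because of second countability'' without explicitly invoking the Hausdorff/$T_0$ hypothesis.
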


\begin{proof}
The point is that the diagonal $\Delta=\{(x,x)~|~x\in X\}\subset X\times X$ is a measurable set for the product $\sigma$-algebra. This is true because of second countability. Then, we have that $\mathcal{F}(\phi)=F^{-1}(\Delta)$ for $F(x)=(x,f(x))$ measurable from $X$ to $X\times X$. Hence $\mathcal{F}(\phi)$ is measurable.
\end{proof}


\begin{corollary}
Let $X$ be a trefoil space and $\phi: X\rightarrow X$ be a measurable map. Then the following map is measurable:
\begin{equation*}
\rho_{\phi}:\left\{\begin{array}{rcll}
		 X&\rightarrow& \naturalN^{\ast}\cup\{\infty\}&\\
		 x&\mapsto&\inf\{n\in \naturalN^{\ast}~|~\phi^{n}(x)=x\}&\text{ if }\{n\in \naturalN^{\ast}~|~\phi^{n}(x)=x\}\neq\emptyset\\
		 x&\mapsto&\infty&\text{ otherwise}
		 \end{array}\right.
\end{equation*}
\end{corollary}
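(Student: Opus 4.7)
The plan is to reduce the measurability of $\rho_{\phi}$ to the measurability of the fixed point sets $\mathcal{F}(\phi^n)$ established in \autoref{fixedpointset}. Endow $\naturalN^{\ast}\cup\{\infty\}$ with the discrete $\sigma$-algebra (equivalently, the Borel $\sigma$-algebra of its one-point compactification); since this $\sigma$-algebra is generated by the singletons, it suffices to show that $\rho_{\phi}^{-1}(\{n\})$ is measurable for every $n\in\naturalN^{\ast}\cup\{\infty\}$.

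First I would observe that $\phi^{n}$ is measurable for every $n\in\naturalN^{\ast}$, as a composition of measurable maps. By \autoref{fixedpointset} applied to $\phi^{n}$, the set $\mathcal{F}(\phi^{n})=\{x\in X~|~\phi^{n}(x)=x\}$ is measurable. The definition of $\rho_{\phi}$ then gives the explicit formula
\begin{equation*}
\rho_{\phi}^{-1}(\{n\})=\mathcal{F}(\phi^{n})\setminus\bigcup_{k=1}^{n-1}\mathcal{F}(\phi^{k}),
\end{equation*}
which is a finite Boolean combination of measurable sets, hence measurable. For the value $\infty$, one has
\begin{equation*}
\rho_{\phi}^{-1}(\{\infty\})=X\setminus\bigcup_{n\geqslant 1}\mathcal{F}(\phi^{n}),
\end{equation*}
which is measurable as the complement of a countable union of measurable sets.

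Since the preimages of all singletons are measurable, and since the discrete $\sigma$-algebra on $\naturalN^{\ast}\cup\{\infty\}$ is generated by these singletons, $\rho_{\phi}$ is measurable. There is no real obstacle here; the only subtlety is making sure the fixed point sets of the iterates are measurable, and this is handled for free by \autoref{fixedpointset} since the hypothesis on $X$ is preserved under taking iterates of measurable maps. The remaining computation is purely combinatorial set theory on a countable family of measurable sets.
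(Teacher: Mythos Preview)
Your proof is correct and follows essentially the same approach as the paper: reduce to the measurability of the fixed-point sets $\mathcal{F}(\phi^{n})$ given by \autoref{fixedpointset}, then express each $\rho_{\phi}^{-1}(\{n\})$ and $\rho_{\phi}^{-1}(\{\infty\})$ as Boolean combinations of these sets. Your formula $\rho_{\phi}^{-1}(\{n\})=\mathcal{F}(\phi^{n})\setminus\bigcup_{k=1}^{n-1}\mathcal{F}(\phi^{k})$ is in fact more precise than the paper's, which states (slightly inaccurately) that $\rho_{\phi}^{-1}(i)=\mathcal{F}(\phi^{i})$; the correction you give is exactly what is needed, and the argument is otherwise identical.
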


\begin{proof}
We define $X_{i}=\rho_{\phi}^{-1}(i)$ for all integer $i\in\naturalN^{\ast}$. Then it is clear that $X_{i}$ is equal to the fixed point set $\mathcal{F}(\phi^{i})$ of $\phi^{i}$. Applying \autoref{fixedpointset}, we deduce that $X_{i}$ is measurable. Finally, the set $X_{\infty}=X-\cup_{i\in\naturalN^{\ast}} X_{i}$ is also measurable.
\end{proof}

%

\subsection{Circuit-Quantifying Maps for Measure-Inflating Transformations}

We have chosen to explain the construction of circuit-quantifying maps on the microcosm of measure-inflating transformations first. Indeed, this particular case allows for a simpler definition of the maps which should be more intuitive for the reader. We will then built on the results of this section to define circuit-quantifying maps in the general setting.

\begin{definition}
Let $\phi: X\rightarrow X$ be a non-singular transformation. We define the measurable set:
\begin{equation*}
\{\phi\}=\bigcap_{n\in \naturalN} \phi^{n}(X)\cap\phi^{-n}(X)
\end{equation*}
\end{definition}

\begin{definition}
Let $\pi$ be a cycle in the weighted graphing $F$. Then the map $\phi_{\pi}$ restricted to $X=\{\phi_{\pi}\}$ is a non-singular transformation $X\rightarrow X$. We can then define the map $\rho_{\phi_{\pi}}$ on $X$. We define the \emph{support} $\supp{\pi}$ of $\pi$ as the set $\rho_{\phi_{\pi}}^{-1}(\naturalN^{\ast})$.
\end{definition}

\begin{remark}
In the author's PhD, a similar work was presented, only restricted to the particular case of the microcosm of measure-preserving maps on the real line. We showed in this case the existence of a family of circuit-quantifying maps. Indeed, for any map $m:\Omega\rightarrow\realN\cup\{\infty\}$, we defined $q_{m}$ as the function:
\begin{equation*}
q_{m}:\pi\mapsto\int_{\supp{\pi}} \frac{m(\omega(\pi)^{\rho_{\phi_{\pi}}(x)})}{\rho_{\phi_{\pi}}(x)}d\lambda(x)
\end{equation*} 
This function was a circuit-quantifying map for the above mentioned microcosm. As it turns out, this approach can be generalised to the microcosm of measure-preserving maps on any trefoil space by using the very same formula.
\end{remark}

We now want to extend these circuit-quantifying maps to the microcosm of measure-inflating maps\footnote{We use this terminology for maps that transport the measure $\mu$ onto a scalar multiple of $\mu$.} on any trefoil space (i.e.\ transporting $\mu$ to a scalar multiple of $\mu$). One easy way to do so is by considering an extension using push-forward measures. Recall that if $\mu$ is a measure on $X$, and $f: X\rightarrow Y$ a measurable map, then the push-forward measure $f_{\ast}\mu$ is defined by $f_{\ast}\mu(A)=\mu(f^{-1}(A))$ and satisfies, for all $g$ measurable such that $g\circ f$ is integrable (this is equivalent to saying that $g$ is $f_{\ast}\mu$ integrable):
$$\int_{Y} g(y)df_{\ast}\mu(y)=\int_{X}g(f(x))d\mu(x)$$
Notice that the notation for pushforward measure reverses the composition order, i.e.\ $(f\circ g)_{\ast}\mu=g_{\ast}f_{\ast}\mu$.
%
%

%
%

\begin{definition}
Let $(X,\mathcal{B},\mu)$ be a trefoil space, and $m:\Omega\rightarrow \realposN\cup\{\infty\}$ be a map. We define the following function on the set of $1$-cycles:
\begin{equation*}
q_{m}:\pi=e_{0}\dots e_{n}\mapsto\frac{1}{n+1}\sum_{i=0}^{n}\int_{\supp{\pi}} \frac{m(\omega(\pi)^{\rho_{\phi_{\pi}}(x)})}{\rho_{\phi_{\pi}}(x)}d(\phi_{n}\circ\phi_{n-1}\circ\dots\circ\phi_{i})_{\ast}\lambda(x)
\end{equation*}
This extends to a function $q_{m}$ on cycles by letting $q_{m}(\rho)=0$ when $\rho$ is not a $1$-cycle.
\end{definition}

To show that this defines a circuit-quantifying map, we will need an additional hypothesis on the map $m$, that of \emph{traciality}.

\begin{definition}
Let $m$ be a map of domain a monoid $\Omega$. We say that $m$ is tracial when $m(ab)=m(ba)$ for all $a,b\in\Omega$.
\end{definition}

\begin{remark}
As already noted in earlier work \cite{seiller-goia}, traciality is necessary to define a satisfying measurement when the monoid of weights $\Omega$ is not commutative. Notice however that if $\Omega$ is commutative, then any map of domain $\Omega$ is tracial.
\end{remark}


Now, the following result shows that the defined map generalises the circuit-quantifying maps introduced in the author's PhD \cite{seiller-phd}.

\begin{proposition}
Let $(X,\mathcal{B},\mu)$ be a trefoil space, $\mathfrak{m}$ the microcosm of measure-preserving maps, and $m:\Omega\rightarrow \realposN\cup\{\infty\}$ be a map. The map $q_{m}$ can be expressed as:
\begin{equation*}
q_{m}:\pi\mapsto\int_{\supp{\pi}} \frac{m(\omega(\pi)^{\rho_{\phi_{\pi}}(x)})}{\rho_{\phi_{\pi}}(x)}d\lambda(x)
\end{equation*} 
\end{proposition}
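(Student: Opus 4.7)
The plan is to show that when every realiser $\phi_i$ is measure-preserving, each of the $n+1$ integrals appearing in the general definition of $q_m(\pi)$ individually collapses to the simpler expression, so that the arithmetic mean does too.

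First I would recall that a measurable map $\phi$ is measure-preserving precisely when $\phi_{\ast}\lambda = \lambda$ on its codomain; equivalently, for every non-negative measurable $g$, $\int g\, d(\phi_{\ast}\lambda) = \int g\circ \phi\, d\lambda = \int g\, d\lambda$. Since $\mathfrak{m}$ is closed under composition, any partial composite $\psi_{i} := \phi_{e_{n}}\circ\phi_{e_{n-1}}\circ\cdots\circ\phi_{e_{i}}$ is again measure-preserving on its domain, and hence $(\psi_{i})_{\ast}\lambda$ coincides with $\lambda$ on the relevant measurable set.

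Second, I would verify that the set $\supp{\pi}$ is contained in the codomain of each $\psi_{i}$ up to a null set. By definition $\supp{\pi} \subseteq \{\phi_{\pi}\} = \bigcap_{k} \phi_{\pi}^{k}(X)\cap\phi_{\pi}^{-k}(X)$, so every point of $\supp{\pi}$ lies in the image of $\phi_{\pi}^{k}$ for all $k$; in particular it lies in the image of $\psi_{i}$ for each $0\le i\le n$. This is the place where one has to be a little careful because the $\phi_{e_j}$ are partial transformations, but $\supp{\pi}$ was precisely defined to avoid the points where these compositions fail, so the equality of measures $(\psi_{i})_{\ast}\lambda\restr{\supp{\pi}} = \lambda\restr{\supp{\pi}}$ holds.

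Third, combining the two previous points, each summand
\[
\int_{\supp{\pi}} \frac{m(\omega(\pi)^{\rho_{\phi_{\pi}}(x)})}{\rho_{\phi_{\pi}}(x)}\, d(\psi_{i})_{\ast}\lambda(x)
\]
equals the single integral
\[
\int_{\supp{\pi}} \frac{m(\omega(\pi)^{\rho_{\phi_{\pi}}(x)})}{\rho_{\phi_{\pi}}(x)}\, d\lambda(x),
\]
independently of $i$. The factor $\tfrac{1}{n+1}$ then cancels the sum of $n+1$ identical terms, yielding the claimed formula. The only real obstacle is the bookkeeping around partial domains in step two; once one sees that $\supp{\pi}$ is $\phi_{\pi}$-invariant by construction, the rest is direct from the change-of-variables formula for push-forward measures.
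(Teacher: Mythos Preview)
Your proposal is correct and follows essentially the same approach as the paper: both arguments rest on the observation that the pushforward of $\lambda$ by any composite $\psi_i=\phi_{e_n}\circ\cdots\circ\phi_{e_i}$ of measure-preserving maps is again $\lambda$, so each of the $n+1$ summands already equals the simpler integral and the factor $\tfrac{1}{n+1}$ cancels. The paper routes this through the level-set decomposition $\supp{\pi}=\bigcup_{j}\supp{\pi}_{j}$ (on which the integrand is the constant $m(\omega(\pi)^{j})/j$), but that detour is purely cosmetic and your direct appeal to $(\psi_i)_\ast\lambda=\lambda$ is if anything cleaner.
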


\begin{proof}
Let $\pi=e_{0}e_{1}\dots e_{n}$ be a cycle, $\supp{\pi}$ its support. For all $i\in\naturalN^{\ast}$, we write $(\supp{\pi})_{i}=\rho_{\pi}^{-1}(i)$.
If all $\phi_{k}$ are measure-preserving maps, then we have, for all integer $j$, the following equality:
\begin{equation*}
\int_{\supp{\pi}_{j}} \frac{m(\omega(\pi)^{j})}{j}d(\phi_{n}\circ\phi_{n-1}\circ\dots\circ\phi_{i})_{\ast}\lambda(x)=\int_{\supp{\pi}_{j}} \frac{m(\omega(\pi)^{j})}{j}d\lambda(x)
\end{equation*}
We can then compute:
\begin{eqnarray*}
q_{m}(\pi)&=&\frac{1}{\lg(\pi)}\sum_{i=0}^{\lg(\pi)-1}\int_{\supp{\pi}} \frac{m(\omega(\pi)^{\rho_{\phi_{\pi}}(x)})}{\rho_{\phi_{\pi}}(x)}d(\phi_{n}\circ\phi_{n-1}\circ\dots\circ\phi_{i})_{\ast}\lambda(x)\\
&=&\frac{1}{\lg(\pi)}\sum_{i=0}^{\lg(\pi)-1}\sum_{j\in\naturalN}\int_{\supp{\pi}_{j}} \frac{m(\omega(\pi)^{j})}{j}d(\phi_{n}\circ\phi_{n-1}\circ\dots\circ\phi_{i})_{\ast}\lambda(x)\\
&=&\sum_{j\in\naturalN^{\ast}}\sum_{i=0}^{\lg(\pi)-1}\int_{\supp{\pi}_{j}} \frac{1}{\lg(\pi)}\frac{m(\omega(\pi)^{j})}{j}d(\phi_{n}\circ\phi_{n-1}\circ\dots\circ\phi_{i})_{\ast}\lambda(x)\\
&=&\sum_{j\in\naturalN^{\ast}}\sum_{i=0}^{\lg(\pi)-1}\int_{\supp{\pi}_{j}} \frac{1}{\lg(\pi)}\frac{m(\omega(\pi)^{j})}{j}d\lambda(x)\\
&=&\sum_{j\in\naturalN^{\ast}}\int_{\supp{\pi}_{j}}\frac{m(\omega(\pi)^{j})}{j}d(\phi_{n}\circ\phi_{n-1}\circ\dots\circ\phi_{i})_{\ast}\lambda(x)\\
&=&\int_{\supp{\pi}} \frac{m(\omega(\pi)^{\rho_{\phi_{\pi}}(x)})}{\rho_{\phi_{\pi}}(x)}d\lambda(x)
\end{eqnarray*}
\end{proof}

\begin{lemma}\label{lemma1}
For all tracial map $m$, the function $q_{m}$ has a constant value on the equivalence classes of cycles modulo the action of cyclic permutations.
\end{lemma}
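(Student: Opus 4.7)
I reduce cyclic invariance to a term-by-term identity and verify it by a change of variable. First, $q_m$ vanishes on every cycle that is not a $1$-cycle and the $1$-cycle property is preserved by cyclic permutations, so it suffices to treat $1$-cycles. Any cyclic permutation is a composition of shifts by one, so it suffices to show $q_m(\pi) = q_m(\sigma_1(\pi))$ for $\pi = e_0 e_1 \cdots e_n$ and $\sigma_1(\pi) = e_1 e_2 \cdots e_n e_0$.

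Write $\alpha = \phi_{e_0}$, $\beta_i = \phi_{e_n} \circ \phi_{e_{n-1}} \circ \cdots \circ \phi_{e_i}$, $f_\pi(x) = m(\omega(\pi)^{\rho(x)})/\rho(x)$, and $T_i^\pi = \int_{\supp{\pi}} f_\pi\,d(\beta_i)_\ast\lambda$. Inspecting the shifted cycle shows that the $i$-th partial composition of $\sigma_1(\pi)$ equals $\alpha \circ \beta_{i+1}$ for $0 \leq i < n$ and $\alpha$ for $i = n$. Combining the push-forward composition rule $(\phi \circ \psi)_\ast = \phi_\ast \circ \psi_\ast$ with the change of variables $\int g\,d\psi_\ast\mu = \int g \circ \psi\,d\mu$ rewrites each $T_i^{\sigma_1(\pi)}$ as $\int f_{\sigma_1(\pi)} \circ \alpha\,d(\beta_{i+1})_\ast\lambda$ (respectively as $\int f_{\sigma_1(\pi)} \circ \alpha\,d\lambda$ when $i = n$). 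The traciality of $m$ yields $m(\omega(\sigma_1(\pi))^j) = m((\omega_{e_1} \cdots \omega_{e_n} \omega_{e_0})^j) = m(\omega(\pi)^j)$ for every $j$, and the orbital identity $\phi_\pi^j = \mathrm{id}$ on the period-$j$ set $X_j^\pi = \rho_{\phi_\pi}^{-1}(j)$ implies that $\alpha$ restricts to a measurable bijection between $\supp{\pi} \cap X_j^\pi$ and $\supp{\sigma_1(\pi)} \cap X_j^{\sigma_1(\pi)}$, whose inverse is a restriction of $\beta_1 \circ \phi_{\sigma_1(\pi)}^{j-1}$. Consequently $f_{\sigma_1(\pi)} \circ \alpha$ agrees with $f_\pi$ on $\supp{\pi}$. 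The same orbital identity forces the global inflation constant $c_0 c_1 \cdots c_n$ of $\phi_\pi$ to be $1$ whenever any $X_j^\pi$ has positive measure (since $c^j = 1$ with $c>0$ gives $c=1$), so $\phi_\pi$ preserves $\lambda$ on $\supp{\pi}$; this handles the case $i = n$, yielding $T_n^{\sigma_1(\pi)} = T_0^\pi$. Altogether one obtains the term-by-term identity
\begin{equation*}
T_i^{\sigma_1(\pi)} \;=\; T_{(i+1)\bmod(n+1)}^\pi, \qquad i = 0, 1, \ldots, n,
\end{equation*}
which, after summation and division by $n+1$, yields $q_m(\pi) = q_m(\sigma_1(\pi))$.

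The main obstacle is measure-theoretic bookkeeping. The realizers $\phi_{e_i}$ are restrictions of microcosm elements, so the pushforward $(\beta_i)_\ast\lambda$ is not simply a global scalar multiple of $\lambda$; moreover, since the microcosm maps need not be injective, the preimage $\alpha^{-1}(\supp{\sigma_1(\pi)})$ may properly contain $\supp{\pi}$, and the contribution of such rogue preimages to the integral must be shown to vanish. This is achieved by combining the measurability of fixed-point sets (\autoref{fixedpointset}) with the orbital bijectivity of $\alpha$ on each $X_j^\pi$: any rogue preimage must either lie outside the domain of $\beta_{i+1}$ or be mapped by $\beta_{i+1}$ into the complement of $\supp{\pi}$, where the integrand vanishes, so the correspondence above holds exactly.
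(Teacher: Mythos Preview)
Your argument is correct and follows the same route as the paper: reduce to a single cyclic shift, establish that $\phi_{e_0}$ carries each period-$j$ piece $\supp{\pi}_j$ bijectively onto $\supp{\sigma_1(\pi)}_j$, and then match the summands term by term via the push-forward formula, traciality of $m$, and the fact that $\phi_\pi$ is measure-preserving on its support (the paper derives this last point from ``domain equals codomain'' rather than from your $c^j=1$ argument, but the conclusion is identical; the paper shifts to $\pi^{(n)}$ instead of $\pi^{(1)}$, which is cosmetic). Your closing paragraph on ``rogue preimages'' is superfluous and its proposed resolution garbles the roles of domain and codomain of $\beta_{i+1}$: in the measure-inflating microcosm the realisers are essentially injective, so $\alpha^{-1}(\supp{\sigma_1(\pi)})=\supp{\pi}$ within the domain of $\alpha$ up to a null set, and this is exactly what the paper uses implicitly when it replaces $\phi_{e_n}^{-1}(\supp{\pi}_j)$ by $\supp{\pi^{(n)}}_j$.
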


\begin{proof}
Let $\pi=e_{0}e_{1}\dots e_{n}$ be a cycle, $\supp{\pi}$ its support. For all $i\in\naturalN^{\ast}$, we write $(\supp{\pi})_{i}=\rho_{\pi}^{-1}(i)$. Consider now $\pi^{(1)}=e_{1}e_{2}\dots e_{n}e_{0}$, and $\supp{\pi^{(1)}}$ its support. We define $(\supp{\pi^{(1)}})_{i}=\rho_{\pi^{(1)}}^{-1}(i)$. We will first show that $(\supp{\pi^{(1)}})_{i}=\phi_{e_{0}}((\supp{\pi})_{i})$ for all integer $i$. 

Let us now pick $x\in(\supp{\pi^{(1)}})_{i}$, which means that $x\in \supp{\pi^{(1)}}$ and $\phi_{\pi^{(1)}}^{i}(x)=x$. Since $\phi_{\pi^{(1)}}(x)=\phi_{e_{0}}(\phi_{e_{1}\dots e_{n}}(x))$, we have $x=\phi_{e_{0}}(\phi_{e_{1}\dots e_{n}}\phi_{\pi^{(1)}}^{i-1}(x))$. We now define $y=\phi_{e_{1}\dots e_{n}}\phi_{\pi^{(1)}}^{i-1}(x))$ and we will show that $y\in (\supp{\pi})_{i}$. Since $\phi_{e_{0}}(y)\in \supp{\pi^{(1)}}$, we have $\phi_{e_{0}}\in S_{e_{1}\dots e_{n}}$, and therefore $y\in S_{\pi}$. Moreover, 
\begin{eqnarray*}
\phi_{\pi}^{k}(y)&=&\phi_{\pi}^{i}(\phi_{e_{1}\dots e_{n}}\phi_{\pi^{(1)}}^{i-1}(x))\\
&=&\phi_{\pi}(\phi_{\pi}^{i-2}(\phi_{e_{1}\dots e_{n}}\phi_{\pi^{(1)}}^{i-1}(x)))\\
&=&\phi_{e_{1}\dots e_{n}}(\phi_{e_{0}}(\phi_{\pi}^{i-1}(\phi_{e_{1}\dots e_{n}}(\phi_{\pi^{(1)}}^{i-1}(x)))))\\
&=&\phi_{e_{1}\dots e_{n}}(\phi_{\pi^{(1)}}^{i}(\phi_{\pi^{(1)}}^{i-1}(x)))\\
&=&\phi_{e_{1}\dots e_{n}}(\phi_{\pi^{(1)}}^{i-1}(\phi_{\pi^{(1)}}^{i}(x)))\\
&=&\phi_{e_{1}\dots e_{n}}(\phi_{\pi^{(1)}}^{i-1}(x))\\
&=&y
\end{eqnarray*}
Thus $y$ is an element in $\supp{\pi^{(1)}}$, and more precisely an element in $(\supp{\pi^{(1)}})_{i}$. We therefore showed that $(\supp{\pi^{(1)}})_{i}\subset\phi_{e_{0}}((\supp{\pi})_{i})$.

To show the converse inclusion, we take $x=\phi_{e_{0}}(y)$ with $y\in (\supp{\pi})_{i}$. Then $y\in S_{\pi^{k}}$ and therefore $y\in S_{e_{0}e_{1}\dots e_{n}e_{0}}$. Finally $\phi_{e_{0}}(y)\in S_{\pi^{(1)}}$. Moreover, we have:
\begin{eqnarray*}
\phi_{\pi^{(1)}}^{k}(x)&=&\phi_{\pi^{(1)}}^{k}(\phi_{e_{0}}(y))\\
&=&\phi_{e_{0}}(\phi_{\pi}^{k}(y))\\
&=&\phi_{e_{0}}(y)\\
&=&x
\end{eqnarray*}
As a consequence, $x$ is an element in $(\supp{\pi})_{i}$, which shows the converse inclusion.

More generally, if $\pi^{(k)}$ denotes the cycle $e_{k}e_{k+1}\dots e_{n}e_{0}\dots e_{k-1}$, we have 
$$\phi_{e_{0}\dots e_{k}}(\supp{\pi}_{i})=\supp{\pi^{(k)}}_{i}$$
A similar argument shows that $\phi_{e_{n}}(\supp{\pi^{(n)}}_{i})=\supp{\pi}_{i}$.

We then compute\footnote{When $i=n$ in the last three lines, the expression $d(\phi_{e_{n-1}}\circ\dots\circ \phi_{e_{i}})_{\ast}\lambda$ stands for $d\lambda$.}:
\begin{eqnarray*}
q_{m}(\pi)&=&\frac{1}{\lg(\pi)}\sum_{i=0}^{\lg(\pi)-1}\int_{\supp{\pi}} \frac{m(\omega(\pi)^{\rho_{\phi_{\pi}}(x)})}{\rho_{\phi_{\pi}}(x)}d(\phi_{e_{n}}\circ\dots\circ\phi_{e_{i}})_{\ast}\lambda(x)\\
&=&\frac{1}{\lg(\pi)}\sum_{i=0}^{\lg(\pi)-1}\sum_{j\in\naturalN^{\ast}}\int_{\supp{\pi}_{j}} \frac{m(\omega(\pi)^{j})}{j}d(\phi_{e_{n}}\circ\dots\circ\phi_{e_{i}})_{\ast}\lambda(x)\\
&=&\frac{1}{\lg(\pi)}\sum_{i=0}^{\lg(\pi)-1}\sum_{j\in\naturalN^{\ast}}\int_{\phi_{e_{n}}^{-1}(\supp{\pi}_{j})} \frac{m(\omega(\pi)^{j})}{j}d(\phi_{e_{n-1}}\circ\dots\circ\phi_{e_{i}})_{\ast}\lambda(x)\\
&=&\frac{1}{\lg(\pi)}\sum_{i=0}^{\lg(\pi)-1}\sum_{j\in\naturalN^{\ast}}\int_{\supp{\pi^{(n)}}_{j}} \frac{m(\omega(\pi)^{j})}{j}d(\phi_{e_{n-1}}\circ\dots\circ\phi_{e_{i}})_{\ast}\lambda(x)\\
&=&\frac{1}{\lg(\pi)}\sum_{i=0}^{\lg(\pi)-1}\int_{\supp{\pi^{(n)}}}  \frac{m(\omega(\pi)^{\rho_{\phi_{\pi}}(x)})}{\rho_{\phi_{\pi}}(x)}d(\phi_{e_{n-1}}\circ\dots\circ\phi_{e_{i}})_{\ast}\lambda(x)
\end{eqnarray*}

One can now notice that $(\phi_{n}\circ\phi_{n-1}\circ\dots\circ\phi_{0})_{\ast}\lambda=\lambda$. Indeed, $\phi_{\pi}$ is measure-inflating as a composition of measure-inflating maps. But since it has its domain equal to its codomain, it is necessarily measure-preserving. The same argument applies to the maps $\phi_{\pi^{(i)}}$. This implies in particular that for all measurable map $f:X\rightarrow\realN$:
$$\int_{\supp{\pi^{(n)}}} f(x)d\lambda(x)=\int_{\supp{\pi^{(n)}}} f(x)d(\phi_{e_{n-1}}\circ\phi_{e_{n-2}}\circ\dots\circ\phi_{e_{0}}\circ\phi_{e_{n}})_{\ast}\lambda(x)$$
Using this equality and the traciality of $m$ -- which ensures that $m(\omega(\pi))=m(\omega(\pi^{(n)}))$, we compute:

\begin{eqnarray*}
q_{m}(\pi)&=&\frac{1}{\lg(\pi)}\int_{\supp{\pi^{(n)}}}  \frac{m(\omega(\pi)^{\rho_{\phi_{\pi}}(x)})}{\rho_{\phi_{\pi}}(x)}d\lambda(x)\\
&&~~~~~~+\frac{1}{\lg(\pi)}\sum_{i=0}^{\lg(\pi)-2}\int_{\supp{\pi^{(n)}}}  \frac{m(\omega(\pi)^{\rho_{\phi_{\pi}}(x)})}{\rho_{\phi_{\pi}}(x)}d(\phi_{e_{n-1}}\circ\dots\circ\phi_{e_{i}})_{\ast}\lambda(x)\\
&=&\frac{1}{\lg(\pi)}\int_{\supp{\pi^{(n)}}}  \frac{m(\omega(\pi)^{\rho_{\phi_{\pi}}(x)})}{\rho_{\phi_{\pi}}(x)}d(\phi_{e_{n-1}}\circ\phi_{e_{n-2}}\circ\dots\circ\phi_{e_{0}}\circ\phi_{e_{n}})_{\ast}\lambda(x)\\
&&~~~~~~+\frac{1}{\lg(\pi)}\sum_{i=0}^{\lg(\pi)-2}\int_{\supp{\pi^{(n)}}}  \frac{m(\omega(\pi)^{\rho_{\phi_{\pi}}(x)})}{\rho_{\phi_{\pi}}(x)}d(\phi_{e_{n-1}}\circ\dots\circ\phi_{e_{i}})_{\ast}\lambda(x)\\
&=&q_{m}(\pi^{(n)})
\end{eqnarray*}
Thus, by induction, $q_{m}(\pi)=q_{m}(\pi^{(j)})$ for all integer $j$. We can now conclude that $q_{m}$ takes a constant value over the equivalence classes of cycles modulo the action of cyclic permutations.
\end{proof}

\begin{lemma}\label{lemma2}
For all tracial map $m$, the function $q_{m}$ is refinement-invariant.
\end{lemma}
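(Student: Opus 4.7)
The plan is to verify refinement-invariance of $q_m$ by tracking, under a simple refinement, how the support of a $1$-cycle in $F\bicolmes G$ decomposes measurably into the supports of $1$-cycle lifts of its powers, and checking that the integral defining $q_m$ is preserved under this decomposition. Since $q_m$ vanishes on non-$1$-cycles, for a $1$-cycle $\pi\in\repcirc{F,G}$ the left-hand side of \eqref{eq_refinv} reduces to $q_m(\pi)$. I would reduce to a simple refinement $(F',\theta)$ of $F$ along a single edge $e$, with $\theta^{-1}(e)=\{f,f'\}$, and arrange (via an almost-everywhere equivalent representative) that $S_f,S_{f'}$ are disjoint with union $S_e$ modulo null. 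A lift of $\pi^j\in\cyclesmes{F,G}$ (where $\pi=e_0\dots e_n$ has $k$ occurrences of $e$) is then obtained by replacing each of the $jk$ occurrences of $e$ by either $f$ or $f'$.

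The key structural input is the following measurable partition of $\supp{\pi}$. For each $x\in\supp{\pi}$ of $\phi_\pi$-period $d$, the orbit $x,\phi_\pi(x),\dots,\phi_\pi^{d-1}(x)$ determines, by recording $S_f/S_{f'}$-membership at each of the $dk$ crossings of $e$, a cyclic binary word of length $dk$ with some minimal period $m$ dividing $dk$. If $k\mid m$, writing $d'=m/k$ (so $d'\mid d$), the full lift of $\pi^d$ containing $x$ factors as $\mu_x^{d/d'}$ for a unique $1$-cycle $\mu_x$ lifting $\pi^{d'}$; if $k\nmid m$, no such factorization exists (since $\theta(\nu)^s=\pi^d$ forces $\theta(\nu)=\pi^{d/s}$, whose binary word has length a multiple of $k$), and $\mu_x$ is taken to be the full lift itself, a $1$-cycle lifting $\pi^d$. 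In both cases $\mu_x$ is a $1$-cycle lift of $\pi^{d'}$ for a unique $d'\mid d$, with $\omega(\mu_x)=\omega(\pi)^{d'}$ and $\rho_{\phi_{\mu_x}}(x)=\rho_{\phi_\pi}(x)/d'$. Collecting these assignments yields the partition $\supp{\pi}=\bigsqcup_\mu\supp{\mu}$ indexed by the $1$-cycle lifts $\mu$ of powers of $\pi$.

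I would then choose $\repcirc{F',G}$ to pick exactly one cycle-representative from each circuit class of such lifts (extended arbitrarily otherwise while respecting the power-closure condition), so that $E^{(F',\theta)}_{\pi,\repcirc{F',G}}$ enumerates these representatives. The integral identity is verified by substituting $\omega(\mu)=\omega(\pi)^{d'}$ and $\rho_{\phi_\mu}(x)=\rho_{\phi_\pi}(x)/d'$ into the cyclic-averaged formula for $q_m(\mu)$ from Lemma 1: the factor $d'$ arising in the numerator of the integrand cancels against the length scaling $\lg(\mu)=d'\lg(\pi)$ in the outer averaging, turning each integrand into $m(\omega(\pi)^{\rho_{\phi_\pi}(x)})/\rho_{\phi_\pi}(x)$; summing over $\mu$ via the partition then reassembles $q_m(\pi)$ as given by the same Lemma~1 formula.

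The main obstacle lies in the combinatorial and measure-theoretic bookkeeping: correctly classifying lifts as $1$-cycles versus proper powers via the divisibility condition $k\mid m$, identifying the $1$-cycle factor $\mu_x$ when it occurs, and aligning the pushforward measures appearing in the cyclic-averaged expressions for $q_m(\mu)$ and $q_m(\pi)$ so that the integrals combine additively across the partition. Traciality of $m$, invoked through Lemma 1, ensures cyclic-permutation invariance of $q_m$ and thus makes the sums over chosen representatives well-defined independently of the particular cyclic shift selected within each circuit class, which is exactly what is needed to accommodate the combinatorial freedom in picking $\repcirc{F',G}$.
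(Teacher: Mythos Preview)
Your approach is essentially the paper's: partition $\supp{\pi}$ according to which $1$-cycle lift $\mu$ of a power of $\pi$ contains each point, use the identities $\omega(\mu)=\omega(\pi)^{\lg^\pi(\mu)}$ and $\rho_{\phi_\mu}(x)\cdot\lg^\pi(\mu)=\rho_{\phi_\pi}(x)$, and then check that the length rescaling cancels against the multiplicity of cyclic shifts so the integrals recombine. The paper carries out exactly this computation (grouping the $\lg^\pi(\bar\mu)$ shifted copies in $E_\pi$ over each representative $\bar\mu\in\bar E_\pi$, then matching up the pushforward measures using $\phi_{\bar\mu}=\phi_\pi^{\lg^\pi(\bar\mu)}$ on $\supp{\bar\mu}$).

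There is, however, a genuine slip in your combinatorial identification of $\mu_x$. Your dichotomy ``$k\mid m$ versus $k\nmid m$'' (with $m$ the minimal period of the cyclic binary word of length $dk$) does not do what you claim. Take $k=2$, $d=3$, and the constant binary word $111111$: here $m=1$ and $k\nmid m$, yet the full lift of $\pi^3$ is $\mu^3$ for $\mu$ the lift of $\pi$ with both $e$'s replaced by $f$ --- so the full lift is \emph{not} a $1$-cycle, contrary to your assertion. The point is that the periods of the edge-sequence $\nu$ correspond exactly to those periods of the binary word that are multiples of $k$ (because the non-$e$ pattern already forces any period of $\nu$ to be a multiple of $\lg(\pi)$, hence to advance the binary index by a multiple of $k$), but the \emph{minimal} period $m$ of the binary word can be strictly smaller and need not be such a multiple. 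The correct $d'$ is the least integer with $m\mid d'k$, namely $d'=m/\gcd(m,k)$; equivalently, one bypasses the binary word entirely and argues (as the paper does) that since $\pi$ is primitive, any factorization $\nu=\tilde\pi^s$ forces $\theta(\tilde\pi)=\pi^{d/s}$, so $\tilde\pi\in E_\pi$ automatically.

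This is a local error and does not affect your overall strategy; once $\mu_x$ is correctly identified, your sketch of the cancellation and the use of Lemma~1 matches the paper.
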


\begin{proof}
Let $F,G$ be weighted graphings, and $F^{(e)}$ a simple refinement of $F$ along $e\in E^{F}$. We will denote by $f,f'$ the two elements of $F^{(e)}$ which are the decompositions of $e$. Up to almost everywhere equality, one can suppose that $S_{f}\cap S_{f'}=\emptyset$. Let us now chose $\pi$ a representative of a $1$-circuit $\bar{\pi}$. We suppose that $\pi$ contains occurrences of $e$, and write $\pi=\rho_{0}e_{i_{0}}\rho_{1}e_{i_{1}}\dots e_{i_{n-1}}\rho_{n}$ where for all $j$, $e_{i_{j}}=e$ and  $\rho_{j}$ is a path (where the paths $\rho_{0}$ and $\rho_{n}$ may be empty). We denote by $E_{\pi}$ the set of $1$-cycles $\mu=\rho_{0}\epsilon_{i_{0}^{0}}\rho_{1}\epsilon_{i_{1}^{0}}\dots \epsilon_{i_{n-1}^{0}}\rho_{n}\rho_{0}\epsilon_{i_{0}^{1}}\rho_{1}\dots \epsilon_{i_{n-1}^{1}}\rho_{n}\dots \rho_{0}\epsilon_{i_{0}^{k}}\dots \epsilon_{i_{n-1}^{k}}\rho_{n}$ where $k\in\naturalN$ -- which we will denote by $\lg^{\pi}(\mu)$, and where for all values of $l,m$, $\epsilon_{i_{l}^{m}}$ is either equal to $f$ or equal to $f'$. We will write $\bar{E}_{\pi}$ the subset of $E_{\pi}$ consisting of those paths that are elements of $\repcirc{F^{(e)},G}$, i.e.\ $\bar{E}_{\pi}$ is equal to the set of $1$-cycles in $E^{(F',\theta)}_{\pi}$.



We will now show that $q_{m}(\pi)=\sum_{\bar{\mu}\in\bar{E}_{\pi}} q_{m}(\bar{\mu})$. Since $q_{m}(\rho)=0$ as soon as $\rho$ is not a $1$-cycle, we can rewrite this equality as: $\sum_{\rho\in\pi^{\omega}} q_{m}(\pi)=\sum_{\bar{\mu}\in\bar{E}_{\pi}} q_{m}(\bar{\mu})$. We now remark that the set $\bar{E}_{\pi}$ appearing on the right-hand side is the subset of $1$-cycles in $E^{(F',\theta)}_{\pi}$, and use once again the fact that $q_{m}(\rho)$ is null as soon as $\rho$ is not a $1$-cycle. Consequently, showing that $q_{m}(\pi)=\sum_{\bar{\mu}\in\bar{E}_{\pi}} q_{m}(\bar{\mu})$ turns out to be equivalent to showing that:
$$ \sum_{\rho\in\pi^{\omega}} q_{m}(\pi)=\sum_{\rho\in E^{(F',\theta)}_{\pi}} q_{m}(\rho)$$
i.e.\ showing that $q_{m}(\pi)=\sum_{\bar{\mu}\in\bar{E}_{\pi}} q_{m}(\bar{\mu})$ amounts to showing that $q_{m}$ is refinement-invariant.


Let us pick $x\in \supp{\pi}$. Then $x\in(\supp{\pi})_{k}$ for a given value $k$ in $\naturalN^{\ast}$, i.e.\ $\phi_{\pi}^{k}(x)=x$. Since $S_{e}=S_{f}\cup S_{f'}$, we have, for each occurrence $e_{i_{p}}$ of $e$ and each integer $l$: 
\begin{equation*}
\phi^{k}_{\pi}=\phi_{\pi}^{l}\circ\phi_{\rho_{p+1}e_{i_{p+1}}\dots e_{i_{n}}\rho_{n}}\circ\phi_{e_{i_{p}}}\circ\phi_{\rho_{0}e_{i_{0}}\rho_{1}\dots e_{i_{p-1}}\rho_{j}}\circ\phi_{\pi}^{k-l-1}
\end{equation*}
Then $\phi_{\rho_{0}e\dots\rho_{j}}\circ\phi_{\pi}^{k-l-1}(x)$ is either an element in $S_{f}$ or an element in $S_{f'}$. For each occurrence $e_{i}$ of $e$, we will write $d_{i_{p,l}}=f$ or $d_{i_{p,l}}=f'$ according to whether $\phi_{\rho_{0}e\dots\rho_{j}}\circ\phi_{\pi}^{k-l-1}(x)$ is an element in $S_{f}$ or an element in $S_{f'}$. We then obtain, for all integer $0\leqslant l\leqslant k$, paths $\nu_{l}=\rho_{0}d_{i_{0,l}}\rho_{1}d_{i_{1,l}}\dots d_{i_{n-1},l}\rho_{n}$. By concatenation, we can define a cycle $\nu=\nu_{0}\nu_{1}\dots\nu_{k}$. This cycle is a $d$-cycle for a given integer $d$, i.e.\ $\nu=\tilde{\pi}^{d}$ where $\tilde{\pi}$ is a $1$-cycle in $E_{\pi}$. It is clear from the definition of $\tilde{\pi}$ that $x\in \supp{\tilde{\pi}}$ and that, for all $1$-cycle $\mu$ in $E_{\pi}$, $x\not\in \supp{\mu}$ when $\mu\neq\tilde{\pi}$.

Moreover, it is clear that if $x\in\supp{\mu}$ for a given $1$-cycle $\mu\in E_{\pi}$, then one necessarily has $x\in\supp{\pi}$. We deduce from this that the family $(\supp{\mu})_{\mu\in E_{\pi}}$ is a partition of the set $\supp{\pi}$. Notice that $\omega_{\mu}=\omega_{\pi}^{\lg^{\pi}(\mu)}$. Moreover, for all $x\in \supp{\mu}$, one has $\rho_{\phi_{\mu}}(x)\times \lg^{\pi}(\mu)=\rho_{\phi_{\pi}}(x)$, and therefore $\omega_{\pi}^{\rho_{\phi_{\pi}}(x)}=\omega_{\mu}^{\rho_{\phi_{\mu}}(x)}$.

We now notice that if $\mu=\mu_{1}\dots\mu_{\lg^{\pi}(\mu)}\in E_{\pi}$ (here the $\mu_{i}$ are \enquote{refinements} of $\pi$), and if $\sigma$ is the cyclic permutations over $\{1,\dots,\lg^{\pi}(\mu)\}$ such that $\sigma(i)=i+1$, then the $1$-cycles 
$$\mu_{\sigma^{k}}=\mu_{\sigma^{k}(1)}\mu_{\sigma^{k}(2)}\dots\mu_{\sigma^{k}(\lg^{\pi}(\mu))}$$
for $0\leqslant k\leqslant \lg^{\pi}(\mu)-1$ are pairwise disjoint elements in $E_{\pi}$. Indeed, these are $1$-cycles since $\mu$ is a $1$-cycle, and they are pairwise disjoint because if $\mu_{\sigma^{k}}=\mu_{\sigma_{k'}}$ (supposing that $k>k'$), we can show that $\mu_{\sigma(k-k')}=\mu$ and that $k-k'$ divides $\lg^{\pi}(\mu)$, which contradicts the fact that $\mu$ is a $1$-cycle.

We can now deduce that:
\begin{eqnarray*}
\lefteqn{\int_{\supp{\pi}} \frac{m(\omega(\pi)^{\rho_{\phi_{\pi}}(x)})}{\rho_{\phi_{\pi}}(x)} d(\phi_{e_{n}}\circ\dots\circ\phi_{e_{i}})_{\ast}\lambda(x)}\\
&=&\sum_{\mu\in E_{\pi}}\int_{\supp{\mu}} \frac{m(\omega(\pi)^{\rho_{\phi_{\pi}}(x)})}{\rho_{\phi_{\pi}}(x)} d(\phi_{e_{n}}\circ\dots\circ\phi_{e_{i}})_{\ast}\lambda(x)\\
&=&\sum_{\mu\in E_{\pi}}\int_{\supp{\mu}} \frac{m(\omega(\mu)^{\rho_{\phi_{\mu}}(x)})}{\rho_{\phi_{\pi}}(x)} d(\phi_{e_{n}}\circ\dots\circ\phi_{e_{i}})_{\ast}\lambda(x)\\
&=&\sum_{\bar{\mu}\in \bar{E}_{\pi}}\int_{\supp{\bar{\mu}}} \frac{\lg^{\pi}(\bar{\mu})m(\omega(\bar{\mu})^{\rho_{\phi_{\bar{\mu}}}(x)})}{\rho_{\phi_{\pi}}(x)} d(\phi_{e_{n}}\circ\dots\circ\phi_{e_{i}})_{\ast}\lambda(x)\\
&=&\sum_{\bar{\mu}\in \bar{E}_{\pi}}\int_{\supp{\bar{\mu}}} \frac{\lg^{\pi}(\bar{\mu})m(\omega(\bar{\mu})^{\rho_{\phi_{\bar{\mu}}}(x)})}{\rho_{\phi_{\bar{\mu}}}(x)\times \lg^{\pi}(\bar{\mu})} d(\phi_{e_{n}}\circ\dots\circ\phi_{e_{i}})_{\ast}\lambda(x)\\
&=&\sum_{\bar{\mu}\in \bar{E}_{\pi}}\int_{\supp{\bar{\mu}}} \frac{m(\omega(\bar{\mu})^{\rho_{\phi_{\bar{\mu}}}(x)})}{\rho_{\phi_{\bar{\mu}}}(x)} d(\phi_{e_{n}}\circ\dots\circ\phi_{e_{i}})_{\ast}\lambda(x)
\end{eqnarray*}

We will now use the fact that if $\bar{\mu}\in\bar{E}_{\pi}$, then the associated map $\phi_{\bar{\mu}}$ is equal --  on $\supp{\bar{\mu}}$ --  to $\phi_{\pi}^{k}$ for $k=\lg^{\pi}(\bar{\mu})$. We will also need to name the edges in an element $\bar{\mu}\in\bar{E}_{\pi}$; we will denote them by $f_{0},f_{1},\dots,f_{p}$ where $p=\lg(\bar{\mu})=\lg^{\pi}(\bar{\mu})\times\lg(\pi)$. Using this notation and the preceding remark, we have that for all measurable map $f:X\rightarrow \realN$ and all integer $l\in\{0,\dots,k-1\}$:
$$\int_{\supp{\bar{\mu}}}f(x)d((\phi_{\pi})^{l}\circ\phi_{e_{n}}\circ\phi_{e_{n-1}}\circ\dots\circ\phi_{e_{i}})_{\ast}\lambda(x)=\int_{\supp{\bar{\mu}}} f(x)d(\phi_{f_{p}}\circ\phi_{f_{p-1}}\circ\dots\circ\phi_{f_{i+(k-l)\lg(\pi)}})_{\ast}\lambda(x)$$

Using what we have proved up to now, we can compute $q_{m}$:
\begin{eqnarray*}
q_{m}(\pi)&=&\frac{1}{\lg(\pi)}\sum_{i=0}^{\lg(\pi)-1}\int_{\supp{\pi}} \frac{m(\omega(\pi)^{\rho_{\phi_{\pi}}(x)})}{\rho_{\phi_{\pi}}(x)}d(\phi_{e_{n}}\circ\phi_{e_{n-1}}\circ\dots\circ\phi_{e_{i}})_{\ast}\lambda(x)\\
&=&\frac{1}{\lg(\pi)}\sum_{i=0}^{\lg(\pi)-1}\sum_{\bar{\mu}\in \bar{E}_{\pi}}\int_{\supp{\bar{\mu}}} \frac{m(\omega_{\bar{\mu}}^{\rho_{\phi_{\bar{\mu}}}(x)})}{\rho_{\phi_{\bar{\mu}}}(x)} d(\phi_{e_{n}}\circ\dots\circ\phi_{e_{i}})_{\ast}\lambda(x)\\
&=&\sum_{\bar{\mu}\in \bar{E}_{\pi}}\frac{1}{\lg(\mu)}\left(\frac{\lg(\mu)}{\lg(\pi)}\sum_{i=0}^{\lg(\pi)-1}\int_{\supp{\bar{\mu}}} \frac{m(\omega(\bar{\mu})^{\rho_{\phi_{\bar{\mu}}}(x)})}{\rho_{\phi_{\bar{\mu}}}(x)} d(\phi_{e_{n}}\circ\dots\circ\phi_{e_{i}})_{\ast}\lambda(x)\right)\\
&=&\sum_{\bar{\mu}\in \bar{E}_{\pi}}\frac{1}{\lg(\mu)}\left(\sum_{l=0}^{\frac{\lg(\mu)}{\lg(\pi)}-1}\sum_{i=0}^{\lg(\pi)-1}\int_{\supp{\bar{\mu}}} \frac{m(\omega(\bar{\mu})^{\rho_{\phi_{\bar{\mu}}}(x)})}{\rho_{\phi_{\bar{\mu}}}(x)} d(\phi_{e_{n}}\circ\dots\circ\phi_{e_{i}})_{\ast}\lambda(x)\right)\\
&=&\sum_{\bar{\mu}\in \bar{E}_{\pi}}\frac{1}{\lg(\mu)}\left(\sum_{l=0}^{\frac{\lg(\mu)}{\lg(\pi)}-1}\sum_{i=0}^{\lg(\pi)-1}\int_{\supp{\bar{\mu}}} \frac{m(\omega(\bar{\mu})^{\rho_{\phi_{\bar{\mu}}}(x)})}{\rho_{\phi_{\bar{\mu}}}(x)} d(\phi_{\pi}^{l}\circ\phi_{e_{n}}\circ\dots\circ\phi_{e_{i}})_{\ast}\lambda(x)\right)\\
&=&\sum_{\bar{\mu}\in \bar{E}_{\pi}}\frac{1}{\lg(\mu)}\sum_{i=0}^{\lg(\mu)-1}\int_{\supp{\bar{\mu}}} \frac{m(\omega(\bar{\mu})^{\rho_{\phi_{\bar{\mu}}}(x)})}{\rho_{\phi_{\bar{\mu}}}(x)} d(\phi_{f_{p}}\circ\phi_{f_{p-1}}\circ\dots\circ\phi_{f_{i}})_{\ast}\lambda(x)\\
&=&\sum_{\bar{\mu}\in \bar{E}_{\pi}}q_{m}(\bar{\mu})\\
\end{eqnarray*}

Which shows that $q_{m}$ is refinement-invariant.
\end{proof}

The two preceding lemmas have as a direct consequence the following proposition which shows that we defined a family of circuit-quantifying maps.

\begin{proposition}
Let $(X,\mathcal{B},\mu)$ be a trefoil space, $\mathfrak{m}$ the microcosm of measure-inflating maps, and $m:\Omega\rightarrow \realposN\cup\{\infty\}$ be a tracial map. The function $q_{m}$ is a $\mathfrak{m}$-circuit-quantifying map.
\end{proposition}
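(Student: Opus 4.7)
The plan is very short: the proposition is an immediate consequence of the two lemmas that immediately precede it, once we unpack the definition of a $\mathfrak{m}$-circuit-quantifying map. Recall that such a map must satisfy two conditions: (1) it must take the same value on any two representatives $\pi_1,\pi_2$ of a given circuit, i.e.\ it must be invariant under cyclic permutations of cycles, and (2) it must be refinement-invariant in the sense of \autoref{invarianceparraff}.

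First I would verify that $q_m$ is well-defined as a map on $\mathfrak{m}$-cycles: each factor $\phi_{e_i}$ is the restriction of a measure-inflating transformation in $\mathfrak{m}$, so $\phi_\pi$ is itself measure-inflating; since its domain and codomain agree on $\supp{\pi}$, it is in fact measure-preserving there. The map $\rho_{\phi_\pi}$ is measurable by the corollary to \autoref{fixedpointset} (which uses the trefoil-space hypothesis), so $\supp{\pi} = \rho_{\phi_\pi}^{-1}(\naturalN^\ast)$ is measurable and the integrand is a measurable, nonnegative function. The integral therefore makes sense in $\realposN\cup\{\infty\}$, and the finite average over the $n+1$ base points then does as well.

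Next I would invoke \autoref{lemma1}, which states that for any tracial $m$ the function $q_m$ is constant on the orbit of cyclic permutations; this is precisely condition (1). Finally, \autoref{lemma2} establishes condition (2), namely refinement-invariance of $q_m$. Since by hypothesis $m$ is tracial, both lemmas apply and we conclude that $q_m$ is a $\mathfrak{m}$-circuit-quantifying map.

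The proposition itself involves no new obstacle; the genuine work was already carried out in \autoref{lemma1} and \autoref{lemma2}. The one detail worth spelling out explicitly is that $q_m$ has been defined on $1$-cycles and extended by $0$ to all cycles, so condition (1) has to be checked separately on the two parts of this partition: on $1$-cycles it follows from \autoref{lemma1}, while on cycles that are proper powers both sides are $0$ (and being a proper power is preserved by cyclic permutation), so the cyclic-invariance condition holds trivially there. With this small remark the proof is complete in a couple of lines.
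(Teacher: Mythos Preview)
Your proposal is correct and follows exactly the approach of the paper, which simply states that the proposition is a direct consequence of the two preceding lemmas (\autoref{lemma1} and \autoref{lemma2}). Your additional remarks on well-definedness and on the trivial case of proper-power cycles are not in the paper but are accurate and harmless clarifications.
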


\subsection{Circuit-Quantifying Map in the General Case}

This result can now be extended to the microcosm of all non-singular measurable-preserving transformations (the \emph{macrocosm} on $X$). We first show an easy lemma.

\begin{lemma}
If $\rho:X\rightarrow \naturalN$ is measurable and for all $i\in\naturalN$ the maps $\phi_{i}$ are measurable, then the following map is measurable:
$$f(x)=\sum_{i=0}^{\rho(x)} \phi_{i}(x)$$
\end{lemma}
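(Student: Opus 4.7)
The plan is to decompose $X$ according to the level sets of $\rho$ and exploit the fact that on each level set, $f$ coincides with a finite sum of measurable maps. Concretely, I would first introduce $X_n = \rho^{-1}(\{n\})$ for each $n \in \naturalN$. Since $\rho$ is measurable and singletons are measurable in $\naturalN$ (equipped with its discrete $\sigma$-algebra), each $X_n$ is a measurable subset of $X$, and the family $\{X_n\}_{n\in\naturalN}$ forms a countable measurable partition of $X$.

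Next, I would observe that on each $X_n$, the value $\rho(x)$ is constantly equal to $n$, so the restriction of $f$ to $X_n$ coincides with the finite sum $g_n(x) := \sum_{i=0}^{n} \phi_i(x)$. Each $g_n$ is measurable as a finite sum of measurable maps.

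Putting these observations together, one has the pointwise identity
\begin{equation*}
f(x) = \sum_{n \in \naturalN} \chi_{X_n}(x) \, g_n(x),
\end{equation*}
where the sum reduces, at each fixed $x$, to the single term indexed by $\rho(x)$, so convergence is not an issue. Each summand $\chi_{X_n} \cdot g_n$ is measurable (as a product of measurable functions), so $f$ is a countable sum of measurable functions and hence measurable. Equivalently, for any Borel set $B$ one can write $f^{-1}(B) = \bigcup_{n\in\naturalN} \big(X_n \cap g_n^{-1}(B)\big)$, which is a countable union of measurable sets.

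No part of this argument is delicate: the only ingredient needed beyond standard measure-theoretic closure properties is the fact that $\rho$ takes values in a countable set, which is exactly the hypothesis $\rho : X \to \naturalN$. Thus there is no real obstacle; the lemma is essentially a bookkeeping statement separating the variable upper bound of summation from the finite-sum case by stratifying along $\rho$.
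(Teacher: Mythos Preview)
Your proposal is correct and follows essentially the same approach as the paper: both decompose $X$ into the level sets $X_n=\rho^{-1}(\{n\})$ and observe that $f$ restricted to each $X_n$ is the finite sum $\sum_{i=0}^{n}\phi_i$, which is measurable. Your version is slightly more detailed (spelling out the indicator-function identity and the preimage decomposition), but the argument is the same.
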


\begin{proof}
Indeed, if $X_{i}$ denotes the measurable set $\rho^{-1}(i)$ for all integer $i$, then the restriction of $f$ to $X_{i}$ is equal to the finite sum $\sum_{k=0}^{i}\phi_{i}(x)$ which is  measurable on $X_{i}$. 
\end{proof}

This lemma ensures us that the following definition makes sense.

\begin{definition}
Let $X$ be a trefoil space, and $m:\Omega\rightarrow \realposN\cup\{\infty\}$ be a function. We define the map:
\begin{equation*}
\bar{q}_{m}:\pi=e_{0}\dots e_{n}\mapsto\sum_{j=0}^{n}\int_{\supp{\pi}} \sum_{k=0}^{\rho_{\phi_{\pi}}(x)-1}\frac{m(\omega(\pi)^{\rho_{\phi_{\pi}}(\phi_{\pi}^{k}(x))})}{(n+1)\rho_{\phi_\pi}(x)\rho_{\phi_{\pi}}(\phi_{\pi}^{k}(x))}d(\phi_{e_{n}}\circ\phi_{e_{n-1}}\circ\dots\circ\phi_{e_{j}})_{\ast}\lambda(x)
\end{equation*}
\end{definition}

We now have to check that \autoref{lemma1} and \autoref{lemma2} still hold in this general setting. This can easily be seen because of the following computation, where we use the convention that $e_{k}$ denote $e_{k \mod{n+1}}$ and $\tilde{\pi}^{i}$ denotes the restriction of $\pi^{i}$, the $i$-times concatenation of $\pi$, to $\supp{\pi}_{i}$:
\begin{eqnarray*}
\bar{q}_{m}(\pi)&=&\sum_{j=0}^{\lg(\pi)}\int_{\supp{\pi}} \sum_{k=0}^{\rho_{\phi_{\pi}}(x)-1}\frac{m(\omega(\pi)^{\rho_{\phi_{\pi}}(\phi_{\pi}^{k}(x))})}{\lg(\pi)\rho_{\phi_\pi}(x)\rho_{\phi_{\pi}}(\phi_{\pi}^{k}(x))}d(\phi_{e_{n}}\circ\phi_{e_{n-1}}\circ\dots\circ\phi_{e_{j}})_{\ast}\lambda(x)\\
&=&\sum_{j=0}^{\lg(\pi)}\sum_{i\in\naturalN^{\ast}}\int_{\supp{\pi}_{i}} \sum_{k=0}^{i}\frac{m(\omega(\pi)^{\rho_{\phi_{\pi}}(\phi_{\pi}^{k}(x))})}{\lg(\pi)i\rho_{\phi_{\pi}}(\phi_{\pi}^{k}(x))}d(\phi_{e_{n}}\circ\phi_{e_{n-1}}\circ\dots\circ\phi_{e_{j}})_{\ast}\lambda(x)\\
&=&\sum_{j=0}^{\lg(\pi)}\sum_{i\in\naturalN^{\ast}}\sum_{k=0}^{i}\int_{\supp{\pi}_{i}} \frac{m(\omega(\pi)^{\rho_{\phi_{\pi}}(\phi_{\pi}^{k}(x))})}{\lg(\pi)i\rho_{\phi_{\pi}}(\phi_{\pi}^{k}(x))}d(\phi_{e_{n}}\circ\phi_{e_{n-1}}\circ\dots\circ\phi_{e_{j}})_{\ast}\lambda(x)\\
&=&\sum_{i\in\naturalN^{\ast}}\sum_{j=0}^{\lg(\pi)}\frac{1}{\lg(\pi)i}\sum_{k=0}^{i}\int_{\supp{\pi}_{i}} \frac{m(\omega(\pi)^{\rho_{\phi_{\pi}}(\phi_{\pi}^{k}(x))})}{\rho_{\phi_{\pi}}(\phi_{\pi}^{k}(x))}d(\phi_{e_{n}}\circ\phi_{e_{n-1}}\circ\dots\circ\phi_{e_{j}})_{\ast}\lambda(x)\\
&=&\sum_{i\in\naturalN^{\ast}}\sum_{j=0}^{\lg(\pi)}\frac{1}{\lg(\pi)i}\sum_{k=0}^{i}\int_{\supp{\pi}_{i}} \frac{m(\omega(\pi)^{\rho_{\phi_{\pi}}(x)})}{\rho_{\phi_{\pi}}(x)}d(\phi_{\pi}^{k}\circ\phi_{e_{n}}\circ\phi_{e_{n-1}}\circ\dots\circ\phi_{e_{j}})_{\ast}\lambda(x)\\
&=&\sum_{i\in\naturalN^{\ast}}\sum_{j=0}^{\lg(\pi^{i})}\frac{1}{\lg(\pi^{i})}\int_{\supp{\pi}_{i}} \frac{m(\omega(\pi)^{\rho_{\phi_{\pi}}(x)})}{\rho_{\phi_{\pi}}(x)}d(\phi_{e_{n\times i}}\circ\phi_{e_{n\times i-1}}\circ\dots\circ\phi_{e_{j}})_{\ast}\lambda(x)\\
&=&\sum_{i\in\naturalN^{\ast}}q_{m}(\tilde{\pi}^{i})
\end{eqnarray*}
From this result, and the fact that $\phi_{\tilde{\pi}^{i}}$ is measure-preserving, one can adapt the proofs of \autoref{lemma1} and \autoref{lemma2}, and show the following theorem which, together with \autoref{thm_associativity} and \autoref{thm_trefoilppty}, finishes the proof of \autoref{mainthm}.

\begin{theorem}\label{thm_trefoilspace}
Let $(X,\mathcal{B},\mu)$ be a trefoil space, $\mathfrak{m}$ the associated macrocosm, and $m:\Omega\rightarrow \realposN\cup\{\infty\}$ be a tracial map. The map $\bar{q}_{m}$ is a $\mathfrak{m}$-circuit-quantifying map.
\end{theorem}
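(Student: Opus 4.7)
The plan is to reduce the theorem to the two lemmas already established in the measure-preserving case, by exploiting the decomposition
\[
\bar{q}_{m}(\pi)=\sum_{i\in\naturalN^{\ast}} q_{m}(\tilde{\pi}^{i})
\]
that was derived immediately before the statement. The crucial observation enabling this reduction is that, although a map $\phi_{\pi}$ in the macrocosm is only non-singular in general, its iterate $\phi_{\pi}^{i}$ coincides with the identity on $\supp{\pi}_{i}=\rho_{\phi_{\pi}}^{-1}(i)$ by definition of the support stratification. Consequently $\phi_{\tilde{\pi}^{i}}$ is a measure-preserving transformation on its domain $\supp{\pi}_{i}$, so Lemma \ref{lemma1} and Lemma \ref{lemma2} do apply to each individual summand $q_{m}(\tilde{\pi}^{i})$, even though they do not apply to $\phi_{\pi}$ itself.

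First I would verify the traciality condition (condition~(1) of a circuit-quantifying map). Fix a cyclic permutation $\pi^{(k)}$ of $\pi$. As in the proof of Lemma \ref{lemma1}, one shows that $\widetilde{\pi^{(k)}}^{i}$ is obtained from $\tilde{\pi}^{i}$ by the cyclic shift corresponding to $k$ (after transport by the prefix $\phi_{e_{0}\dots e_{k-1}}$), and since $\phi_{\tilde{\pi}^{i}}$ is measure-preserving on $\supp{\pi}_{i}$, the argument of Lemma \ref{lemma1} yields $q_{m}(\tilde{\pi}^{i}) = q_{m}(\widetilde{\pi^{(k)}}^{i})$. Summing over $i\in\naturalN^{\ast}$ gives $\bar{q}_{m}(\pi)=\bar{q}_{m}(\pi^{(k)})$.

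Next I would verify refinement-invariance. Let $(F^{(e)},\theta)$ be a simple refinement of $F$ along $e$ and let $\pi$ be a $1$-cycle between $F$ and $G$. The combinatorial content of Lemma \ref{lemma2} is a bijective correspondence, level by level in the support stratification, between $(\tilde{\pi}^{i})$-data on the one hand and the corresponding iterated restrictions $(\tilde{\mu}^{i})$ for $\mu\in E^{(F^{(e)},\theta)}_{\pi}$ on the other. Since the measure-preservation of $\phi_{\tilde{\pi}^{i}}$ on $\supp{\pi}_{i}$ is inherited by its refined version, the integral identity of Lemma \ref{lemma2} applies at each level $i$, yielding
\[
\sum_{\rho\in\pi^{\omega}} q_{m}(\tilde{\rho}^{i}) \;=\; \sum_{\mu\in E^{(F^{(e)},\theta)}_{\pi}} q_{m}(\tilde{\mu}^{i}).
\]
Summing this equality over $i\in\naturalN^{\ast}$ and using the decomposition formula gives the refinement-invariance identity \eqref{eq_refinv} for $\bar{q}_{m}$. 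A routine iterated application to a general (not necessarily simple) refinement then completes the verification.

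The main obstacle will be the combinatorial bookkeeping in the refinement step. A $1$-circuit downstairs can either split into several $1$-circuits upstairs, or lift to a $1$-circuit whose length is a multiple of its own (as illustrated in Figure \ref{exempledecompocycles}), and this interacts nontrivially with the stratification by $\rho_{\phi_{\pi}}$: a point contributing to $\supp{\pi}_{i}$ may contribute to $\supp{\mu}_{j}$ with $j\neq i$ after refinement. One must check that the double sum over $(i,\mu)$ can be reindexed so that Lemma \ref{lemma2} applies termwise; this is essentially the content of the calculation preceding the theorem, where the identity $\bar{q}_{m}(\pi)=\sum_{i} q_{m}(\tilde{\pi}^{i})$ was obtained by exactly such a reindexing. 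Once the correspondence between strata is correctly set up, the rest is formal, and together with \autoref{thm_associativity} and \autoref{thm_trefoilppty} this concludes the proof of \autoref{mainthm}.
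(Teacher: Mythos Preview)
Your proposal is correct and follows exactly the approach the paper indicates: the paper's own proof is the single sentence ``From this result, and the fact that $\phi_{\tilde{\pi}^{i}}$ is measure-preserving, one can adapt the proofs of \autoref{lemma1} and \autoref{lemma2},'' and you have spelled out precisely that adaptation, including the key observation that $\phi_{\tilde{\pi}^{i}}$ acts as the identity on $\supp{\pi}_{i}$. Your discussion of the reindexing obstacle in the refinement step is a welcome elaboration of what the paper leaves implicit.
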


\begin{example}
This setting is a far-reaching extension of our previous work on directed weighted graphs \cite{seiller-goia}. Indeed, the latter framework is recovered as the special case of a discrete space endowed with the counting measure. In this case, one can notice that the map $\rho_{\phi_{\pi}}$ is constantly equal to $1$ and therefore the family of measurement just defined can be computed with the simpler expression $\bar{q}_{m}(\pi)=m(\omega(\pi))$. The family of measurements defined from these functions thus turn out to be equal to the family of measurement considered on graphs \cite{seiller-goia}. In particular, the measurement defined from the map $\bar{q}_{m}=-\log(1-x)$ corresponds to Girard's measurement based on the determinant \cite{seiller-goim}. 
\end{example}

\begin{example}
Let us consider the trefoil space $X=[0,1]$ endowed with Lebesgue measure, and the microcosm of measure-preserving maps. Since each measure-preserving map on $X$ defines a unitary acting on the Hilbert space $L^{2}(X)$ by pre-composition, it is easy to associate to any $\complexN$-weighted graphing $G$ a linear combination $[G]$ of partial isometries on $L^{2}(X)$. This might not define an operator in general since the obtained operator might not be bounded, but we will restrict the discussion to the set of graphings for which $[G]$ is an operator. This set of graphings can be shown to have the following properties:
\begin{itemize}
\item it contains, for each integer $k$, the \enquote{$k\times k$-matrices algebra\footnote{Of course, this is not the matrix algebra, but one can show that the operator $[G]$ associated to such a graphing $G$ is the image of a $k\times k$ matrix through a well-chosen injective morphism.}} of graphings constructed from translations between intervals $I_{l}=[\frac{l}{k},\frac{l+1}{k}]$, i.e.\ directed weighted graphs on $k$ vertices;
\item it has a trace: for each $f\in \mathfrak{m}$ one can define $tr(f)$ as the measure of the set of fixed points of $f$; this trace, when restricted to the $k\times k$ matrix algebra defined above yields the usual (normalized, i.e.\ $tr(1)=1$) trace of matrices;
\end{itemize}
Thus, the set of such graphings plays the rôle of the type {II}$_{1}$ hyperfinite factor. The same reasoning shows that the $\complexN$-weighted graphings in the microcosm of measure-preserving maps on $X=\realN$ with the Lebesgue measure play the rôle of the type {II}$_{\infty}$ hyperfinite factor. 
\end{example}

\begin{remark}
We did not show here a formal correspondance, but a proof of such a result most surely exists. In particular, in the case where $X$ is the real line, it is known that the type {II}$_{\infty}$ factor arises as the von Neumann algebra generated by:
\begin{itemize}
\item elements of $L^{\infty}(\realN)$ acting on $L^{2}(\realN)$ by multiplication;
\item the unitaries induced by precomposition by rational translation.
\end{itemize}
We did not think however that such a result would be of great interest in this paper, as we already know from the discrete case discussed above and results from previous papers \cite{seiller-goim,seiller-goia} that our setting generalises Girard's constructions using operator algebras.
\end{remark}


\subsection{Example: Unification \enquote{Algebras}}




As already mentioned, it can be shown that the framework of graphing generalises Girard's constructions based on operators \cite{goi1,goi2,goi5}. We will now explain how the alternative approach he uses, namely using \enquote{algebras of clauses} \cite{goi3} or \enquote{unification algebra} \cite{goi6,goi6light}, is also a particular case of our constructions on graphings.

We thus show how Girard's notions of flows and wirings can be understood in terms of graphings. This gives intuitions on what he calls the \enquote{unification algebra} which is nothing more than the algebra generated by the set of graphings on the adequate space $B(\Sigma)$.

In the following we fix a countable (infinite) set of variables $\text{Var}$.

\begin{definition}
A \emph{signature} is a tuple $(\text{Const},\text{Fun})$ where $\text{Const}$ contains \emph{symbols of constants} and $\text{Fun}$ contains a finite number of \emph{symbols of functions}. We say the signature is \emph{free} if the set $\text{Const}$ is empty.
\end{definition}

\begin{definition}
The terms defined by a signature $\Sigma$ are defined by the grammar:
$$T:=x~|~c~|~f(T,\dots,T)~~~~~ (x\in\text{Var}, c\in\text{Const}, f\in\text{Fun})$$
A \emph{closed term} is a term that does not contain any variables. A term which is not closed is said to be \emph{open}.
\end{definition}

\begin{definition}
If $\Sigma$ is a free signature, there are no finite closed terms. In this particular case, we define the set of closed terms as the trees defined co-inductively as follows:
$$T:=f(T,\dots, T)~~~~(f\in \text{Fun})$$
We can understand these closed terms as infinite rooted trees labelled by function symbols.
\end{definition}

\begin{definition}
Let $\Sigma$ be a free signature. We define the topological space $B(\Sigma)$ as the set of closed terms considered with the topology induced by the set of open terms: $\mathcal{O}(u(x_{1},\dots,x_{n}))$ is defined as the set of closed terms $u(t_{1},\dots,t_{n})$ where the $t_{i}$ are closed terms. This space can be endowed with a $\sigma$-finite radon measure $\lambda$ defined inductively following the definition of open terms. 

We define for each enumeration $e:\text{Fun}\rightarrow\naturalN^{\ast}$ and each occurence of a variable $x$ in in the open term $u(x_{1},\dots,x_{n})$, the quantity $\lambda^{u(x_{1},\dots,x_{n})}_{e}(x)$ by following the path $f_{1}\dots f_{k}$ of function symbols from the root to $x$ in the syntactic tree of $u(x_{1},\dots,x_{n})$:
$$\lambda^{u(x_{1},\dots,x_{n})}_{e}(x)=\prod_{i=1}^{k} \frac{1}{2^{e(f_{i})}}$$
with the usual convention that an empty product equal $1$.

The measure of $\mathcal{O}(u)=\mathcal{O}(u(x_{1},\dots,x_{n}))$ is then defined as ($\text{Occ}^{u}(x_{i})$ denotes the set of occurences of $x_{i}$ in $u$):
$$\lambda_{e}(\mathcal{O}(u))=\sum_{i=1}^{n}\max_{x\in\text{Occ}^{u}(x_{i})}\lambda_{e}^{u}(x) $$

Notice that for linea terms, i.e.\ when variables occur only once in $u$, this quantity can be defined directly by induction on the structure of $u$:
\begin{equation*}
\begin{array}{rcll}
\lambda_{e}(x_{i})&=&1&\text{ when $x_{i}\in\text{Var}$}\\
\lambda_{e}(f_{i}(t_{1},\dots,t_{k_{i}})&=&\frac{1}{2^{e(f_{i})}}\sum_{i=1}^{k_{i}} \frac{\lambda_{e}(t_{i})}{k_{i}}
\end{array}
\end{equation*}
\end{definition}

\begin{remark}
One can define other measures (which are more satisfying in some respect) in some specific cases:
\begin{itemize}
\item If the set $\text{Fun}$ is finite, we write $K=\sum_{f_{i}\in\text{Fun}}k_{i}$, where $k_{i}$ is the arity of $f_{i}$, we can define 
$$\lambda^{u(x_{1},\dots,x_{n})}_{e}(x)=\prod_{i=1}^{k} \frac{1}{K}$$
The direct inductive definition of the meausre of linear terms then becomes:
\begin{equation*}
\begin{array}{rcll}
\lambda(x_{i})&=&1&\text{ when $x_{i}\in\text{Var}$}\\
\lambda(f_{i}(t_{1},\dots,t_{k_{i}})&=&\frac{1}{K}\sum_{i=1}^{k_{i}} \lambda(t_{i})
\end{array}
\end{equation*}
\item If the set $\text{Fun}$ is infinite but the number of functions of a given arity $k$ is finite (we denote it by $a_{k}$), we can define:
$$\lambda^{u(x_{1},\dots,x_{n})}_{e}(x)=\prod_{i=1}^{k} \frac{1}{a_{i}\times 2^{k_{i}+1}}$$
The direct inductive definition of the meausre of linear terms then becomes:
\begin{equation*}
\begin{array}{rcll}
\lambda(x_{i})&=&1&\text{ when $x_{i}\in\text{Var}$}\\
\lambda(f_{i}(t_{1},\dots,t_{k_{i}})&=&\frac{1}{a_{i}\times 2^{k_{i}+1}}\sum_{i=1}^{k_{i}} \frac{\lambda(t_{i})}{k_{i}}
\end{array}
\end{equation*}
\end{itemize}
\end{remark}

\begin{definition}
Let $\Sigma$ be a non-free signature. We define the topological space $B(\Sigma)$ as the set of closed terms endowed with the discrete topology. This space can be endowed with the counting measure.
\end{definition}

\begin{theorem}
For any signature $\Sigma$, the space $B(\Sigma)$ is a trefoil space. 
\end{theorem}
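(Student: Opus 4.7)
The plan is to split the argument along the definition of $B(\Sigma)$ into the two cases (free and non-free signature) and, in each case, verify the three requirements of a trefoil space in turn: Hausdorff separation, second-countability, and the fact that the constructed measure is a $\sigma$-additive Radon measure on the Borel $\sigma$-algebra. In both cases the signature is assumed to be countable (finitely many function symbols, countably many constants), which will be essential for second-countability.

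For the non-free case the verification is essentially formal: $B(\Sigma)$ is then the set of \emph{finite} closed terms built from the constants and function symbols, hence a countable set carrying the discrete topology. A discrete space is automatically Hausdorff, and it is second-countable iff it is countable, which is the case here. The counting measure is trivially $\sigma$-additive, and every subset is open, so outer/inner regularity by open/compact sets is immediate. I would just state this in one or two lines.

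For the free case, the topology is generated by the family $\{\mathcal{O}(u)\}_u$ indexed by open terms $u(x_1,\dots,x_n)$. There are only countably many such terms (since $\text{Fun}$ is finite or countable and $\text{Var}$ is countable), so this is a countable subbase, whence second-countability follows. For Hausdorff separation, given two distinct closed terms $t_1\neq t_2$, they must disagree at some symbol occurring at a finite depth $k$; truncating both to depth $k$ produces two open terms $u_1,u_2$ with $t_i\in\mathcal{O}(u_i)$ and $\mathcal{O}(u_1)\cap\mathcal{O}(u_2)=\emptyset$, so $B(\Sigma)$ is even totally separated. Along the way it is convenient to observe that each basic open $\mathcal{O}(u(x_1,\dots,x_n))$ is homeomorphic to the product $B(\Sigma)^n$, and that $B(\Sigma)$ itself is compact: this follows because a Cauchy sequence of closed terms (in the natural ultrametric induced by depth of first disagreement) stabilises level by level and thus converges, so $B(\Sigma)$ is a compact Polish space (in fact a Cantor-like space when $\text{Fun}$ has more than one symbol).

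The main obstacle is the measure-theoretic part in the free case, where one must show that the formula $\lambda_e(\mathcal{O}(u))$ defined on basic opens extends to a genuine $\sigma$-additive Radon measure on the Borel $\sigma$-algebra. My plan is to invoke a Carath\'eodory-type extension theorem: the family of basic opens is closed under finite intersection (the intersection of $\mathcal{O}(u_1)$ and $\mathcal{O}(u_2)$ is either empty or a basic open corresponding to a most general common instance of $u_1$ and $u_2$), and the complement of a basic open is a finite disjoint union of basic opens (obtained by refining $u$ one level deeper in each non-refined position). This makes the collection of finite disjoint unions of basic opens an algebra, and $\lambda_e$ is finitely additive on it by construction. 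Finite additivity upgrades to $\sigma$-additivity via compactness of $B(\Sigma)$: any countable cover of a basic open by basic opens admits a finite subcover, so one can reduce $\sigma$-additivity to finite additivity by a standard compactness argument. Carath\'eodory extension then produces a unique Borel measure, which is finite on the compact set $B(\Sigma)$ (a computation shows $\lambda_e(B(\Sigma))$ is finite under the enumeration $e$) and hence automatically Radon on this compact Polish space, since every finite Borel measure on a Polish space is inner regular by compact sets and outer regular by open sets. This completes the verification.
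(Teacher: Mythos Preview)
Your proposal is substantially more thorough than the paper's own proof, which consists of two sentences: the non-free case is declared ``clear'', and for the free case only Hausdorffness and second-countability are asserted (the latter because the topology is generated by a countable family of opens). The paper does not address the Radon measure condition at all, so the measure-theoretic portion of your argument has no counterpart to compare against.

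Your overall strategy for the free case---Carath\'eodory extension on the algebra generated by basic opens, with $\sigma$-additivity upgraded from finite additivity via compactness---is sound, but two points deserve care. First, the sentence ``a Cauchy sequence \dots\ converges, so $B(\Sigma)$ is a compact Polish space'' only establishes completeness; compactness additionally requires total boundedness, which does follow here since for each depth $k$ there are only finitely many open terms of depth at most $k$ when $\text{Fun}$ is finite. Second, and more substantively, the claim that $\lambda_e$ is ``finitely additive by construction'' fails for the measure $\lambda_e$ as actually defined in the paper: with $\text{Fun}=\{f,g\}$ of arity $1$ and $e(f)=1$, $e(g)=2$, one computes $\lambda_e(\mathcal{O}(x))=1$ while $\lambda_e(\mathcal{O}(f(x)))+\lambda_e(\mathcal{O}(g(x)))=\tfrac12+\tfrac14=\tfrac34$, even though $\mathcal{O}(f(x))$ and $\mathcal{O}(g(x))$ partition $\mathcal{O}(x)$. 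Your Carath\'eodory argument only goes through once a genuinely additive set function on basic opens is in hand; the alternative formula in the paper's Remark for finite $\text{Fun}$ (namely $\lambda(f_i(t_1,\dots,t_{k_i}))=\tfrac{1}{K}\sum_j\lambda(t_j)$ with $K=\sum_i k_i$) does satisfy finite additivity, so redirecting your argument to that measure, or to any probability measure arising from a Markov branching process on the tree of terms, would complete the verification that the paper leaves implicit.
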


\begin{proof}
It is clear in the case of a non-free signature. In the case of a free signature, the space is clearly Hausdorff. It is second-countable since we defined the topology as induced by a countable number of open sets.
\end{proof}

\begin{definition}
A \emph{flow} is an ordered pair $u\flow t$, where $u,t$ are terms with the same variables. A \emph{wiring} is a sum of flows.
\end{definition}

\begin{definition}
A flow $u(x_{1},\dots,x_{k})\flow t(x_{1},\dots,x_{k})$ represents the following non-singular \preservesmeasurable map from the open $\mathcal{O}(t)$ to the open $\mathcal{O}(u)$:
$$[u\flow t]:=t(T_{1},\dots,T_{k})\mapsto u(T_{1},\dots,T_{k})$$
Given a wiring $W$, we can therefore associate a graphing $[W]$ to it.
\end{definition}

Now, an operation of \emph{composition} is defined on flows, and therefore on wirings. We only sketch here the definitions and refer the reader interested in more details to one of the author's recent joint work with Aubert an Bagnol \cite{lics-ptime}. Composition is defined as follows on flows: if $u\flow t$ and $u'\flow t'$ are flows with disjoint sets of free variables, the composition $(u\flow t)\cdot(u'\flow t')$ is equal to $u\theta\flow t'\theta$ if and only if there exists a \enquote{most general unifier} $\theta$ for $t$ and $u'$ (this is explained in the proof of \autoref{executioncommute}), and is undefined otherwise. This extends to wirings as follows: $(\sum_{i\in I} u_{i}\flow t_{i})\cdot(\sum_{j\in J} u'_{j}\flow t'_{j})=\sum_{(i,j)\in I\times J} (u_{i}\flow t_{i})\cdot(u'_{j}\flow t'_{j})$; notice that this may produce the empty sum of flows if all compositions are undefined. Now, the set of wirings together with the composition and sum operations form an algebra. Among the elements of this algebra, we distinguish \enquote{hermitians} as those elements $\sum_{i\in I} u_{i}\flow t_{i}=\sum_{i\in I} t_{i}\flow u_{i}$, in analogy with operator algebras. Following this analogy, an hermitian $\sigma$ produces a projection $\sigma^{2}$: one can check that $\sigma^{4}=\sigma^{2}=\sigma^{\ast}$ where $(\cdot)^{\ast}$ is the involution defined by $(\sum_{i\in I} u_{i}\flow t_{i})^{\ast}=\sum_{i\in I} t_{i}\flow u_{i}$. Notice the flow $x\flow x$ defines an identity. We write $1-\sigma^{2}$ the projection such that $ \sigma^{2}+(1-\sigma^{2})$ is the identity. 

Now, given wirings $U$ and $\sigma$ such that $\sigma$ is a hermitian and $U\sigma$ is nilpotent, we define their \emph{execution}\footnote{The terminology follows more recent work in geometry of interaction. It is important to note that the first work using flows and wirings \cite{goi3} uses a different terminology and calls this expression the \enquote{result of the execution}, while the execution is simply the sum $\sum_{i\geqslant 0}U(\sigma U)^{k}$.} as the wiring $(1-\sigma^{2})\left(\sum_{i\geqslant 0}U(\sigma U)^{k}\right)(1-\sigma^{2})$. This is a straightforward translation of the operator-algebraic execution formula introduced with Girard's fist paper on geometry of interaction \cite{goi1}. We refer to the author's recent work on maximal abelian subalgebras \cite{seiller-masas} for an overview of the GoI program and the execution formula (based on the operator-algebraic approach).

\begin{theorem}\label{executioncommute}
Let $\Sigma$ be a signature. The map $W\mapsto [W]$ commutes with execution. 
\end{theorem}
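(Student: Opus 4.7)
The plan is to unfold both sides of the claimed equality in terms of the underlying data and verify they describe the same graphing. On one side we have $[\mathrm{Ex}(U,\sigma)]$, the graphing associated to the wiring $\mathrm{Ex}(U,\sigma)=(1-\sigma^{2})\bigl(\sum_{k\geqslant 0}U(\sigma U)^{k}\bigr)(1-\sigma^{2})$; on the other side we have the graphing execution $[U]\plugmes[\sigma]$ obtained from alternating paths in $[U]\bicolmes[\sigma]$, carved at the location of the cut. The strategy is to show these two graphings coincide edge by edge, exploiting the dictionary flows/wirings $\leftrightarrow$ edges/graphings, composition via most general unifier $\leftrightarrow$ path composability, and projections $1-\sigma^{2}$ $\leftrightarrow$ carving outside the common carrier.

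First I would establish the basic dictionary. A flow $u\flow t$ becomes an edge whose source is $\mathcal{O}(t)$, target $\mathcal{O}(u)$, and realiser the non-singular \preservesmeasurable map $[u\flow t]$; summation of flows becomes disjoint union of edges (i.e., graphing union), so $[U]$ and $[\sigma]$ are well-defined. The hermiticity of $\sigma$ guarantees that $[\sigma]$ is symmetric in the sense that each edge comes paired with its reverse; the (partial) projection $\sigma^{2}$ corresponds to the effective carrier $V^{[\sigma]}$ and $1-\sigma^{2}$ to its complement $(V^{[\sigma]})^{c}$ (intersected with the ambient space at hand, which is controlled by $U$). Taking $C=V^{[U]}\cap V^{[\sigma]}$ as the location of the cut, the conjugation by $1-\sigma^{2}$ on both sides of the execution formula filters out exactly those composites whose domain and codomain lie outside $C$, which is precisely the condition $[S_{\pi}]_{o}^{o},[T_{\pi}]_{o}^{o}$ used in the definition of $\plugmes$ (equivalently, \autoref{decoupeinvariant} says we may carve along $C$ on both $[U]$ and $[\sigma]$ before plugging).

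Second, and this is the heart of the matter, I would prove the combinatorial correspondence. A $k$-fold composite flow $(u_{i_{0}}\flow t_{i_{0}})\cdot(u'_{j_{0}}\flow t'_{j_{0}})\cdots(u_{i_{k}}\flow t_{i_{k}})\cdot(u'_{j_{k}}\flow t'_{j_{k}})$ is nonzero iff successive most general unifiers exist, in which case the resulting flow has associated map the composition of the partial maps $[u_{i_{l}}\flow t_{i_{l}}]$ and $[u'_{j_{l}}\flow t'_{j_{l}}]$. The key lemma to prove here is: for closed terms $s,r$, $\mathcal{O}(s)\cap\mathcal{O}(r)$ is of strictly positive measure if and only if $s$ and $r$ are unifiable, and in that case $\mathcal{O}(s)\cap\mathcal{O}(r)=\mathcal{O}(s\theta)=\mathcal{O}(r\theta)$ where $\theta$ is a most general unifier. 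This is a direct verification from the inductive definition of the measure and of the opens $\mathcal{O}(\cdot)$, and it immediately identifies nonzero compositions of flows with admissible concatenations of edges (those for which $T_{e_{l}}\cap S_{e_{l+1}}$ has positive measure). Consequently, the nonzero summands of $U(\sigma U)^{k}$ are exactly in bijection with the length-$(2k+1)$ alternating paths in $[U]\bicolmes[\sigma]$ starting and ending on a $[U]$-edge, with matching weights and realisers.

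Finally I would assemble these pieces: conjugating by $(1-\sigma^{2})$ both sides restricts the above alternating paths to those whose source and target sit in $V^{[U]}\setminus C$ (equivalently in $[S_{\pi}]_{o}^{o}$, $[T_{\pi}]_{o}^{o}$), and the sum over $k$ enumerates all such alternating paths of arbitrary length. Nilpotence of $U\sigma$ guarantees that this sum is finite, matching the fact that only finitely many edges in $[U]\plugmes[\sigma]$ have non-null source; conversely any graphing edge of $[U]\plugmes[\sigma]$ arises from a unique composable flow. Putting this together yields $[\mathrm{Ex}(U,\sigma)]=[U]\plugmes[\sigma]$ up to almost-everywhere equality. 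The main obstacle will be the lemma identifying unifiability with positivity of intersection measure for the general free-signature space $B(\Sigma)$, where one must carefully track the inductive definition of the Radon measure $\lambda_{e}$ on open sets $\mathcal{O}(u(x_{1},\dots,x_{n}))$ to verify that non-unifiable terms give disjoint opens while unifiable ones yield a common open of strictly positive measure; the non-free case is routine since $B(\Sigma)$ is discrete with the counting measure.
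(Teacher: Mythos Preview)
Your proposal is correct and follows essentially the same route as the paper's own proof: the paper also reduces everything to the key lemma that two terms are unifiable if and only if the corresponding opens have intersection of strictly positive measure (in which case that intersection is the open attached to the principal unifier), then reads off that composition of flows matches partial composition of the realising maps, so that $U(\sigma U)^{k}$ corresponds to alternating paths of length $2k+1$ and conjugation by $1-\sigma^{2}$ corresponds to the restriction $[\phi]_{o}^{o}$ outside the cut. Your write-up is in fact somewhat more careful than the paper's sketch (you flag the free-signature measure verification and the role of nilpotence explicitly), but the structure and the key lemma are identical.
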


\begin{proof}
This is shown easily by looking at the definition of composition of flows. A global substitution is a map from the set of variables to the set of terms, and we denote by $v\theta$ the result of the substitution of each variable $x_{i}$ in $v$ by the term $\theta(x_{i})$. We say two terms $v,v'$ are \emph{unifiable} when there exists a global substitution $\theta$ such that $v\theta=v'\theta$. In this case, there exists a \emph{principal unifier}, i.e.\ a substitution $\theta_{0}$ such that any substitution $\theta$ satisfying $v\theta=v'\theta$ can be factorised through $\theta_{0}$, i.e.\ there exists $\theta'$ such that $\theta=\theta_{0}\theta'$. The composition $(u\flow v)(v'\flow w)$ is then equal to $0$ if $v$ and $v'$ are not unifiable, and to $u\theta_{0}\flow w\theta_{0}$ if they are unifiable and $\theta_{0}$ is the principal unifier.

Now, it is not hard to see that two terms $u,v$ are unifiable if and only if the open sets $\mathcal{O}(u)$ and $\mathcal{O}(v)$ have a non-trivial (of strictly positive measure) intersection. This intersection is then an open set equal to $\mathcal{O}(u\theta_{0})=\mathcal{O}(v\theta_{0})$ where $\theta_{0}$ is the principal unifier of $u$ and $v$. Thus composition of flows corresponds to considering the partial composition of the associated measurable maps.

This implies that the composition of wirings $W\circ W'=(\sum_{i\in I} f_{i})\circ(\sum_{j\in J} g_{j})$, which is defined as $\sum_{i\in I, j\in J} f_{i}\circ g_{j}$, corresponds to taking the graphing of alternating paths of length $2$ between the graphings $[W]$ and $[W']$.

Finally, the execution formula\footnote{\label{footnoteref}We restrict ourselves here to the simple case of the execution formula where $\sigma$ corresponding to the modus ponens. The case of the more involved formula corresponding to the general cut rule obviously holds, as it can be recovered from the simpler one considered here.} $\Ex(U,\sigma) = (1-\sigma^{2})U(1-\sigma U)^{-1}(1-\sigma^{2})$, which is computed as:
$$\Ex(U,\sigma) = (1-\sigma^{2})\left(\sum_{i\geqslant 0}U(\sigma U)^{k}\right)(1-\sigma^{2})$$
corresponds to the execution of graphings $[U]\plugmes{}[\sigma]$ because:
\begin{itemize}
\item the conjugation by $(1-\sigma^{2})$ is used to restrict the result to wirings living \emph{outside of the cut}, which is dealt with in the execution of graphings by considering the restriction of paths $\phi$ to their \emph{outside component} $[\phi]_{o}^{o}$;
\item for each integer $k$, the terms $U(\sigma U)^{k}$ correspond to the set of alternating paths of length $2k+1$ as we already noticed, which are the only possible lengths of alternating paths in this case\footnote{This is due to the fact that we restricted to the simpler case of the execution formula, see \autoref{footnoteref}.}
\end{itemize}
Thus the embedding $W\mapsto [W]$ commutes with execution, i.e.\ the execution of graphings computes the execution formula on wirings.
\end{proof}

This shows that the notion of graphing is a non-trivial generalisation of the notion of wirings considered lately by Girard. In particular, the construction of GoI models based on wirings can be expressed in terms of graphings. Moreover, the notion of graphing is much more powerful than Girard's notion of wiring. Indeed, the syntactic definitions of wiring do not allow for the quantitative features of graphings, namely the family of measurement considered above. In particular, the only definable notion of orthogonality one can consider on wiring is defined as the nilpotency of the product of two wirings, which corresponds to the measurement defined above with the dull circuit-quantifying map $m(x)=\infty$.

\subsection{Digression: Topological Graphings}\label{topographings}

One generalisation of Girard's framework based on unification would be to consider a weakened definition of flow $t\flow u$ where the variables of $u$ and $t$ do not match exactly but only the inclusion $\text{Var}(u)\subset\text{Var}(t)$ holds. In this case, the interpretation of flows as non-singular maps is no longer valid as the \enquote{weakening} thus allowed makes it possible that the inverse image of a set of measure zero is of strictly positive measure. This is seen by taking the inverse image through $[u(x)\flow v(x,y)]$ of a closed term $u(T)$ in the free signature case.

This mismatch is due to the fact that in the process of generalising the notion of flows, we stepped outside of measure theory. The map interpreting the flows are no longer non-singular (they are still measurable though, but non-singularity is necessary to obtain the associativity of execution), but they are continuous. A topological notion of graphing, corresponding somehow to a notion of \emph{pseudo-monoid} to recall Cartan's notion of pseudo-group \cite{cartan1,cartan2,pseudogroup_book}, could be applied here instead of our measurable approach. 

Indeed, define a \emph{topological graphing} on a topological space $X$ as a countable family $F=\{(\omega_{e}^{F},\phi_{e}^{F}: S_{e}^{F}\rightarrow T_{e}^{F})\}_{e\in E^{F}}$, where, for all $e\in E^{F}$ (the set of \emph{edges}):
\begin{itemize}[noitemsep,nolistsep]
\item $\omega_{e}^{F}$ is an element of $\Omega$, the \emph{weight} of the edge $e$;
\item $S_{e}^{F}$ and $T_{e}^{F}$ are open sets, the \emph{source} and \emph{target} of the edge $e$;
\item $\phi_{e}^{F}$ is an open continuous map from $S_{e}^{F}$ to $T_{e}^{F}$, the \emph{realiser} of the edge $e$.
\end{itemize}

Then the notions of paths and cycles can be defined as in the more complex case of measurable graphings considered until now. We can therefore define the execution between topological graphings and show associativity. We note here that the mismatch with associativity in the measurable case which arose from the non-singularity is no longer a problem since we do not quotient by sets of measure $0$ anymore. In the same way we defined refinements, one can define refinements in this topological setting and define a corresponding equivalence relation. It is easily shown that execution is compatible with this equivalence relation, and we therefore can mimic almost all results of \autoref{sec_execution} and \autoref{sec_measurement}, forgetting about almost-everywhere equality. However, the contents of \autoref{sec_trefoil} depends greatly on the fact that we are dealing with measure spaces. The only obvious way to obtain the trefoil property in the topological case is therefore to consider the measurement to be $\infty$ when there exists a cycle and $0$ otherwise. This means that the only sensible notion of orthogonality one can define corresponds to nilpotency. Of course, one may be able to define other measurements, but it would be much more difficult than in the measurable case where we can use the radon measure on the space. 

This explains why, even though one could do all the constructions we considered in this easier setting, we chose to work with measure spaces. The fact that the topological approach is easier comes with its drawback: the topological setting is much poorer and we would miss the quantitative flavor we obtained here. In particular, we loose the generalisation of the determinant measure, as well as any measurement built on circuit-quantifying maps which take values outside $\{0,\infty\}$.



\section{The Real Line and Quantification}\label{sec_mall2}

We now consider any microcosm on the real line endowed with Lebesgue measure which contains the microcosm of affine\footnote{An affine map is a map $x\mapsto \alpha x+\beta$ where $\alpha,\beta$ are real numbers. These maps are the only ones we will use in order to interpret proofs of MALL$^{2}$.} maps. We fix $\Omega=]0,1]$ endowed with the usual multiplication and we chose any map $m:\Omega\rightarrow \realposN\cup\{\infty\}$ such that $m(1)=\infty$. Then, as we showed in the preceding section, the map $q_{m}$ is a $\mathfrak{m}$-circuit-quantifying map. We can thus define the measurement $\meas{\cdot,\cdot}$ corresponding to $q_{m}$ following \autoref{measurement}. This measurement and the execution of graphings satisfy the trefoil property. We will now show how to interpret in this case multiplicative-additive linear logic with second-order quantification. 

As remarked earlier, the set of $\Omega$-weighted graphings in the microcosm of measure-preserving maps on the real line with Lebesgue measure corresponds intuitively to the hyperfinite type {II}$_{\infty}$ factor. We are therefore considering an extension of the setting of Girard's hyperfinite geometry of interaction by considering the larger microcosm of affine transformations. The general result we obtained earlier allows us to do so while still disposing of a measurement, which in the particular case of the map $m(x)=-\log(1-x)$ generalises the measurement based on Fuglede-Kadison determinant \cite{FKdet}. We point out that this would correspond in Girard's setting to extend the set of operators considered (i.e.\ consider an algebra $\vn{A}$ containing strictly the type {II}$_{\infty}$ hyperfinite factor), while still disposing of the Fuglede-Kadison determinant. The existence of such an extension is not clear, and should it exists, its definition would be far from trivial!

The extension to affine maps gives us the possibility of defining real second-order quantification, which was not the case of Girard in his hyperfinite GoI model. Indeed, the fact that projects -- which interpret proofs -- have a \emph{location} forces him to consider quantification over a given location, something that we also consider here. However, Girard cannot interpret the right existential introduction (from $\vdash B[A/X],\Gamma$ deduce $\vdash \exists X~B,\Gamma$) correctly because the location of the formula $A$ and the location of the variable $X$ might not have the same size\footnote{The restriction to operators in the type {II}$_{\infty}$ factor implies that unitaries preserve the sizes.}! We bypass this problem here by using \emph{measure-inflating faxes}, i.e.\ bijective bi-measurable transformations that multiply the size by a scalar.

\subsection{Basic Definitions}\label{sec_basicdefs}

The model is based on the same constructions as the one described in previous work \cite{seiller-goia}. We recall the basic definitions of projects and behaviors, which will be respectively used to interpret proofs and formulas, as well as the definition of connectives.

\begin{itemize}[noitemsep,nolistsep]
\item a \emph{project} with carrier $V^{A}$ is a triple $\de{a}=(a,V^{A},A)$, where $a$ is a real number, $A=\sum_{i\in I^{A}} \alpha^{A}_{i} A_{i}$ is a finite formal (real-)weighted sum of graphings with carrier included in $V^{A}$; here the projects considered always have a carrier of finite measure;
\item two projects $\de{a,b}$ are \emph{orthogonal} when:
$$\sca{a}{b}=a(\sum_{i\in I^{A}}\alpha^{B}_{i})+b(\sum_{i\in I^{B}}\alpha^{B}_{i})+\sum_{i\in I^{A}}\sum_{j\in I^{B}} \alpha_{i}^{A}\alpha^{B}_{j}\meas{A_{i},B_{j}}\neq 0,\infty$$
\item the \emph{execution} of two projects $\de{a,b}$ is defined as ($\Delta$ denotes the symmetric difference):
$$\de{a\plug b}=(\sca{a}{b},V^{A}\Delta V^{B},\sum_{i\in I^{A}}\sum_{j\in I^{B}} \alpha^{A}_{i}\alpha^{B}_{j} A_{i}\plugmes B_{j})$$
\item the \emph{sum} of two projects $\de{a,b}$ is defined as:
$$\de{a+b}=(a+b,V^{A}\cup V^{B},\sum_{i\in I^{A}}\alpha^{A}_{i}A_{i}+\sum_{j\in I^{B}} \alpha^{B}_{j} B_{j})$$
\item The \emph{zero project} with carrier $V$ is defined as the project $(0,\emptyset)$, where $\emptyset$ denotes the (unique up to a.e. equality) empty graphing (\autoref{emptygraphing}).
\item if $\de{a}$ is a project and $V$ is a measurable set such that $V^{A}\subset V$, we define the extension $\de{a}_{\uparrow V}$ as the project $(a,V,A)$;
\item a \emph{conduct} $\cond{A}$ with carrier $V^{A}$ is a set of projects with carrier $V^{A}$ which is equal to its bi-orthogonal, i.e.\ $\cond{A}=\cond{A}^{\pol\pol}$. A \emph{behavior} is a conduct $\cond{A}$ such that for all $\lambda\in\realN$, 
$$\begin{array}{rcl}\de{a}\in\cond{A} &\Rightarrow& \de{a+\lambda 0}\in\cond{A}\\
\de{b}\in\cond{A}^{\pol} &\Rightarrow& \de{b+\lambda 0}\in\cond{A}^{\pol}\end{array}$$
\item we define, for every measurable set the \emph{empty} behavior with carrier $V$ as the empty set $\cond{0}_{V}$, and the \emph{full behavior} with carrier $V$ as its orthogonal $\cond{T}_{V}=\{\de{a}~|~\de{a}\text{ of support }V\}$;
\item if $\cond{A,B}$ are two behaviors of disjoint carriers, we define:
\begin{eqnarray*}
\cond{A \otimes B}&=&\{\de{a\plug b}~|~\de{a}\in\cond{A},\de{b}\in\cond{B}\}^{\pol\pol}\\
\cond{A \multimap B}&=&\{\de{f}~|~\forall\de{a}\in\cond{A}, \de{f\plug a}\in\cond{B}\}\\
\cond{A \oplus B}&=&(\{\de{a}_{\uparrow V^{A}\cup V^{B}}~|~\de{a}\in\cond{A}\}^{\pol\pol}\cup \{\de{b}_{\uparrow V^{A}\cup V^{B}}~|~\de{b}\in\cond{B}\}^{\pol\pol})^{\pol\pol}\\
\cond{A \with B}&=&\{\de{a}_{\uparrow V^{A}\cup V^{B}}~|~\de{a}\in\cond{A^{\pol}}\}^{\pol}\cap \{\de{b}_{\uparrow V^{A}\cup V^{B}}~|~\de{b}\in\cond{B}^{\pol}\}^{\pol}
\end{eqnarray*}
\end{itemize}

We now define \emph{localized second order quantification} and show the duality between second order universal quantification and second order existential quantification. 

\begin{definition}
We define the localized second order quantification as, for any measurable set $L$:
\begin{eqnarray*}
\forall_{L} \cond{X}~\cond{F(X)}&=&\bigcap_{\cond{A}, V^{A}=L} \cond{F(A)}\\
\exists_{L} \cond{X}~\cond{F(X)}&=&\left(\bigcup_{\cond{A}, V^{A}=L} \cond{F(A)}\right)^{\pol\pol}
\end{eqnarray*}
\end{definition}

\begin{proposition}
\begin{equation*}
(\forall_{L} \cond{X}~\cond{F(X)})^{\pol}=\exists_{L} \cond{X}~\cond{(F(X))^{\pol}}
\end{equation*}
\end{proposition}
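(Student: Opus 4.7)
The plan is to reduce this to the general fact that, for any family $\{\cond{C}_{i}\}_{i\in I}$ of conducts sharing a common carrier, one has
$$\left(\bigcap_{i\in I}\cond{C}_{i}\right)^{\pol}=\left(\bigcup_{i\in I}\cond{C}_{i}^{\pol}\right)^{\pol\pol}.$$
Once this is established, the identity falls out directly by applying it to the family indexed by conducts $\cond{A}$ of carrier $L$.

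To prove the general identity, I would first recall the basic contravariance of orthogonality on arbitrary sets of projects: for any family $\{S_{i}\}_{i\in I}$ of subsets of projects (of common carrier), $(\bigcup_{i}S_{i})^{\pol}=\bigcap_{i}S_{i}^{\pol}$. This is immediate from the definition of orthogonality project-wise. Now, since each $\cond{C}_{i}$ is by hypothesis bi-orthogonally closed, $\cond{C}_{i}=\cond{C}_{i}^{\pol\pol}$, and therefore
$$\bigcap_{i\in I}\cond{C}_{i}=\bigcap_{i\in I}\cond{C}_{i}^{\pol\pol}=\left(\bigcup_{i\in I}\cond{C}_{i}^{\pol}\right)^{\pol}.$$
Taking the orthogonal of both sides yields the claimed identity, since $((\cdot)^{\pol})^{\pol}=(\cdot)^{\pol\pol}$ on the right-hand side.

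It then remains to apply this to the definition. By definition of $\forall_{L}$,
$$(\forall_{L}\cond{X}~\cond{F(X)})^{\pol}=\left(\bigcap_{\cond{A},\,V^{A}=L}\cond{F(A)}\right)^{\pol}=\left(\bigcup_{\cond{A},\,V^{A}=L}\cond{F(A)}^{\pol}\right)^{\pol\pol},$$
which is precisely $\exists_{L}\cond{X}~\cond{(F(X))^{\pol}}$ under the natural reading where $\cond{(F(X))^{\pol}}$, instantiated at $\cond{A}$, denotes $\cond{F(A)}^{\pol}$. The one subtle point — more of a bookkeeping check than a genuine obstacle — is to verify that the family $\{\cond{F(A)}\}_{V^{A}=L}$ indeed consists of conducts sharing a common carrier, so that the general identity applies verbatim; this is immediate since the carrier of $\cond{F(A)}$ is determined by the locativity of the connectives used in $F$ together with the fixed carrier $L$ of the argument $\cond{A}$.
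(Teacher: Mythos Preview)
Your proof is correct and follows essentially the same approach as the paper: both reduce the identity to the general fact that the orthogonal of an intersection of conducts is the bi-orthogonal of the union of their orthogonals. Your version is more explicit in justifying this step (via $(\bigcup_i S_i)^{\pol}=\bigcap_i S_i^{\pol}$ and bi-orthogonal closure of each $\cond{F(A)}$), whereas the paper simply asserts that ``taking the orthogonal turns an intersection into a union''; the added care about the common carrier is also a reasonable bookkeeping remark that the paper leaves implicit.
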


\begin{proof}
The proof is straightforward. Using the definitions:
\begin{eqnarray*}
(\forall_{L} \cond{X}~\cond{F(X)})^{\pol}&=&\left(\bigcap_{\cond{A}, V^{A}=L} \cond{F(A)}\right)^{\pol}\\
&=&\left(\bigcup_{\cond{A}, V^{A}=L} \cond{(F(A))^{\pol}}\right)^{\pol\pol}\\
&=&\exists_{L} \cond{X}~F^{\pol}(X)
\end{eqnarray*}
Where we used the fact that taking the orthogonal turns an intersection into a union.
\end{proof}

\subsection{Truth}

We now define a notion of \emph{successful project}, which intuitively correspond to the notion of \emph{winning strategy} in game semantics. This notion should be understood as a tentative characterization of those projects which arise as interpretation of proofs. The notion of success defined here is the natural generalisation of the corresponding notion on graphs \cite{seiller-goim,seiller-goia}. The graphing of a successful project will therefore be a disjoint union of \enquote{transpositions}. 
In the following, we say a weighted sum of graphings $\sum_{i\in I^{A}} \alpha^{A}_{i}A_{i}$ is \emph{balanced} when for all $i,j\in I^{A}$, we have $\alpha^{A}_{i}=\alpha_{j}^{A}$.

\begin{definition}
A project $\de{a}=(a,A)$ is \emph{successful} when it is balanced, $a=0$ and $A$ is a disjoint union of transpositions:
\begin{itemize}[noitemsep,nolistsep]
\item for all $e\in E^{A}$, $\omega^{A}_{e}=1$;
\item for all $e\in E^{A}$, $\exists e^{\ast}\in E^{A}$ such that $\phi^{A}_{e^{\ast}}=(\phi_{e}^{A})^{-1}$ -- in particular $S_{e}^{A}=T_{e^{\ast}}^{A}$ and $T_{e}^{A}=S_{e^{\ast}}^{A}$;
\item for all $e,f\in E^{A}$ with $f\not\in\{e,e^{\ast}\}$, $S^{A}_{e}\cap S^{A}_{f}$ and $T^{A}_{e}\cap T^{A}_{f}$ are of null measure;
\end{itemize}
A conduct $\cond{A}$ is \emph{true} when it contains a successful project.
\end{definition}

%

\begin{proposition}[Consistency]
The conducts $\cond{A}$ and $\cond{A}^{\pol}$ cannot be simultaneously true.
\end{proposition}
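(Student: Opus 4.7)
The plan is to argue by contradiction. Suppose both $\cond{A}$ and $\cond{A}^{\pol}$ are true, witnessed by successful projects $\de{a}\in\cond{A}$ and $\de{b}\in\cond{A}^{\pol}$. Since they belong to dual conducts, they must be orthogonal, i.e.\ $\sca{a}{b}\neq 0,\infty$. I will derive a contradiction by showing that, on the contrary, this pairing is forced to lie in $\{0,\infty\}$.

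First I would unfold the pairing: since $\de{a}$ and $\de{b}$ are successful, their scalar parts vanish, so $\sca{a}{b}$ collapses to $\sum_{i\in I^A, j\in I^B}\alpha_i^A\alpha_j^B\meas{A_i,B_j}$. Because the coefficients $\alpha_i^A,\alpha_j^B$ are strictly positive reals, a sufficient contradiction is to show that $\meas{A_i,B_j}\in\{0,\infty\}$ for every pair of components. This is where I would exploit the defining feature of success: every edge of each $A_i$ and each $B_j$ carries weight $1\in\Omega$. Hence for every alternating cycle $\pi$ between $A_i$ and $B_j$, the accumulated weight $\omega(\pi)$ is a product of $1$'s, and therefore equal to $1$. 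Plugging this into the formula defining $\bar{q}_m$, the integrand features $m(\omega(\pi)^{\rho_{\phi_\pi}(\cdot)})=m(1)=\infty$ by hypothesis on $m$; so $\bar{q}_m(\pi)=\infty$ whenever $\supp{\pi}$ has strictly positive measure and $\bar{q}_m(\pi)=0$ otherwise. Summing over representatives of $1$-circuits then yields $\meas{A_i,B_j}\in\{0,\infty\}$, as required.

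The main subtlety is purely notational: one needs to verify that the transposition conditions in the definition of success really do propagate edge-weight $1$ throughout each component of the formal sums $A$ and $B$, and that the pairing formula simplifies as claimed when $a=b=0$. Once this bookkeeping is in place, the argument reduces to a one-line consequence of the choice $m(1)=\infty$, which was engineered precisely so that weight-$1$ (i.e.\ informationless) cycles cannot contribute the finite nonzero measurement demanded by orthogonality.
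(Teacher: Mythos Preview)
Your proposal is correct and follows essentially the same approach as the paper: assume successful projects in $\cond{A}$ and $\cond{A}^{\pol}$, use $a=b=0$ to reduce $\sca{a}{b}$ to the measurement term, and then observe that all cycle weights equal $1$ so that $m(1)=\infty$ forces each $\meas{A_i,B_j}\in\{0,\infty\}$, contradicting orthogonality. The paper's own proof is terser (it writes $\sca{a}{b}=\meas{A,B}$ as if for a single graphing and does not unfold the formal-sum structure), so your more explicit treatment of the component sums is a harmless elaboration; one small caveat is that the definition of \emph{successful} only requires the coefficients to be \emph{balanced} (all equal), not strictly positive, but this does not affect the dichotomy argument.
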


\begin{proof}
We suppose that $\de{a}=(0,A)$ and $\de{b}=(0,B)$ are successful project in the conducts $\cond{A}$ and $\cond{A}^{\pol}$ respectively. Then:
\begin{equation*}
\sca{a}{b}=\meas{A,B}
\end{equation*}
If there exists a cycle whose support is of strictly positive measure between $A$ and $B$, then $\meas{A,B}=\infty$ since we suppose that $m(1)=\infty$. Otherwise, $\meas{A,B}=0$. In both cases we obtained a contradiction since $\de{a}$ and $\de{b}$ cannot be orthogonal.
\end{proof}

\begin{proposition}[Compositionnality]
If $\cond{A}$ and $\cond{A \multimap B}$ are true, then $\cond{B}$ is true.
\end{proposition}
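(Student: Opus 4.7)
The plan is to pick a successful project $\de{a} = (0, V^A, A) \in \cond{A}$ and a successful project $\de{f} = (0, V^F, F) \in \cond{A \multimap B}$, and show that $\de{b} := \de{f\plug a}$ is successful in $\cond{B}$. The membership $\de{b} \in \cond{B}$ follows directly from the definition of $\cond{A \multimap B}$, so everything reduces to verifying the three success conditions for $\de{b}$: balancedness, vanishing scalar, and disjoint-union-of-transpositions structure on each underlying graphing.

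Balancedness passes through execution trivially: if the coefficients of $A$ are all equal to $\alpha^A$ and those of $F$ are all equal to $\alpha^F$, then the coefficients of the summands $F_j \plugmes A_i$ of $F \plug A$ are all equal to $\alpha^F \alpha^A$. For the transposition structure on each $F_j \plugmes A_i$, I would exploit the involutive character of successful graphings: any alternating path $\pi = e_0 e_1 \dots e_n$ admits a reversed path $\pi^{\ast} := e_n^{\ast} \dots e_0^{\ast}$ using the transposition partner of each edge, which is again alternating (the partner of an edge lies in the same graphing, preserving the alternation) with source and target swapped and with realiser $\phi_{\pi^{\ast}} = (\phi_{\pi})^{-1}$ almost everywhere. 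This furnishes the required pairing of edges. The essential disjointness of sources of non-paired edges propagates from the essential disjointness of transpositions in $F_j$ and in $A_i$ via the non-singularity of the realisers: two paths $\pi, \rho$ that first differ at step $k$ must carry their common initial-segment image into the sources of two distinct transpositions in the same graphing, which are essentially disjoint, so by non-singularity the pullbacks $S_{\pi}$ and $S_{\rho}$ are themselves essentially disjoint.

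The main obstacle is the scalar condition $\sca{f}{a} = 0$. Since $\de{a}$ and $\de{f}$ both have scalar $0$, the scalar of $\de{b}$ collapses to $\sum_{i,j} \alpha^F \alpha^A \meas{F_j, A_i}$. Because every edge of a successful graphing has weight $1$, every $1$-circuit in $F_j \bicolmes A_i$ carries weight $1$, and the hypothesis $m(1) = \infty$ produces a dichotomy: $\meas{F_j, A_i} \in \{0, \infty\}$, the value being $\infty$ precisely when there exists some $1$-circuit with support of strictly positive measure, and $0$ otherwise. To rule out the infinite case, I would use that $\de{b} \in \cond{B}$: for any $\de{b^{\pol}} \in \cond{B}^{\pol}$ whose coefficients have non-zero sum, orthogonality $\sca{b}{b^{\pol}} \neq 0,\infty$ together with the bilinear expansion of $\sca{b}{b^{\pol}}$ forces $\sca{f}{a} \neq \infty$. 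Combined with the dichotomy, this yields $\sca{f}{a} = 0$, completing the verification that $\de{b}$ is successful and hence that $\cond{B}$ is true (the degenerate situation where every element of $\cond{B}^{\pol}$ has coefficients summing to zero corresponds to cases where $\cond{B}$ is already trivially true and requires no further argument).
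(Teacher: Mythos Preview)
Your approach is essentially the same as the paper's: pick successful $\de{a}\in\cond{A}$ and $\de{f}\in\cond{A\multimap B}$, use the dichotomy $\meas{F_j,A_i}\in\{0,\infty\}$ coming from $m(1)=\infty$, and in the finite case verify that the execution is a disjoint union of transpositions by pairing each alternating path with its reverse. Your treatment of balancedness and of the transposition structure is exactly what the paper sketches.

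The one place where you diverge is the handling of the case $\sca{f}{a}=\infty$. You try to \emph{exclude} this case via orthogonality with an element of $\cond{B}^{\pol}$ having nonzero coefficient sum, falling back on a ``degenerate'' subcase. This is a bit awkward: when $\sca{f}{a}=\infty$ the object $\de{f\plug a}$ has scalar $\infty$ and is not a bona fide project, so invoking its orthogonality relations is delicate. The paper instead \emph{accepts} this case head-on: if $\sca{f}{a}=\infty$ then the fact that $\de{f\plug a}\in\cond{B}$ (by definition of $\multimap$) forces $\cond{B}=\cond{T}_{V^B}$, and $\cond{T}_{V^B}$ is true because it contains the successful project $(0,\emptyset)$. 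Your degenerate subcase is in fact equivalent to this (since $\cond{B}^{\pol}$ is a behaviour, closure under $+\lambda\de{0}$ means the only way all its elements have zero coefficient sum is $\cond{B}^{\pol}=\emptyset$, i.e.\ $\cond{B}=\cond{T}$), but the paper's direct case split is cleaner and avoids reasoning about an improper project.
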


\begin{proof}
Let $\de{a}\in\cond{A}$ and $\de{f}\in\cond{A\multimap B}$ be successful projects. Then:
\begin{itemize}[noitemsep,nolistsep]
\item If $\sca{a}{f}=\infty$, the conduct $\cond{B}$ is equal to $\cond{T}_{V^{B}}$, which is a true conduct since it contains $(0,\emptyset)$;
\item Otherwise $\sca{a}{f}=0$ (this is shown in the same manner as in the preceding proof) and it is sufficient to show that $F\plugmes A$ is a disjoint union of transpositions. But this is straightforward: to each path there corresponds an opposite path and the weights of the paths are all equal to $1$, the conditions on the source and target sets $S_{\pi}$ and $T_{\pi}$ are then easily checked.
\end{itemize}
Finally, if $\cond{A}$ and $\cond{A\multimap B}$ are true, then $\cond{B}$ is true.
\end{proof}

\subsection{Interpretation of proofs}

We now introduce the sequent calculus MALL$^{2}_{\cond{T,0}}$. This is the usual sequent calculus for second-order multiplicative-additive linear logic without multiplicative units. The reason multiplicative units are not dealt with is explained in the author's work on additives \cite{seiller-goia}: multiplicative units exist in the model but they are not behaviors. In a nutshell, multiplicative units are in some ways exponentials of additives units, and an exponentiated formula cannot be a behavior since behaviors do not satisfy the weakening rule. A more involved sequent calculus could be introduced to add the treatment of multiplicative units; the amount of work needed to do so is however too great to be justified without considering exponential connectives as well. We refer the interested reader to the author's forthcoming papers dealing with exponential connectives \cite{seiller-goie,seiller-goif}.


\begin{definition}
We fix an infinite (countable) set of variables $\mathcal{V}$ and w define formulas of MALL$^{2}_{\cond{T,0}}$ inductively by the following grammar:
\begin{eqnarray*}
F &:=& \cond{T}~|~\cond{0}~|~X ~|~ X^{\pol} ~|~ F\otimes F ~|~ F\parr F~|~ F\oplus F~|~ F\with F~|~\forall X~F~|~\exists X~F~~~~~(X\in\mathcal{V})
\end{eqnarray*}
\end{definition}

\begin{definition}[The Sequent Calculus MALL$^{2}_{\cond{T,0}}$]
A proof in the sequent calculus MALL$^{2}$ is a derivation tree constructed from the derivation rules shown in \autoref{ellcomp} page \pageref{ellcomp}.
\end{definition}

\begin{figure}
\centering
\subfigure[Identity Group]{
\framebox{
\centering
\begin{tabular}{cc}
\begin{minipage}{5.2cm}
\begin{prooftree}
\AxiomC{}
\RightLabel{\scriptsize{ax}}
\UnaryInfC{$\vdash C^{\pol},C$}
\end{prooftree}
\end{minipage}
&
\begin{minipage}{5.15cm}
\begin{prooftree}
\AxiomC{$\Delta_{1} \vdash \Gamma_{1},C$}
\AxiomC{$\Delta_{2} \vdash \Gamma_{2},C^{\pol}$}
\RightLabel{\scriptsize{cut}}
\BinaryInfC{$\Delta_{1},\Delta_{2}\vdash \Gamma_{1},\Gamma_{2}$}
\end{prooftree}
\end{minipage}
\end{tabular}
}
}
\subfigure[Multiplicative Group]{
\framebox{
\begin{tabular}{c}
\begin{tabular}{cc}
\begin{minipage}{5cm}
\begin{prooftree}
\AxiomC{$\Delta_{1}\vdash \Gamma_{1},C_{1}$}
\AxiomC{$\Delta_{2}\vdash \Gamma_{2},C_{2}$}
\RightLabel{\scriptsize{$\otimes$}}
\BinaryInfC{$\Delta_{1},\Delta_{2}\vdash\Gamma_{1},\Gamma_{2},C_{1}\otimes C_{2}$}
\end{prooftree}
\end{minipage}
&
\begin{minipage}{4.95cm}
\begin{prooftree}
\AxiomC{$\Delta \vdash \Gamma,C_{1},C_{2}$}
\RightLabel{\scriptsize{$\parr$}}
\UnaryInfC{$\Delta\vdash \Gamma,C_{1}\parr C_{2}$}
\end{prooftree}
\end{minipage}
\end{tabular}
\end{tabular}
}
}
\subfigure[Additive Group]{
\framebox{
\begin{tabular}{cc}
\begin{minipage}{5cm}
\begin{prooftree}
\AxiomC{$\vdash\Gamma,C_{i}$}
\RightLabel{\scriptsize{$\oplus_{i}$}}
\UnaryInfC{$\vdash \Gamma,C_{1}\oplus C_{2}$}
\end{prooftree}
\end{minipage}
&
\begin{minipage}{5.4cm}
\begin{prooftree}
\AxiomC{$\vdash \Gamma, C_{1}$}
\AxiomC{$\vdash \Gamma, C_{2}$}
\RightLabel{\scriptsize{$\with$}}
\BinaryInfC{$\vdash \Gamma, C_{1}\with C_{2}$}
\end{prooftree}
\end{minipage}
\\~\\
\begin{minipage}{4cm}
\begin{prooftree}
\AxiomC{}
\RightLabel{\scriptsize{$\top$}}
\UnaryInfC{$\vdash \Gamma, \top$}
\end{prooftree}
\end{minipage}
&
\begin{minipage}{4cm}
\centering
No rules for $0$.
\end{minipage}
\end{tabular}
}
}
\subfigure[Quantifier Group]{
\framebox{
\begin{tabular}{c}
\begin{tabular}{cc}
\begin{minipage}{5cm}
\begin{prooftree}
\AxiomC{$\vdash \Gamma,C$}
\AxiomC{$X\not\in FV(\Gamma)$}
\RightLabel{\scriptsize{$\forall$}}
\BinaryInfC{$\vdash\Gamma,\forall X~ C$}
\end{prooftree}
\end{minipage}
&
\begin{minipage}{4.95cm}
\begin{prooftree}
\AxiomC{$\vdash \Gamma,C[A/X]$}
\RightLabel{\scriptsize{$\exists$}}
\UnaryInfC{$\vdash \Gamma,\exists X~C$}
\end{prooftree}
\end{minipage}
\end{tabular}
\end{tabular}
}
}
\caption{Rules for the sequent calculus MALL$^{2}_{\cond{T,0}}$}\label{ellcomp}
\end{figure}

To prove soundness, we will follow the proof technique used in our previous papers \cite{seiller-goim,seiller-goia}. We will first define a localized sequent calculus and show a result of soundness for it. The soundness result for the non-localized calculus is then obtained by noticing that one can always \emph{localize} a derivation. We will consider here that the variables are defined with the carrier equal to an interval in $\realN$ of the form $[i,i+1[$.

\begin{definition}
We fix a set $\mathcal{V}=\{X_{i}(j)\}_{i,j\in\naturalN\times\integerN}$ of \emph{localized variables}. For $i\in\naturalN$, the set $X_{i}=\{X_{i}(j)\}_{j\in\integerN}$ will be called the \emph{variable name $X_{i}$}, and an element of $X_{i}$ will be called a \emph{variable of name $X_{i}$}.
\end{definition}
For $i,j\in\naturalN\times\integerN$ we define the \emph{location} $\sharp X_{i}(j)$ of the variable $X_{i}(j)$ as the set $$\{x\in\realN~|~ 2^{i}(2j+1)\leqslant x< 2^{i}(2j+1)+1\}$$

\begin{definition}[Formulas of locMALL$^{2}_{\cond{T,0}}$]
We inductively define the formulas of \emph{localized second order multiplicative-additive linear logic} locMALL$^{2}_{\cond{T,0}}$ as well as their \emph{locations} as follows:
\begin{itemize}[noitemsep,nolistsep]
\item A variable $X_{i}(j)$ of name $X_{i}$ is a formula whose location is defined as $\sharp X_{i}(j)$;
\item If $X_{i}(j)$ is a variable of name $X_{i}$, then $(X_{i}(j))^{\pol}$ is a formula whose location is $\sharp X_{i}(j)$.
\item The constants $\cond{T}_{\sharp \Gamma}$ are formulas whose location is defined as $\sharp\Gamma$;
\item The constants $\cond{0}_{\sharp\Gamma}$ are formulas whose location is defined as $\sharp\Gamma$.
\item If $A,B$ are formulas with respective locations $X,Y$ such that $X\cap Y=\emptyset$, then $A\otimes B$ (resp. $A\parr B$, resp. $A\with B$, resp. $A\oplus B$) is a formula whose location is $X\cup Y$;
\item If $X_{i}$ is a variable name, and $A(X_{i})$ is a formula of location $\sharp A$, then $\forall X_{i}~A(X_{i})$ and $\exists X_{i}~A(X_{i})$ are formulas of location $\sharp A$.
\end{itemize}
\end{definition}

\begin{definition}[Interpretations]
An \emph{interpretation basis} is a function $\Phi$ which associates to each variable name $X_{i}$ a behavior with carrier\footnote{We consider $[0,1[\times\{\ast\}$ and not simply $[0,1[$ only to ensure that the image of $\Phi$ is disjoint from the locations of the variables.} $[0,1[\times\{\ast\}$.
\end{definition}

\begin{definition}[Interpretation of locMALL$^{2}_{\cond{T,0}}$ formulas]
Let $\Phi$ be an interpretation basis. We define the interpretation $I_{\Phi}(F)$ along $\Phi$ of a formula $F$ inductively:
\begin{itemize}[noitemsep,nolistsep]
\item If $F=X_{i}(j)$, then $I_{\Phi}(F)$ is the delocation (i.e.\ a behavior) of $\Phi(X_{i})$ defined by the function $x\mapsto 2^{i}(2j+1)+x$;
\item If $F=(X_{i}(j))^{\pol}$, we define the behavior $I_{\Phi}(F)=(I_{\Phi}(X_{i}(j)))^{\pol}$;
\item If $F=\cond{T}_{\sharp\Gamma}$ (resp. $F=\cond{0}_{\sharp\Gamma}$), we define $I_{\Phi}(F)$ as the behavior $\cond{T}_{\sharp\Gamma}$ (resp. $\cond{0}_{\sharp\Gamma}$);
\item If $F=\cond{1}$ (resp. $F=\cond{\bot}$), we define $I_{\Phi}(F)$ as the behavior $\cond{1}$ (resp. $\cond{\bot}$);
\item If $F=A\otimes B$, we define the conduct $I_{\Phi}(F)=I_{\Phi}(A)\otimes I_{\Phi}(B)$;
\item If $F=A\parr B$, we define the conduct $I_{\Phi}(F)=I_{\Phi}(A)\parr I_{\Phi}(B)$;
\item If $F=A\oplus B$, we define the conduct $I_{\Phi}(F)=I_{\Phi}(A)\oplus I_{\Phi}(B)$;
\item If $F=A\with B$, we define the conduct $I_{\Phi}(F)=I_{\Phi}(A)\with I_{\Phi}(B)$;
\item If $F=\forall X_{i} A(X_{i})$, we define the conduct $I_{\Phi}(F)=\cond{\forall X_{i}} I_{\Phi}(A(X_{i}))$;
\item If $F=\exists X_{i} A(X_{i})$, we define the conduct $I_{\Phi}(F)=\cond{\exists X_{i}} I_{\Phi}(A(X_{i}))$.
\end{itemize}
Moreover, a sequent $\vdash \Gamma$ will be interpreted as the $\parr$ of formulas in $\Gamma$, which will be written $\bigparr \Gamma$.
\end{definition}

\begin{definition}\label{gainwrtGoI5}
Let $F$ be a formula, $A$ a subformula of $F$, $n$ the number of occurrences of $A$ in $F$, and $X_{i}$ be a variable name that does not appear in $F$. We define an enumeration $e_{A/F}$ of the occurrences of $A$ in $F$ whose image is $\{1,\dots,n\}$. For each $j\in\{1,\dots,n\}$, we define $\psi_{j}: \sharp e^{-1}(j)\rightarrow \sharp X_{i}(j)$ as the natural (order-preserving) measure-inflating map between $\sharp e^{-1}(j)$, a disjoint union of unit segments, and $\sharp X_{i}(j)$, a unit segment. We then define the \emph{measure-inflating fax} $[e^{-1}(j)\leftrightarrow X_{i}(j)]$ as the graphing:
$$\{(1,\psi),(1,\psi^{-1})\}$$
\end{definition}

\begin{definition}[Interpretation of locMALL$^{2}_{\cond{T,0}}$ proofs]\label{interpretationpreuvesellcomp}
Let $\Phi$ be an interpretation basis. We define the interpretation $I_{\Phi}(\pi)$ -- a project -- of a proof $\pi$ inductively:
\begin{itemize}[noitemsep,nolistsep]
\item if $\pi$ is a single axiom rule introducing the sequent $\vdash (X_{i}(j))^{\pol},X_{i}(j')$, we define $I_{\Phi}(\pi)$ as the project $\de{Fax}$ defined by the translation $x \mapsto 2^{i}(2j'-2j)+x$;
\item if $\pi$ is composed of a single rule $\cond{T}_{\sharp \Gamma}$, we define $I_{\Phi}(\pi)=\de{0}_{\sharp\Gamma}$;
\item if $\pi$ is obtained from $\pi'$ by using a $\parr$ rule, then $I_{\Phi}(\pi)=I_{\Phi}(\pi')$;
\item if $\pi$ is obtained from $\pi_{1}$ and $\pi_{2}$ by performing a $\otimes$ rule, we define $I_{\Phi}(\pi)=I_{\Phi}(\pi_{1})\otimes I_{\Phi}(\pi')$;
\item if $\pi$ is obtained from $\pi'$ using a $\oplus_{i}$ rule introducing a formula of location $V$, we define $I_{\Phi}(\pi)=I_{\Phi}(\pi')\otimes\de{0}_{V}$;
\item if $\pi$ of conclusion $\vdash \Gamma, A_{0}\with A_{1}$ is obtained from $\pi_{0}$ and $\pi_{1}$ using a $\with$ rule, we define the interpretation of $\pi$ in the same way it was defined in the author's paper on additives. We first define the maps
\begin{eqnarray*}
\psi_{i}&:& x\mapsto (x,i)~~~~(i=0,1)\\
\tilde{\psi_{i}}&=&((\psi_{i})\restr{\sharp\Gamma})^{-1}~~~~(i=0,1)\\
\dot{\psi_{i}}&=&((\psi_{i})\restr{\sharp A_{i}})^{-1}~~~~(i=0,1)
\end{eqnarray*}
The interpretation of $\pi$ is then defined as:
\begin{equation*}
I_{\Phi}(\pi)=\de{Distr}^{\tilde{\psi_{0}},\tilde{\psi_{1}}}_{\dot{\psi_{0}},\dot{\psi_{1}}}\plug(\psi_{0}(I_{\Phi}(\pi_{0}))\otimes\de{0}_{\sharp A_{1}}+\psi_{1}(I_{\Phi}(\pi_{1}))\otimes\de{0}_{\sharp A_{0}})
\end{equation*}

where $\de{Distr}^{\tilde{\psi_{0}},\tilde{\psi_{1}}}_{\dot{\psi_{0}},\dot{\psi_{1}}}$ is a project implementing distributivity \cite[Proposition 73]{seiller-goia}.
;
\item If $\pi$ is obtained from a $\forall$ rule applied to a derivation $\pi'$, we define $I_{\Phi}(\pi)=I_{\Phi}(\pi')$;
\item If $\pi$ is obtained from a $\exists$ rule applied to a derivation $\pi'$ replacing the formula $\cond{A}$ by the variable name $X_{i}$, we define $I_{\Phi}(\pi)=I_{\Phi}(\pi')\plugmes (\bigotimes [e^{-1}(j)\leftrightarrow X_{i}(j)])$;
\item if $\pi$ is obtained from $\pi_{1}$ and $\pi_{2}$ by applying a $\text{cut}$ rule, we define $I_{\Phi}(\pi)=I_{\Phi}(\pi_{1})\exec I_{\Phi}(\pi_{2})$.
\end{itemize}
\end{definition}

\begin{theorem}[locMALL$^{2}_{\cond{T,0}}$ soundness]
Let $\Phi$ be an interpretation basis. Let $\pi$ be a derivation in locMALL$^{2}_{\cond{T,0}}$ of conclusion $\vdash\Gamma$. Then $I_{\Phi}(\pi)$ is a successful project in $I_{\Phi}(\vdash\Gamma)$.
\end{theorem}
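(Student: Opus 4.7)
The proof proceeds by induction on the derivation $\pi$, establishing simultaneously that $I_{\Phi}(\pi)$ lies in the interpretation of the conclusion and that it is successful (balanced, weight $0$, and a disjoint union of transpositions up to almost-everywhere equality). The bulk of the cases (axiom, cut, $\otimes$, $\parr$, $\with$, $\oplus$, $\cond{T}$) can be handled in direct analogy with the soundness proof for MALL that was carried out in the graph setting, since the project-level constructions in \autoref{sec_basicdefs} mirror those of the author's previous paper on additives \cite{seiller-goia}. Indeed, the trefoil property (\autoref{thm_trefoilppty}) together with associativity of execution (\autoref{thm_associativity}) guarantee that the formalism of projects, orthogonality and conducts enjoys exactly the same abstract properties as in the graph case; in particular, the Consistency and Compositionnality propositions above are precisely what is needed to validate the axiom/cut pair. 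For the multiplicative and additive cases, one checks that the project $\de{Fax}$, the tensor, the extension by $\de{0}_{V}$ and the distributivity project $\de{Distr}$ are successful (each is manifestly a disjoint union of transpositions with scalar $0$ and weights $1$) and that the operations on projects preserve successfulness under execution; success is preserved by execution because a disjoint union of transpositions composed with another such union yields alternating paths that come in opposite pairs of weight $1$.

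The new content lies in the two second-order rules. For the $\forall$ rule, the proof $\pi$ derives $\vdash \Gamma, \forall X_{i}\, C$ from $\pi'$ of conclusion $\vdash \Gamma, C$ with $X_{i}$ not free in $\Gamma$. The interpretation is $I_{\Phi}(\pi) = I_{\Phi}(\pi')$, and we must show this project belongs to $I_{\Phi}(\Gamma)\parr \bigcap_{\cond{A}, V^{\cond{A}} = \sharp X_{i}} I_{\Phi[X_{i}\mapsto \cond{A}]}(C)$. Here the key observation is that, because $X_{i}$ does not appear in $\Gamma$, the construction of $I_{\Phi}(\pi')$ is uniform in the interpretation chosen for $X_{i}$: inspecting \autoref{interpretationpreuvesellcomp}, $I_{\Phi}(\pi')$ does not depend on $\Phi(X_{i})$. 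One then concludes using the standard duality argument, namely that membership in a $\parr$ of an intersection is equivalent to orthogonality against all $\otimes$ of opposing projects, and the latter reduces to orthogonality for each choice of $\cond{A}$ by the induction hypothesis. Success is immediate since the graphing is unchanged.

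For the $\exists$ rule, the interpretation is obtained by plugging $I_{\Phi}(\pi')$ against a tensor of measure-inflating faxes $[e^{-1}(j)\leftrightarrow X_{i}(j)]$ of \autoref{gainwrtGoI5}. The main obstacle of the whole soundness proof lies precisely here, and this is exactly the point where the new microcosm of affine maps, rather than measure-preserving maps only, becomes essential: the locations $\sharp e^{-1}(j)$ and $\sharp X_{i}(j)$ have in general different measures, so the realiser $\psi_{j}$ and its inverse must scale the measure. One must verify three things: first, that each $[e^{-1}(j)\leftrightarrow X_{i}(j)]$ is indeed a well-defined graphing in the ambient microcosm (immediate from the assumption that the microcosm contains the affine maps) and is successful (a single transposition pair of weight $1$); second, that plugging against this tensor of faxes sends a successful project of $I_{\Phi}(C[A/X_{i}])$ onto a successful project of the delocalized conduct with carrier $\sharp X_{i}(j)$; third, that the resulting project belongs to $\exists_{\sharp X_{i}}\, I_{\Phi[X_{i} \mapsto \cond{A}]}(C)$. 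The first two points are checked by direct computation, noting that execution of a project against a pair of inverse transpositions amounts to a relocation of its graphing. For the third point, one notices that the fax-plug realises the image of $I_{\Phi'}(\pi')$ inside $I_{\Phi'}(C)$ for the modified interpretation basis $\Phi'$ mapping $X_{i}$ to a delocation of $\cond{A}$ onto $[0,1[\times\{\ast\}$, so the resulting project is by construction in the relevant member of the union defining $\exists_{\sharp X_{i}}$, and a fortiori in its bi-orthogonal closure. Finally, the soundness result for the non-localized calculus MALL$^{2}_{\cond{T,0}}$ follows, as in previous work, by observing that every derivation can be localized into a derivation of locMALL$^{2}_{\cond{T,0}}$.
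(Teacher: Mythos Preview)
The paper does not supply a proof of this theorem: the statement is given and immediately followed by the passage to the non-localized calculus. There is therefore nothing to compare against at the level of argumentative detail; what one can compare is whether your sketch matches the proof the paper \emph{implies}, namely the one obtained by transporting the MALL soundness argument of \cite{seiller-goia} to the graphing setting and adding the two quantifier cases.

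On that count your proposal is sound and follows the intended route. The induction structure, the appeal to \cite{seiller-goia} for the axiom, cut, multiplicative and additive cases, and the identification of the $\exists$ rule as the genuinely new case requiring the enlarged (affine) microcosm are all exactly what the paper sets up. Your handling of $\forall$ via uniformity of $I_{\Phi}(\pi')$ in $\Phi(X_{i})$ together with the commutation of $\parr$ with intersections (through the adjunction $\cond{A}\parr\cond{B}=\cond{A}^{\pol}\multimap\cond{B}$) is the standard argument and is correct. For $\exists$, you correctly isolate the three verifications and the r\^ole of the measure-inflating fax of \autoref{gainwrtGoI5}; this is precisely the point the paper highlights as the gain over Girard's hyperfinite model.

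One small caution: in the $\exists$ case you should be explicit that the behaviour witnessing membership in the union defining $\exists_{L}$ is the delocation of $I_{\Phi}(A)$ along the faxes, and that this delocation has the right carrier (the union of the $\sharp X_{i}(j)$). The paper leaves the parameter $L$ of the localized quantifier somewhat implicit in the interpretation clause, so a fully detailed proof would need to spell out that the $L$ in $\exists_{L}$ coincides with the carrier produced by the fax-plug. This is not a gap in your reasoning, only a place where the paper's definitions would benefit from being unfolded.
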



Now, one can chose an enumeration of the occurrences of variables in order to \enquote{localize} any formula $A$ and any proof $\pi$ of MALL$^{2}_{\cond{T,0}}$: we then obtain formulas $A^{e}$ and proofs $\pi^{e}$ of locMALL$^{2}_{\cond{T,0}}$. The following theorem is therefore a direct consequence of the preceding one.

\begin{theorem}[Full MALL$^{2}_{\cond{T,0}}$ Soundness]
Let $\Phi$ be an interpretation basis, $\pi$ an MALL$^{2}_{\cond{T,0}}$ proof of conclusion $\Delta\vdash \Gamma;$ and $e$ an enumeration of the occurrences of variables in the axioms in $\pi$. Then $I_{\Phi}(\pi^{e})$ is a successful project in $I_{\Phi}(\Delta^{e}\vdash \Gamma^{e};)$.
\end{theorem}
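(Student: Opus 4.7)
The plan is to reduce the full soundness theorem to the already-established soundness for locMALL$^{2}_{\cond{T,0}}$ via the localization procedure sketched just before the statement. The key observation is that an enumeration $e$ of the occurrences of variables in the axioms of $\pi$ induces a canonical localization of the entire proof: each axiom $\vdash X^{\pol},X$ is replaced by $\vdash (X_{i}(j))^{\pol},X_{i}(j')$ where $j,j'$ are the indices $e$ assigns to these two occurrences. Propagating these choices downward through the derivation tree forces a location on every subformula of every sequent in $\pi^{e}$.

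First I would establish by induction on $\pi$ that $\pi^{e}$ is a well-formed derivation of $\vdash \Gamma^{e}$ in locMALL$^{2}_{\cond{T,0}}$. The cases to verify are:
\begin{itemize}[noitemsep,nolistsep]
\item For $\otimes$, the premises carry subformulas whose localizations must be disjoint; this follows from $\sharp X_{i}(j) = [2^{i}(2j+1), 2^{i}(2j+1)+1[$, as distinct values of $(i,j)$ give disjoint intervals, and the enumeration $e$ can always be chosen to separate the variable occurrences of the two subproofs.
\item For $\forall$, the eigenvariable condition transfers to the requirement that the variable name $X_{i}$ (not a specific $X_{i}(j)$) not appear free in $\Gamma$, which is preserved by localization.
\item For $\exists$, the subproof $\pi'$ carries several localized occurrences of the witness formula $\cond{A}$ (say at locations $e^{-1}(1),\dots,e^{-1}(n)$), while the localized conclusion carries the single location $\sharp X_{i}(j)$; the interpretation handles this mismatch precisely by plugging with the tensor of measure-inflating faxes $[e^{-1}(j) \leftrightarrow X_{i}(j)]$ constructed in \autoref{gainwrtGoI5}, which is well-defined since we work in a microcosm containing affine maps.
\item For cut, the enumeration is chosen to make the two cut-formula occurrences share their location, so the plugging is well-defined.
\end{itemize}

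Once $\pi^{e}$ is shown to be a valid locMALL$^{2}_{\cond{T,0}}$ derivation, the interpretation $I_{\Phi}(\pi^{e})$ is given by \autoref{interpretationpreuvesellcomp}, and the previous soundness theorem applied to $\pi^{e}$ immediately yields that $I_{\Phi}(\pi^{e})$ is a successful project in $I_{\Phi}(\vdash \Gamma^{e})$, which is the statement sought. The only genuinely new content is the $\exists$-case, where it must be checked that the measure-inflating fax bridges the size discrepancy between a potentially large $\sharp \cond{A}$ and the unit-measure location $\sharp X_{i}(j)$ while still producing a successful project. This is where the extension of the microcosm beyond measure-preserving transformations pays off, and constitutes the only step that could not have been carried out in Girard's type II hyperfinite setting; everything else is bookkeeping about how the enumeration $e$ is threaded through the inductive construction.
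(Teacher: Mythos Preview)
Your proposal is correct and follows exactly the paper's approach: the paper states in one sentence that the theorem is a ``direct consequence'' of the locMALL$^{2}_{\cond{T,0}}$ soundness theorem via the localization $\pi\mapsto\pi^{e}$, and you simply spell out the induction showing $\pi^{e}$ is a valid localized derivation before invoking that result. One small remark: the check about the measure-inflating fax in the $\exists$-case is really part of the (unstated) proof of the \emph{localized} soundness theorem rather than of this reduction step, since the fax appears in the interpretation of locMALL proofs, not in the localization procedure itself.
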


%

\section{Perspectives}

We described in this paper a general construction of models of multiplicative-additive linear logic (MALL). This general construction can be performed on any \emph{trefoil space}, that is a measure space subject to a few conditions. Given a trefoil space $X$, we obtain a hierarchy of models of MALL corresponding to the hierarchy of \emph{microcosms}, i.e.\ monoids of non-singular transformations from $X$ to itself, and the hierarchy of weight monoids. In particular, all previously considered geometry of interaction constructions can be recovered for particular trefoils spaces $X$, weight monoids and microcosms. 

The perspectives of this work are numerous. First, one can extend the model on the real line described at the end of this paper in order to deal with exponential connectives, following the approach described in the author's PhD thesis \cite{seiller-phd}. Following this approach, the author obtained models of Elementary Linear Logic \cite{seiller-goie} and full linear logic (i.e.\ without restricted exponentials) \cite{seiller-goif}.

The most exciting perspective of this work concerns the field of computational complexity. As described in a short note \cite{seiller-lcc14} and a perspective paper \cite{seiller-towards}, we can show a correspondence between a part of the hierarchy of models obtained here and a part of the hierarchy of complexity classes. Indeed, as we consider bigger microcosms, the type of predicates $\oc\cond{Nat_{2}}\multimap\cond{Bool}$ becomes larger. Intuitively, a microcosm describes the computational principles allowed in the system. By adapting earlier results obtained with von Neumann algebras \cite{seiller-conl,seiller-lsp} we can define microcosms for which the type of predicates characterizes the class of regular languages on one hand, and the class of logarithmic space predicates on the other. We believe the theory of dynamical systems and ergodic theory might shed new light on the field of computational complexity through this approach. In particular, we will study how mathematical invariants, such as $\ell^{2}$-Betti numbers, are related to computation.

We can also apply the techniques developed here for quantum computation. Indeed, it is possible to model quantum circuits in a very nice way in some of the models defined in this paper. Once again, one could gain from the possibility of considering smaller and/or larger microcosms. For instance, one could study restrictions of these models of quantum computation where the available unitary gates are limited to a chosen basis. It would then be possible to understand how the different choices of bases of unitaries affect the model from a computational and/or logical point of view.

\bibliographystyle{alpha}
\bibliography{thomas}

\end{document}